\def\fullversion{1}    
\def\cameraversion{0}    
\def\anonymous{0}      
\newcommand{\SEnote}[1]{{\color{red} \footnotesize(Simona: #1)}}
\newcommand{\JLnote}[1]{{\color{cyan} \footnotesize(Johanna: #1)}}
\newcommand{\JLLnote}[1]{{\color{cyan!70!black} \footnotesize(#1)}}
\newcommand{\LEnote}[1]{{\color{blue} \footnotesize(Lynn: #1)}}
\newcommand{\ALLnote}[1]{{\color{orange} \footnotesize(All: #1)}}
\newcommand{\DeleteAllComments}{
    \renewcommand{\SEnote}[1]{}
    \renewcommand{\JLnote}[1]{}
    \renewcommand{\LEnote}[1]{}
    \renewcommand{\ALLnote}[1]{}
    \renewcommand{\JLLnote}[1]{}
}
\newcommand{\oracleA}{\ensuremath{\mathcal{A}}\xspace}
\newcommand{\F}{\mathbb{F}}
\newcommand{\N}{\mathbb{N}}
\newcommand{\R}{\mathbb{R}}
\newcommand{\xv}{\mathbf{x}}
\newcommand{\yv}{\mathbf{y}}
\newcommand{\cv}{\mathbf{c}}
\newcommand{\vv}{\mathbf{v}}
\newcommand{\wv}{\mathbf{w}}
\newcommand{\Gm}{\mathbf{G}}
\newcommand{\Ss}{\mathcal{S}}
\newcommand{\Cs}{\mathcal{C}}
\newcommand{\Ws}{\mathcal{W}}
\newcommand{\Sf}{\mathsf{S}}
\newcommand{\Uf}{\mathsf{U}}
\newcommand{\Cf}{\mathsf{C}}
\newcommand{\Region}{\mathrm{Region}}
\newcommand{\NNS}{\textsc{NNS}}
\newcommand{\DP}{\textsc{DP}}
\let\oldnl\nl
\newcommand{\nonl}{\renewcommand{\nl}{\let\nl\oldnl}}
\newcommand{\subr}[1]{\textnormal{\texttt{#1}}}
\newcommand{\Input}{\ensuremath{\mathsf{Input}}\xspace}
\DeclareFontFamily{OMX}{MnSymbolE}{}
\DeclareSymbolFont{MnLargeSymbols}{OMX}{MnSymbolE}{m}{n}
\DeclareFontShape{OMX}{MnSymbolE}{m}{n}{
    <-6>  MnSymbolE5
   <6-7>  MnSymbolE6
   <7-8>  MnSymbolE7
   <8-9>  MnSymbolE8
   <9-10> MnSymbolE9
  <10-12> MnSymbolE10
  <12->   MnSymbolE12
}{}
\DeclareFontShape{OMX}{MnSymbolE}{b}{n}{
    <-6>  MnSymbolE-Bold5
   <6-7>  MnSymbolE-Bold6
   <7-8>  MnSymbolE-Bold7
   <8-9>  MnSymbolE-Bold8
   <9-10> MnSymbolE-Bold9
  <10-12> MnSymbolE-Bold10
  <12->   MnSymbolE-Bold12
}{}
\let\llangle\@undefined
\let\rrangle\@undefined
\DeclareMathDelimiter{\llangle}{\mathopen}%
                     {MnLargeSymbols}{'164}{MnLargeSymbols}{'164}
\DeclareMathDelimiter{\rrangle}{\mathclose}%
                     {MnLargeSymbols}{'171}{MnLargeSymbols}{'171}
\title{Quantum Sieving for Code-Based Cryptanalysis \\ and Its Limitations for ISD}
\titlerunning{Quantum Sieving for Code-Based Cryptanalysis and Its Limitations for ISD}
\date{}
\author{
  Lynn Engelberts\inst{1,2} \and
  Simona Etinski\inst{1} \and 
  Johanna Loyer\inst{1} \\ {\small \texttt{firstname.lastname@cwi.nl}} 
} 
\institute{CWI, The Netherlands \and QuSoft, The Netherlands} 
\author{} 
\institute{}
\authorrunning{Lynn Engelberts, Simona Etinski, and Johanna Loyer} 
\begin{document}
\maketitle

\DeleteAllComments 

\begin{abstract}
Sieving using near-neighbor search techniques is a well-known method in lattice-based cryptanalysis, yielding the current best runtime for the \text{shortest vector problem} in both the classical \cite{BDGL16} and quantum \cite{BCSS23} setting. 
Recently, sieving has also become an important tool in code-based cryptanalysis. Specifically, using a sieving subroutine, \cite{EPRINT:GuoJohNgu23,DEEK23} presented a variant of the information-set decoding (ISD) framework, which is commonly used for attacking cryptographically relevant instances of the decoding problem. 
The resulting sieving-based ISD framework yields complexities close to the best-performing classical algorithms for the decoding problem such as \cite{EC:BJMM12, BM18}. It is therefore natural to ask how well quantum versions perform. 
In this work, we introduce the first quantum algorithms for code sieving by designing quantum variants of the aforementioned sieving subroutine. In particular, using quantum-walk techniques, we provide a speed-up over the best known classical algorithm from \cite{DEEK23} and over a variant using Grover's algorithm. Our quantum-walk algorithm exploits the structure of the underlying search problem by adding a layer of locality-sensitive filtering, inspired by the quantum-walk algorithm for lattice sieving from \cite{AC:ChaLoy21}. We complement our asymptotic analysis of the quantum algorithms with numerical results, and observe that our quantum speed-ups for code sieving behave similarly as those observed in lattice sieving. 
In addition, we show that a natural quantum analog of the sieving-based ISD framework does not provide any speed-up over the first presented quantum ISD algorithm \cite{Bernstein10}. Our analysis highlights that the framework should be adapted in order to outperform the state-of-the-art of quantum ISD algorithms \cite{PQCRYPTO:KacTil17, PQCRYPTO:Kirshanova18}. 
\end{abstract}

\paragraph{Keywords.}
Quantum cryptanalysis, 
Quantum walks, 
Near-neighbor search, 
Code sieving, 
Decoding problem, 
Information-set decoding

\section{Introduction}
\label{sec: introduction}
A fundamental problem in code-based cryptography is the decoding problem: given a linear code $\mathcal{C}$, find a codeword $\mathbf{x}_c \in \mathcal{C}$ of (small) fixed weight $w$.\footnote{This problem is called the codeword-finding problem in \cite{DEEK23}, and can be seen as a homogeneous version of the well-known syndrome decoding problem.} The decoding problem is NP-hard in the worst case \cite{BMvT78}. 
More important for cryptographic purposes, it is known that $w$ can be chosen to guarantee the existence of exactly one solution on average, which gives us the so-called \textit{unique decoding regime}. In this regime, the decoding problem is believed to be hard for a random instance of the problem. 
In particular, all known algorithms for attacking a random instance of this problem run in time and memory exponential in the input size. The best generic attacks\footnote{For certain parameter regimes, the so-called statistical decoding attacks (also known as dual attacks in the lattice literature) are more efficient. As we are not particularly interested in a specific parameter regime, we will only consider attacks in the ISD framework.} belong to the so-called \textit{information-set decoding} (ISD) framework. 

The ISD framework was originally introduced in the work of Prange \cite{Pra62}, and further improved using various techniques, including the meet-in-the-middle approach and its generalizations \cite{Ste88, Dum91, SS81}, representation techniques \cite{AC:MayMeuTho11, EC:BJMM12}, near-neighbor search techniques \cite{EC:MayOze15, Car20}, and their combinations. Some of these techniques were adapted to the quantum setting \cite{Bernstein10, PQCRYPTO:KacTil17, PQCRYPTO:Kirshanova18}.
To find the solution to the decoding problem, current ISD algorithms search for many partial solutions (by solving an instance of the decoding problem in a smaller dimension), and then check if any of those yields a solution to the original problem. 

Recently, a new ISD algorithm was proposed in \cite{EPRINT:GuoJohNgu23} and further generalized and improved in \cite{DEEK23}. 
This approach uses a so-called \textit{sieving technique} to find the partial solutions, which is well-known and widely applied in lattice-based cryptanalysis. 
The idea of a sieving algorithm is to start with a list of arbitrary elements and iteratively combine elements from the current list to obtain a new list  of elements that satisfy a certain property which makes them (in some precise mathematical sense) `closer' to being a solution. 
At the end, the algorithm outputs a list of solutions. 
In the code setting, the sieve starts with a list of arbitrary vectors, and in each iteration it combines pairs of vectors that satisfy an additional code constraint, eventually ending up with vectors in the code. 

In each iteration of the sieve, the task of combining list elements to obtain a new list of suitable pairs can be formulated as an instance of a near-neighbor search problem (NNS). \cite{DEEK23} presented several methods for solving this NNS problem, the best of which are based on locality-sensitive filtering (LSF) techniques. 
Using these methods, \cite{DEEK23} obtained a sieving-based ISD algorithm for the decoding problem whose asymptotic runtime is close to that of the best known classical algorithms (such as \cite{EC:BJMM12, BM18}), with an improved time-memory trade-off over the previously known techniques. 

While sieving algorithms are new in code-based cryptanalysis, they are abundant in lattice-based cryptanalysis, and belong to the state-of-the-art of classical \emph{and} quantum algorithms for the shortest-vector problem. 
Classical algorithms for lattice sieving were first introduced by \cite{AKS01} and later improved in \cite{NV08, MV10, BDGL16}, the latter of which introduced the LSF techniques in the Euclidean metric that were adapted in \cite{DEEK23} for the Hamming metric. 
The addition of LSF gave a significant improvement in the runtime of lattice sieving. The first quantum speed-up for lattice sieving was obtained using a Grover-based algorithm \cite{Laa16}, which was further improved using quantum-walk techniques in \cite{AC:ChaLoy21}. This quantum-walk algorithm was further improved in \cite{BCSS23} using reusable-walk techniques.

The various quantum speed-ups in lattice sieving raise a natural question of how well the corresponding quantum techniques perform when adapted to the setting of codes, which is the focus of this work. 
More precisely, in this work we aim to answer the following questions.  
\begin{enumerate}
    \item What is (an upper bound on) the quantum complexity of code sieving?
    \item What is the runtime and memory of the quantum ISD algorithm that results from using our quantum algorithms for code sieving as a subroutine?
\end{enumerate}

\subsection{Main Contributions}

\paragraph{Quantum algorithms for code sieving.} 

We introduce the first quantum algorithms for code sieving by combining the classical sieving algorithm from \cite{DEEK23} with the state-of-the-art techniques from quantum algorithms for lattice sieving \cite{Laa16,AC:ChaLoy21, BCSS23}. 
The sieving algorithm in \cite{DEEK23} appears as a subroutine of their ISD algorithm, and solves a decoding problem with the goal of finding many codewords in a given code. 
This subroutine repeatedly solves a near-neighbor search problem of the following form: given a list $\mathcal{L}$ of $N$ vectors in $\F_2^n$ of (Hamming) weight $w$, find all pairs in $\mathcal{L}$ whose sum is of weight $w$. 
The complexity of the best known algorithms for this NNS problem, and thereby the overall complexity of the sieving-based algorithm, depends on the cost of a subroutine that we call $\subr{FindSolutions}$ which, in short, searches for `solutions' in a structured subset of the list $\mathcal{L}$. Similar to the lattice setting, our quantum algorithms aim to speed up this subroutine. We thus obtain quantum algorithms (and speed-ups) for this NNS problem, which might be of independent interest. 

We present several quantum algorithms for $\subr{FindSolutions}$ and analyze their asymptotic time and memory complexities. 
Our first quantum algorithm is a straightforward application of Grover's algorithm, and serves as a baseline for comparison. 
Our other quantum algorithms are a quantum-walk algorithm and a variant thereof, where we apply the sparsification technique from \cite{AC:ChaLoy21}. 
Both quantum-walk algorithms make use of locality-sensitive filtering (LSF) to obtain a speed-up over Grover in the search for `solutions'. 
The use of LSF inside the quantum walk was first suggested by \cite{AC:ChaLoy21} in the context of lattice sieving, and one of our contributions is to show how to apply this layer of LSF in the context of codes (see \Cref{sec: reduction to a variant of NNS}). Specifically, we introduce a Hamming-metric variant of the `residual vectors' used in \cite{AC:ChaLoy21}, which behave somewhat differently than their Euclidean-metric analogs, but still allow us to apply LSF.  
 
Besides providing an asymptotic analysis of the time and memory complexities of our quantum algorithms, we evaluate their performance numerically. Namely, the complexity of our quantum algorithms depends on some parameters that we  optimize numerically to obtain the smallest runtime (and corresponding memory) for each of the algorithms.   
Our numerical results illustrate that we obtain a quantum speed-up over the classical algorithm from \cite{DEEK23}, and provide a comparison between our different quantum algorithms. 
In addition, we show that the quantum speed-ups we obtain align with the speed-ups observed in the state-of-the-art of lattice cryptanalysis (see Table~\ref{tab:HardestInstances}). The Python code used for the numerical results of this work is available at \url{https://github.com/lynnengelberts/quantum-sieving-for-codes-public}.

\paragraph{Application to quantum ISD.}
The sieving framework from \cite{EPRINT:GuoJohNgu23, DEEK23} was introduced as a subroutine in an ISD algorithm to solve the decoding problem in the unique decoding regime.
Since the resulting sieving-based ISD algorithm is shown to asymptotically perform nearly as good as the classical state-of-the-art for this problem, we focus on the question whether quantum analogs of this algorithm could have similar performance as the quantum state-of-the-art. 
In particular, we consider a natural quantum analog of sieving-based ISD that is obtained by allowing for a \textit{quantum} sieving subroutine and using amplitude amplification. 
We show, using a combination of analysis and numerical optimization, that this quantum analog of sieving-based ISD \textit{cannot} even improve on the first quantum algorithm for the decoding problem from \cite{Bernstein10}. 

More precisely, we observe that the main limiting factor is a lower-bound constraint on the list size that is imposed in each iteration of the sieve, and which results in a lower bound on the time complexity of the resulting quantum ISD algorithm. 
Our results indicate that the sieving-based ISD framework should be adapted if it wants to compete with the best-performing quantum algorithms for the decoding problem.
We suggest two natural attempts to adapt the sieving-based ISD framework, and explain why these do not allow for overcoming the found limitations.  

\bigskip
In the end, our results show that code-based cryptosystems are still resilient to these quantum methods for code sieving inside the standard ISD framework. Given the quantum speed-ups in lattice-based cryptanalysis and the recent introduction of sieving techniques for codes, it was essential to evaluate their impact on the security of code-based schemes.
Our new quantum algorithms for the NNS problem in the Hamming metric have the potential to be used in future algorithms for the decoding problem (for instance within new sieving algorithms that overcome the limitations addressed in this work, or within algorithms with completely different approaches). Identifying specific applications remains an open question for future work. 

\subsection{Outline}
The paper is organized as follows. 
In \Cref{sec: preliminaries}, we introduce our notation and the relevant preliminaries. 
\Cref{sec: sieving for codes} describes the framework for code sieving, as well as near-neighbor search methods using locality-sensitive filtering. It also explains how the complexity of NNS and code sieving depends on the complexity of $\subr{FindSolutions}$.
In \Cref{sec: quantum NNS}, we present our quantum algorithms for $\subr{FindSolutions}$ and their asymptotic complexity. 
\Cref{sec: numerical results} presents numerical results for the asymptotic runtime and memory of the introduced quantum algorithms. 
Finally, in \Cref{sec: sievingISD}, we adapt the classical sieving-based ISD framework from \cite{DEEK23} to the quantum setting, and discuss its limitations.

\subsection{Acknowledgements}
The authors are grateful to Ronald de Wolf, L\'{e}o Ducas, and Elena Kirshanova for useful comments on the manuscript. The authors also thank Jean-Pierre Tillich for discussing this project in its early stage. 
LE was supported by the Dutch National Growth Fund (NGF), as part of the Quantum Delta NL program. 
SE and JL were supported by the ERC Starting Grant 947821 (ARTICULATE).

\section{Preliminaries}
\label{sec: preliminaries}
\subsection{Notation}

\paragraph{Binary finite fields.}
We denote by $\F_2$ the binary finite field, and by $\F_2^n$ the corresponding vector space of dimension $n$. We write $+$ (resp. $\land$) for the bitwise `XOR' (resp. `AND') between two vectors in $\mathbb F_2^n$.

\paragraph{Scalars, vectors, matrices.} 
A scalar is denoted by a non-bold, small letter, a vector is denoted by a bold small letter, and a matrix by a bold capital letter.

\paragraph{Asymptotic notation.}
We use standard Landau notation. 
Namely, we write $f(n) = O(g(n))$ if there exist constants $c, n_0 \geq 0$ such that $f(n) \leq c \cdot g(n)$ for all integers $n \geq n_0$. We write  $f(n) = \Omega(g(n))$ if there exist constants $c, n_0 \geq 0$ such that $f(n) \geq c \cdot g(n)$ for all integers $n \geq n_0$. We write $f(n) = \Theta(g(n))$ if both $f(n) = O(g(n))$ and $f(n) = \Omega(g(n))$. 
We define $\tilde{O}(f(n)) := O\left(f(n) \cdot \mathrm{polylog}(f(n))\right)$, where $\mathrm{polylog}(f(n)) := \log(f(n))^{O(1)}$, and define  $\tilde{\Omega}$, $\tilde{\Theta}$ in a similar way.  
We write $f(n) = o(g(n))$ if for all constants $c > 0$, there exists $n_0 > 0$ such that $0 \leq f(n) < c \cdot g(n)$ for all integers $n \geq n_0$.

\paragraph{Other.} 
The non-negative integers are denoted by $\N$. For any $n, k \in \N$ with $k \leq n$, we define the binomial coefficient by $\binom{n}{k} := \frac{n!}{k!(n-k)!}$. 

\subsection{Hamming Space}\label{sec: prelims Hamming space}

We are interested in computational problems over $\F_2^n$ endowed with the \textit{Hamming metric} for $n \in \N$. In particular, we define the \textit{Hamming weight} $|\cdot|$ of a vector as \[\forall \xv \in \F_2^n, \quad |\xv| := \{i \in \{0,1,\dots,n-1\} \colon x_i \neq 0\}.\] 
(The notation $|\cdot|$ will also be used to define the cardinality of a set.)

\begin{definition}[Hamming sphere]
For any integer $0 \leq w \leq n$, we define the \textit{weight-$w$ (Hamming) sphere} as 
\[\Ss_w^n := \{ \xv \in \F_2^n \colon |\xv| = w\}.\]
\end{definition}
The surface area of a sphere (i.e., the size of the set $\Ss_w^n$) is calculated as $|\Ss_w^n| = {n \choose w}$. 

\begin{definition}[Region] For any $\cv \in \F_2^n$ and integer $0 \leq \alpha \leq |\cv|$, we define 
    \[\Region_\alpha^n(\cv) := \{\xv \in \F_2^n \colon |\xv \land \cv| = \alpha\}.\]
\end{definition}
The surface area of a region is $|\Region_\alpha^n(\cv)| = \sum_{v = \alpha}^n {|\cv| \choose \alpha}  {n - |\cv| \choose v - \alpha}$. 

The intersection of a sphere with a region is called a \textit{(spherical) cap}. 
\begin{definition}[Spherical cap]
For any $\cv \in \F_2^n$ and integers $\alpha, v$ with $0 \leq \alpha \leq v \leq n$ and $\alpha \leq |\cv|$, we define
    $$\Cs_{v, \alpha}^n(\cv) = \Ss_v^n \cap \Region_\alpha^n(\cv) := \{\xv \in \Ss_v^n \colon |\xv \land \cv| = \alpha\}.$$
\end{definition}
The surface area of a spherical cap is $|\Cs_{v,\alpha}^n(\cv)| = |\Ss_v^n \cap \Region_\alpha^n(\cv)| = {|\cv| \choose \alpha}  {n - |\cv| \choose v - \alpha}$. 

Furthermore, the intersection of two caps (on the same sphere) is called a \textit{(spherical) wedge}.
\begin{definition}[Spherical wedge] For any $\xv, \yv \in \F_2^n$ and integers $\alpha, v$ with $0 \leq \alpha \leq v \leq n$ and $\alpha \leq \min(|\xv|,|\yv|)$, we define 
\begin{align*}
    \Ws_{v,\alpha}^n(\xv,\yv) &:= \Cs_{v,\alpha}^n(\xv) \cap \Cs_{v,\alpha}^n(\yv) = \Ss_v^n \cap \Region_\alpha^n(\xv) \cap \Region_\alpha^n(\yv) \\
    &= \{\cv \in \Ss_v^n \colon |\xv \land \cv| = |\yv \land \cv| = \alpha\}.
\end{align*}      
\end{definition}

We are particularly interested in the case where $\xv$ and $\yv$ are of equal Hamming weight. Then the surface area of a spherical wedge is given by the following lemma. 

\begin{lemma}[Surface area of a wedge] \label{lem: surface area wedge}
Let $\xv, \yv \in \Ss_w^n$ and define $w^{\ast} := |\xv \land \yv|$. 
For all integers $\alpha, v$ with $0 \leq \alpha \leq v \leq n$ and $\alpha \leq w$, we have that 
    \begin{align*}
        |\Ws_{v, \alpha}^n(\xv, \yv)| = \sum_{e = \max(0, 2\alpha - v)}^{\min(\alpha, w^{\ast})} {w^{\ast} \choose e} {w - w^{\ast} \choose \alpha - e}^2 {n - 2w + w^{\ast} \choose v - 2\alpha + e}.
    \end{align*}
\end{lemma}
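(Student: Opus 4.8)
The plan is to prove the formula by a direct combinatorial counting argument, partitioning the support of $\F_2^n$ into the natural blocks determined by $\xv$ and $\yv$, and then counting in how many ways a vector $\cv \in \Ss_v^n$ can intersect each block while satisfying the two wedge constraints $|\xv \land \cv| = |\yv \land \cv| = \alpha$. First I would fix the four disjoint regions of coordinates: the set $A := \mathrm{supp}(\xv) \cap \mathrm{supp}(\yv)$ of size $w^\ast$; the set $B := \mathrm{supp}(\xv) \setminus \mathrm{supp}(\yv)$ of size $w - w^\ast$; the set $C := \mathrm{supp}(\yv) \setminus \mathrm{supp}(\xv)$ of size $w - w^\ast$; and the set $D$ of coordinates outside both supports, of size $n - |A| - |B| - |C| = n - 2w + w^\ast$. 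Every coordinate of $\F_2^n$ lies in exactly one of $A, B, C, D$, and $\xv, \yv$ are constant-$1$ on the appropriate blocks and $0$ elsewhere.

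Next I would introduce the summation variable: let $e := |\cv \land \xv \land \yv|$ be the number of $1$'s of $\cv$ that fall in block $A$. Conditioning on $e$, the constraint $|\xv \land \cv| = \alpha$ forces exactly $\alpha - e$ of the $1$'s of $\cv$ to lie in $B$ (since $\mathrm{supp}(\xv) = A \sqcup B$), and symmetrically $|\yv \land \cv| = \alpha$ forces exactly $\alpha - e$ of the $1$'s of $\cv$ to lie in $C$. The remaining $1$'s of $\cv$, of which there are $v - e - 2(\alpha - e) = v - 2\alpha + e$, must all lie in $D$. The number of ways to place these is then a product of independent binomial choices over the four blocks: $\binom{w^\ast}{e}$ ways in $A$, $\binom{w - w^\ast}{\alpha - e}$ ways in $B$, $\binom{w - w^\ast}{\alpha - e}$ ways in $C$ (hence the square), and $\binom{n - 2w + w^\ast}{v - 2\alpha + e}$ ways in $D$. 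Summing over all admissible $e$ gives the claimed expression, modulo pinning down the summation range.

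The one genuinely delicate step — and the main obstacle — is justifying the bounds $\max(0, 2\alpha - v) \le e \le \min(\alpha, w^\ast)$ on the summation index, and checking that outside this range every term vanishes (so that the sum is in fact the full count). The upper bound $e \le w^\ast$ is immediate since $\cv \land \xv \land \yv$ is a subset of $A$; the upper bound $e \le \alpha$ comes from $e \le |\xv \land \cv| = \alpha$; and $e \ge 0$ is trivial. The bound $e \ge 2\alpha - v$ is the subtle one: it says $v - 2\alpha + e \ge 0$, i.e., the number of $1$'s allotted to $D$ cannot be negative, equivalently $e \geq 2\alpha - v$. I would remark that for $e$ outside $[\max(0,2\alpha-v), \min(\alpha,w^\ast)]$ at least one of the four binomial coefficients is conventionally zero (using $\binom{m}{k} = 0$ when $k < 0$ or $k > m$, which is consistent with the definition $\binom{n}{k} = n!/(k!(n-k)!)$ restricted to $0 \le k \le n$ as in the preliminaries), so extending or restricting the range does not change the value; this makes the stated range a convenient but not essential choice. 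Finally, I would note the one implicit non-negativity that should be checked for the formula to be meaningful, namely $\alpha - e \le w - w^\ast$, which again is automatically enforced by the vanishing-binomial convention, and observe that the hypotheses $0 \le \alpha \le v \le n$ and $\alpha \le w$ guarantee the index set is nonempty exactly when a wedge vector exists. Assembling these observations completes the proof.
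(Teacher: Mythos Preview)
Your proposal is correct and follows essentially the same approach as the paper: partition by $e := |\xv \land \yv \land \cv|$, justify the summation bounds (including the key inequality $e \geq 2\alpha - v$ from the weight constraint), and count via the four blocks determined by the supports of $\xv$ and $\yv$. Your write-up is in fact slightly more explicit about the block decomposition than the paper's proof, which simply states the size of each $S_e$ as an observation.
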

Note that $e$ ranges over all possible values of $|\xv \land \yv \land \cv|$. 

\begin{proof}
    (The case $w^{\ast} = \frac{w}{2}$ was already proven in \cite{DEEK23}.) 
    We want to count the number of $\cv \in S_v^n$ such that $|\xv \land \cv| = |\yv \land \cv| = \alpha$. For each such vector $\cv$, it holds that $\max(0, 2\alpha - v) \leq |\xv \land \yv \land \cv| \leq \min(\alpha, w^{\ast})$, where the inequality $2\alpha - v \leq |\xv \land \yv \land \cv| $ follows from the fact that $v = |\cv|$ must be at least $|\xv \land \cv \land \yv| + |\xv \land \cv \land \overline{\yv}| + |\yv \land \cv \land \overline{\xv}| = |\xv \land \yv \land \cv| + 2(\alpha - |\xv \land \yv \land \cv|) = 2\alpha - |\xv \land \yv \land \cv|$ (where we write $\overline{\xv}$ for the bitwise negation of $\xv$). 
    In particular, we have the partition  \begin{align*}
        \Ws_{v, \alpha}^n(\xv, \yv) = \bigsqcup_{e = \max(0, 2\alpha - v)}^{\min(\alpha, w^{\ast})} S_e
    \end{align*}
    where $S_e := \{ \cv \in \Ss_v^n \colon |\xv \land \cv| = |\yv \land \cv| = \alpha, |\xv \land \yv \land \cv| = e\}$
    (note that the sets $S_e$ are clearly disjoint). 
    The claim then follows from the observation that $|S_e| = {w^{\ast} \choose e} {w - w^{\ast} \choose \alpha - e}^2 {n - (2w - w^{\ast}) \choose v - 2\alpha + e}$. 
    \qed 
\end{proof}

\begin{remark}[Surface area only depends on the center weight] \label{rem: surface area only depends on weight}
    Note that the surface area of a cap (resp.\ wedge) only depends on the \textit{weight} of the center $\cv$ (resp.\ on the weight of $\xv, \yv$ and on their overlap).
\end{remark}

\subsection{Linear Codes and Random Codes}\label{sec: prelims linear codes}

The problems and algorithms in this work consider binary linear codes. 

\begin{definition}[Binary linear code]
An $[n,k]$ binary linear code $\Cs$ is defined as a linear subspace of size $2^k$ of $\F_2^n$. Elements of $\Cs$ are called \textit{codewords}. 
\end{definition}

A code can be represented by a \textit{generator matrix}: a full-rank matrix $\Gm \in \F_2^{k \times n}$ whose rows form a basis of the subspace $\Cs$. Conversely, any such matrix defines a code. 

We are primarily interested in \textit{random binary linear codes} (abbr.\ \textit{random codes}), namely codes obtained from sampling a generator matrix uniformly at random from $\F_2^{k \times n}$. These codes can be sampled in time polynomial in $n$.

\subsection{Quantum Computation}\label{sec: quantum prelims}

We say that a quantum algorithm has time complexity $T$ if the circuit describing it uses at most $T$ elementary quantum gates. When $T$ is polynomial in the number of bits needed to write down the input, we say the algorithm is \textit{efficient}. 
In the following, we define $[N] := \{0,\ldots, N-1\}$. 

\subsubsection{QRAM model and complexities.}
Our results hold in the quantum circuit model where QRAM operations are assumed efficiently implementable. We consider separately the quantum random access to classical memory (QRACM) and quantum random access to quantum memory (QRAQM). More precisely, a QRACM operation corresponds to an application of a unitary of the form $O_x \colon \ket{i}\ket{0} \mapsto \ket{i}\ket{x_i}$ for $x = (x_0, \dots, x_{N-1}) \in \{0,1\}^N$ and $i \in [N]$. We define a QRAQM operation to be an application of a unitary of the form $O' \colon  \ket{x_0, \ldots , x_i, \ldots, x_{N-1}}   \ket{i} \ket{b}  \mapsto \ket{x_0, \ldots , b, \ldots, x_{N-1}} \ket{i} \ket{x_i} $ for  $x = (x_0, \ldots, x_{N-1}) \in \{0,1\}^N$, $i \in [N]$, and $b \in \{0,1\}$. 
We emphasize that the latter operation allows to read and to write in superposition, and that it was already used in \cite{Amb07}.

For the memory complexities of our algorithms, we will specify the sizes of the classical and quantum memories, and the sizes of the classical and quantum registers that the respective QRAM operations use. (That is, for an algorithm using the previous QRAM operations, we say it uses QRACM, respectively QRAQM, of size $N$.)  

\subsubsection{Grover's algorithm and amplitude amplification.} We recall Grover's algorithm and a generalization of it, commonly known as amplitude amplification. 

\begin{theorem}[Grover's algorithm] \label{thm: Grover}
    For $N \in \N$, let $f \colon [N] \to \{0,1\}$ be a function and define $t := |\{i \in [N]: f(i) = 1\}|$. 
    There exists a quantum algorithm, called \emph{Grover's algorithm}, that returns $i$ such that $f(i)=1$ with probability at least $2/3$, if such an $i$ exists, using $O(\sqrt{N/t})$ queries to $f$ and $\Tilde{O}(\sqrt{N/t})$ elementary gates. 
    The algorithm uses $\lceil \log_2(N) \rceil$ quantum memory.
\end{theorem}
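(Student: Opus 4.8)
The plan is to implement the textbook Grover iteration, wrapped in the exponential-search procedure of Boyer--Brassard--H\o yer--Tapp to cope with the fact that $t$ is not given in advance, and then bound the resources. Work on $n := \lceil \log_2 N \rceil$ qubits and let $\ket{\psi} := \tfrac{1}{\sqrt{N}}\sum_{i \in [N]} \ket{i}$ be the uniform superposition over $[N]$ (when $N$ is not a power of two it is cleanest to work over $[2^n]$ instead and fold the at most $N$ spurious indices into $\{i : f(i)=0\}$; this changes $N$ by at most a factor $2$, absorbed by the asymptotic notation, at the cost of only $O(\mathrm{poly}(n))$ extra gates). Writing $\ket{G} := t^{-1/2}\sum_{f(i)=1}\ket{i}$ and $\ket{B} := (N-t)^{-1/2}\sum_{f(i)=0}\ket{i}$, one has $\ket{\psi} = \sin\theta\,\ket{G} + \cos\theta\,\ket{B}$ with $\theta \in (0,\pi/2]$ and $\sin\theta = \sqrt{t/N}$.

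First I would analyze the idealized case in which $t$ (hence $\theta$) is known. Define the Grover iterate $\mathcal{G} := D \cdot U_f$, where $U_f : \ket{i} \mapsto (-1)^{f(i)}\ket{i}$ is realized with a single query to $f$ via phase kickback, and $D := 2\ket{\psi}\bra{\psi} - I = H^{\otimes n}(2\ket{0^n}\bra{0^n} - I)H^{\otimes n}$ is the diffusion operator, implementable with $O(n)$ Hadamards plus one $n$-fold-controlled phase flip, i.e.\ $\tilde{O}(1)$ gates. A standard $2$-dimensional computation shows that $\mathcal{G}$ acts on $\Span\{\ket{G},\ket{B}\}$ as a rotation by angle $2\theta$ towards $\ket{G}$, so after $k$ iterations the state is $\sin((2k+1)\theta)\ket{G} + \cos((2k+1)\theta)\ket{B}$. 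Choosing $k := \lfloor \pi/(4\theta) \rfloor$ makes $|(2k+1)\theta - \pi/2| \le \theta$, so a computational-basis measurement returns some $i$ with $f(i)=1$ with probability at least $\cos^2\theta = 1 - t/N$; since $\theta \ge \sin\theta = \sqrt{t/N}$, this uses $k = O(\sqrt{N/t})$ queries and $\tilde{O}(\sqrt{N/t})$ gates.

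The main obstacle is precisely that $t$ is unknown, so the optimal $k$ above cannot be computed; a related corner case is the regime where $t/N$ is not small, in which the fixed-$k$ bound is weak. Both are handled by the BBHT wrapper: pick $k$ uniformly at random from $\{0,1,\dots,\lceil m\rceil - 1\}$, run $\mathcal{G}^k$ on $\ket{\psi}$, measure, and verify the outcome with one extra query to $f$; on failure, grow $m \gets \min(\lambda m,\sqrt{N})$ for a fixed constant $\lambda>1$ and repeat, starting from $m=1$. The key averaging fact is that once $m \gtrsim 1/\sin\theta$ a single random-$k$ run succeeds with probability $\Omega(1)$, uniformly in $\theta$; summing the geometric series of $m$-values then shows that $O(\sqrt{N/t})$ queries and $\tilde{O}(\sqrt{N/t})$ gates suffice to output a good $i$ with probability at least $2/3$ whenever one exists (and a failure after the prescribed budget certifies, with the required confidence, that no solution exists). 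Throughout, the kickback and diffusion operations can be carried out on the same $n = \lceil \log_2 N \rceil$ index qubits up to a $\mathrm{polylog}$-gate overhead, so no additional persistent quantum register is needed, giving the claimed memory bound; and the $2/3$ can be boosted to $1 - 2^{-\Omega(r)}$ by $r$ independent repetitions if desired.
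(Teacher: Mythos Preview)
Your proof is correct and follows the standard textbook argument: set up the two-dimensional invariant subspace, analyze the Grover iterate as a rotation by $2\theta$, then wrap it in the BBHT exponential-search loop to handle unknown $t$. This is precisely the approach of the references the paper cites (\cite{Gro96} for $t=1$, \cite{BBHT98} for the multi-solution extension); the paper itself does not give a proof of this theorem but states it as a known preliminary with those citations, so there is no separate ``paper's proof'' to compare against beyond those sources.

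One small remark on the memory bound: your sketch claims that kickback and diffusion can be carried out on the $n$ index qubits alone ``up to a polylog-gate overhead.'' Strictly speaking, the usual phase-kickback implementation of $U_f$ uses one ancilla in $\ket{-}$, and decomposing an $n$-fold-controlled phase flip into elementary gates typically needs a few ancillas as well. The paper's statement of $\lceil \log_2 N\rceil$ qubits is itself a bit informal in this respect (ignoring $O(1)$ or $O(\log n)$ ancillas), so your treatment matches it in spirit, but if you want the bound to be literally exact you should either note the ancilla cost or invoke ancilla-free decompositions.
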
 

Grover's algorithm as originally presented in \cite{Gro96} applies to the case of a unique solution; the extension to multiple solutions was detailed in \cite{BBHT98}. When $t$ is known, then Grover's algorithm can be adapted to have success probability $1$. 
When $t$ is unknown, then it is still possible to find $i\in [N]$ such that $f(i) = 1$ with an expected number of $O(\sqrt{N/t})$ queries to $f$ (and an expected number of $\Tilde{O}(\sqrt{N/t})$ elementary gates) if $t > 0$. In this work, when we apply Grover's algorithm we always know an estimate of the number of solutions $t$.

\begin{theorem}[Amplitude amplification \cite{BHMT02}]\label{thm: AA}
    For $N \in \N$, let $f \colon [N] \to \{0,1\}$ be a function. Suppose there is an efficient quantum circuit $U_f$ that maps $\ket{i} \to (-1)^{f(i)} \ket{i}$. 
    Suppose $\mathcal{A}$ is an algorithm that can be implemented as a reversible quantum circuit and that returns  $i \in [N]$ such that $f(i) = 1$ with success probability $p > 0$ in time $T$. 
    Then there exists a quantum algorithm, called \emph{amplitude amplification}, that returns $i \in [N]$ such that $f(i) = 1$ with probability at least $\max(1 - p, p)$ in time $O(1/\sqrt{p}) \cdot T$. 
\end{theorem}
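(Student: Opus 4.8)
The plan is to carry out the standard amplitude-amplification argument of Brassard, Høyer, Mosca and Tapp, reducing everything to a rotation in a two-dimensional invariant subspace. First I would consider the state $\ket{\Psi} := \mathcal{A}\ket{0}$ produced by running $\mathcal{A}$ on the all-zero input, and decompose it as $\ket{\Psi} = \sqrt{p}\,\ket{\Psi_1} + \sqrt{1-p}\,\ket{\Psi_0}$, where $\ket{\Psi_1}$ is the normalised projection of $\ket{\Psi}$ onto the span of basis states whose index register $i$ satisfies $f(i)=1$, and $\ket{\Psi_0}$ is the normalised orthogonal part; by hypothesis, measuring the index register of $\ket{\Psi}$ returns a good $i$ with probability exactly $p$. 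Next I would introduce the two sign-flip operators $S_f := U_f$ (acting as the identity on the workspace register of $\mathcal{A}$), which equals $I - 2\Pi_{\mathrm{good}}$ and hence negates $\ket{\Psi_1}$ while fixing $\ket{\Psi_0}$, and $S_0 := I - 2\ket{0}\!\bra{0}$; and then the amplitude-amplification iterate $Q := -\,\mathcal{A}\,S_0\,\mathcal{A}^{-1}\,S_f$. This operator is implementable at cost $O(T)$ per application, since $\mathcal{A}$ is a reversible circuit so $\mathcal{A}^{-1}$ also costs $T$, $U_f$ is given, and $S_0$ (up to the irrelevant global phase $-1$) is cheap.

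The key step is to show that the plane $\mathcal{P} := \mathrm{span}\{\ket{\Psi_0},\ket{\Psi_1}\}$ is invariant under $Q$ and that $Q|_{\mathcal{P}}$ is a rotation by angle $2\theta$, where $\theta \in (0,\tfrac{\pi}{2}]$ is defined by $\sin\theta = \sqrt{p}$. This follows because $\mathcal{A}S_0\mathcal{A}^{-1} = I - 2\ket{\Psi}\!\bra{\Psi}$ is the reflection about $\ket{\Psi}$, while $S_f$ restricted to $\mathcal{P}$ is the reflection about $\ket{\Psi_0}$, the two mirrors make angle $\theta$ since $\langle \Psi_0 | \Psi\rangle = \sqrt{1-p} = \cos\theta$, and the composition of two planar reflections is a rotation by twice the angle between their axes (the sign $-1$ only ensures the net map is a rotation on $\mathcal{P}$ rather than a rotoreflection). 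Iterating, one gets $Q^{m}\ket{\Psi} = \sin\!\big((2m+1)\theta\big)\,\ket{\Psi_1} + \cos\!\big((2m+1)\theta\big)\,\ket{\Psi_0}$, so measuring the index register after $m$ iterations returns a good $i$ with probability $\sin^2\!\big((2m+1)\theta\big)$.

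It then remains to pick $m$. I would take $m$ to be the nearest non-negative integer to $\tfrac{\pi}{4\theta} - \tfrac12$ (which is well-defined and $\ge 0$ since $\theta \le \tfrac{\pi}{2}$). A short computation gives $(2m+1)\theta - \tfrac{\pi}{2} = 2\theta\big(m - (\tfrac{\pi}{4\theta} - \tfrac12)\big)$, hence $\big|(2m+1)\theta - \tfrac{\pi}{2}\big| \le \theta \le \tfrac{\pi}{2}$, so the success probability is at least $\cos^2\theta = 1-p$. Moreover $m \le \tfrac{\pi}{4\theta} = O(1/\sqrt{p})$ because $\theta \ge \sin\theta = \sqrt{p}$, so together with the initial run of $\mathcal{A}$ the total cost is $O(mT) = O(1/\sqrt p)\cdot T$. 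For the large-$p$ regime, note $m = 0$ precisely when $\theta \ge \tfrac{\pi}{4}$, i.e. $p \ge \tfrac12$; in that case the algorithm just runs $\mathcal{A}$ once and succeeds with probability $p = \max(1-p,p)$, whereas when $m \ge 1$ we have $p \le \tfrac12$ and the bound $1-p = \max(1-p,p)$ already established applies. This yields the claimed success probability $\max(1-p,p)$ in all cases.

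The step I expect to require the most care is the rotation lemma of the second paragraph — making the "two reflections give a rotation by $2\theta$" argument precise, including that $\mathcal{P}$ is genuinely $Q$-invariant and that $S_f$ really negates exactly $\ket{\Psi_1}$ once the entangled workspace register of $\mathcal{A}$ is accounted for (immediate, since being "good" depends only on $i$) — since the choice-of-$m$ analysis is then elementary trigonometry. One subtlety worth a remark: computing $m$ requires knowing $\theta$ (equivalently $p$); this is harmless here because in all our applications an estimate of $p$ is known (as already noted for Grover's algorithm), and in general \cite{BHMT02} also provides a $p$-oblivious variant that draws the iteration count uniformly from $\{0,\dots,M-1\}$ with geometrically growing $M$, reaching constant success probability once $M = \Theta(1/\sqrt p)$ at total cost still $O(T/\sqrt p)$.
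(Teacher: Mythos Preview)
Your argument is correct and is precisely the standard two-dimensional rotation analysis of Brassard--H{\o}yer--Mosca--Tapp. Note, however, that the paper does not give its own proof of this theorem: it is stated as a preliminary cited from \cite{BHMT02}, so there is no paper proof to compare against beyond observing that you have reproduced the original argument faithfully.
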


Moreover, for known $p$, there is a variant of amplitude amplification with success probability 1, which uses $O(1/\sqrt{p})$ applications of $\mathcal{A}$ and $U_f$ \cite[Theorem 4]{BHMT02}. 
When $p$ is unknown, then there is a variant that finds an $i\in [N]$ such that $f(i) = 1$ using an expected number of $\Theta(1/\sqrt{p})$ applications  of $\mathcal{A}$ and $U_f$ as long as $p>0$ \cite[Theorem 3]{BHMT02}.  

Finally, note that we can reduce the error probability of Grover's algorithm and amplitude amplification using standard methods: using $O(\log_2(1/\epsilon))$ repetitions it can be bounded from above by an arbitrarily small $\epsilon > 0$. 

\subsubsection{Quantum walk.}
We will consider the quantum walks as presented in \cite{MNRS11}. A quantum walk starts with an undirected graph $G = (V,E)$, with $V$ the set of vertices and $E \subseteq V \times V$ the set of edges. The set $M \subseteq V$ contains elements said marked, and the goal of a quantum walk is to return a marked vertex $v \in M$. For any vertex $v \in V$, we define $N(x) := \{v' : (v,v') \in E\}$ the set of neighbors of $v$, and
$\ket{p_v} = \sum_{y \in N(v)} \frac{1}{\sqrt{|N(v)|}} \ket{v'}$. We now define the following quantities:
\begin{itemize}
	\item Set-up cost $\Sf$ is the cost of the unitary map $\ket{0} \mapsto \frac{1}{\sqrt{|V|}} \sum_{v \in V} \ket{v}\ket{p_v}$. 
	\item Update cost $\Uf$ is the cost of the unitary map 
	$\ket{v}\ket{0} \mapsto \ket{v}\ket{p_v}$. 
	\item Checking cost $\Cf$ is the cost of computing the function $f : V \rightarrow \{0, 1\}$ where $f(v) = 1 \Leftrightarrow v \in M$. 
    \item $\epsilon = \frac{|M|}{|V|}$ is the fraction of marked vertices. 
    \item $\delta$ is the spectral gap of $G$, which is defined as $\delta := 1 - |\lambda|$, where $\lambda$ is the second largest eigenvalue (in magnitude) of the normalized adjacency matrix of $G$. 
\end{itemize}

\begin{proposition} \cite{MNRS11}
There exists a quantum-walk algorithm that finds a marked element $v \in M$ in time 
	$$O\left( \Sf + \frac{1}{\sqrt{\epsilon}}\left(\frac{1}{\sqrt{\delta}}\Uf + \Cf\right) \right).$$
\end{proposition}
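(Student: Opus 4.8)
The plan is to prove this using the Magniez--Nayak--Roland--Santha quantum-walk search framework, which packages a Szegedy-type discrete-time walk into an amplitude-amplification loop. Assume $G$ is regular (the case arising in the applications, e.g.\ Johnson graphs), so that the uniform distribution is stationary for the walk $P$ with transition probabilities $1/|N(v)|$; after replacing $P$ by a lazy version if needed, $P$ is reversible and ergodic, and its discriminant matrix coincides with the normalized adjacency matrix of $G$, whose spectral gap is $\delta$ by definition. Write $\ket{\pi} := \frac{1}{\sqrt{|V|}}\sum_{v \in V}\ket{v}\ket{p_v}$ for the state produced by the set-up unitary, and let $\ket{\mu}$ be the normalized projection of $\ket{\pi}$ onto the marked subspace $\mathrm{span}\{\ket{v}\ket{p_v} : v \in M\}$; one computes $|\langle \pi|\mu\rangle|^2 = |M|/|V| = \epsilon$. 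The algorithm prepares $\ket{\pi}$ once at cost $\Sf$ and then iterates $\Theta(1/\sqrt{\epsilon})$ times the composition of (i) a phase flip on marked vertices, realized by one evaluation of the checking function at cost $\Cf$, and (ii) an approximate reflection about $\ket{\pi}$.

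The crux is implementing (ii) cheaply. By Szegedy's spectral correspondence, the walk operator $W(P)$ --- a product of two reflections, each built from the coin states $\ket{p_v}$ and hence costing $O(\Uf)$ per application --- has $\ket{\pi}$ as its unique eigenvector of eigenphase $0$, while every other eigenvector in the relevant invariant subspace has eigenphase of magnitude $\Omega(\sqrt{\delta})$, the phases being essentially $\pm 2\arccos(\lambda_i)$ for the eigenvalues $\lambda_i$ of the discriminant. Therefore phase estimation of $W(P)$ to precision a small constant times $\sqrt{\delta}$, which uses $O(1/\sqrt{\delta})$ applications of $W(P)$ and hence $O(\frac{1}{\sqrt{\delta}}\Uf)$ gates, flips the phase of every component orthogonal to $\ket{\pi}$ while (nearly) fixing $\ket{\pi}$, i.e.\ approximates the reflection about $\ket{\pi}$. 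One outer round thus costs $O(\frac{1}{\sqrt{\delta}}\Uf + \Cf)$, and the total cost is $O\big(\Sf + \frac{1}{\sqrt{\epsilon}}(\frac{1}{\sqrt{\delta}}\Uf + \Cf)\big)$, as claimed.

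Correctness reduces, exactly as for Grover search, to a two-dimensional analysis. With \emph{exact} reflections, the composition of the marked-phase-flip and the reflection about $\ket{\pi}$ leaves invariant the plane spanned by $\ket{\pi}$ and $\ket{\mu}$ and acts on it as a rotation by angle $2\theta$ with $\sin\theta = \sqrt{\epsilon}$; hence after $\Theta(1/\sqrt{\epsilon})$ iterations the state has constant overlap with the marked subspace, so measuring the vertex register and verifying returns some $v \in M$ with probability $\Omega(1)$, amplified to $2/3$ by $O(1)$ repetitions.

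The one genuinely delicate point --- and the main obstacle --- is that the reflection in (ii) is only approximate, so errors compound over the $\Theta(1/\sqrt{\epsilon})$ iterations. The remedy is to run the phase estimation with $O(\log(1/\epsilon))$ extra bits of precision, so that each approximate reflection is within $O(\sqrt{\epsilon})$ of the ideal one in operator norm; a hybrid argument then bounds the accumulated deviation by a small constant, preserving the $\Omega(1)$ success probability. Making this rigorous is where one invokes MNRS's ``effective spectral gap lemma'', which guarantees that phase estimation restricted to the part of the spectrum near $\ket{\pi}$ behaves as though the gap were exactly $\delta$. Two mild assumptions are used implicitly and are standard: $G$ connected (so that $\ket{\pi}$ is the unique stationary state), and a constant-factor estimate of $\epsilon$ to fix the loop length --- the latter removable at the price of a logarithmic overhead by trying geometrically decreasing guesses for $\epsilon$.
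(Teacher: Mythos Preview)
The paper does not prove this proposition; it simply cites it as a known result from \cite{MNRS11} and uses it as a black box. Your sketch is therefore not something to compare against the paper's own argument, but it is a faithful outline of the MNRS approach: prepare $\ket{\pi}$ once, then run Grover-style amplitude amplification in which the diffusion step is replaced by an approximate reflection about $\ket{\pi}$, implemented via phase estimation of the Szegedy walk operator $W(P)$, whose non-trivial eigenphases are bounded away from $0$ by $\Omega(\sqrt{\delta})$.

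One technical point deserves care. To control the accumulated error over $\Theta(1/\sqrt{\epsilon})$ rounds you propose running phase estimation with $O(\log(1/\epsilon))$ extra bits of precision so that each approximate reflection is $O(\sqrt{\epsilon})$-close to ideal in operator norm. That costs $O\big(\tfrac{\log(1/\epsilon)}{\sqrt{\delta}}\big)$ applications of $W(P)$ per round and hence inserts a $\log(1/\epsilon)$ factor into the bound, yielding only $O\big(\Sf + \tfrac{1}{\sqrt{\epsilon}}(\tfrac{\log(1/\epsilon)}{\sqrt{\delta}}\Uf + \Cf)\big)$. The actual MNRS contribution is precisely to avoid this overhead: they show that \emph{constant}-precision phase estimation already suffices, via a more careful analysis of how the approximation errors propagate (they do not accumulate as a naive hybrid argument suggests, because the error components lie in eigenspaces that the subsequent rotations handle benignly). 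The ``effective spectral gap lemma'' you invoke is a distinct and later tool, not part of the MNRS proof. If you only need the bound up to polylogarithmic factors --- which is all this paper ever uses, since its complexities are stated with $\tilde{O}(\cdot)$ --- your argument is adequate; for the clean $O(\cdot)$ statement as written, you must appeal to MNRS's sharper error analysis rather than the extra-precision-plus-hybrid route.
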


\subsubsection{Johnson graph.}
For positive integers $r < n$, the Johnson graph $J(n,r)$ is a standard graph to run quantum walks, and is for instance used in quantum-walk algorithms for solving collision problems. Each vertex $v$ consists of $r$ distinct (unordered) points $x_1, \dots, x_r \in [n]$ and some additional data $D(v)$. 
Two vertices $v = (x_1,\dots,x_r,D(v))$ and $v' = (x'_1,\dots,x'_r,D(v'))$ form an edge in $J(n,r)$ iff $|\{x_1,\dots,x_r\} \cap \{x'_1,\dots,x'_r\}|=r-1$. 
It is well-known that the graph $J(n,r)$ has spectral gap $\delta = \Omega(\frac{1}{r})$. 

\cite{Amb07} and \cite{BJLM13} presented quantum data structures that use efficient QRAM to perform efficient insertion and deletion of elements in quantum superposition. We refer the reader to these papers for more details.

\section{Code Sieving using NNS Techniques}
\label{sec: sieving for codes}
Sieving using a near-neighbor search (NNS) subroutine is a well-known method in lattice-based cryptanalysis (e.g., see \cite{BDGL16}) and recently has become an important tool in code-based cryptanalysis as well. 
Our quantum algorithms for code sieving are based on the sieving framework that was introduced in \cite{EPRINT:GuoJohNgu23} and further improved and generalized in \cite{DEEK23}. 
In this section, we recall this sieving framework and its complexity (Theorem~\ref{thm: from NNS to DP}). 
We begin by stating the computational problem that is tackled by the sieving framework, namely, the \textit{decoding problem}. 

\begin{problem}[Decoding problem, $\DP(n,k,w,N)$]
\label{prob: decoding problem, N sol}
    Given an $[n,k]$ binary linear code $\mathcal{C}$ and an integer value $w$, find $N$ codewords $\mathbf{x}_c \in \mathcal{C}$ of weight $|\mathbf{x}_c| := w$. 
\end{problem}  

We focus on the expected runtime of algorithms that solve a \textit{random} instance of Problem~\ref{prob: decoding problem, N sol}. That is, we assume that $\Cs$ is a random $[n,k]$ linear code. Given such a code $\Cs$, the expected number of codewords in $\Cs$ of weight $w$ is ${n \choose w}/2^{n-k}$. For the decoding problem $\DP(n,k,w,N)$ to be well-defined, we thus require $N = O\Big(\frac{{n \choose w}}{2^{n-k}}\Big)$. 

\begin{remark}
    In the introduction (and later in Section~\ref{sec: sievingISD}), we considered the decoding problem as Problem~\ref{prob: decoding problem, N sol} with $N = 1$, as the presumed hardness of this variant underlies the security of code-based cryptographic primitives. 
    The case for arbitrary $N$ naturally appears as a subinstance in the ISD framework, and in particular it is the problem that is tackled by the sieving subroutine in the ISD algorithms from \cite{EPRINT:GuoJohNgu23, DEEK23}. Therefore, in the remainder of this section, we consider this problem for arbitrary $N$. 
\end{remark}

\subsection{Framework for Code Sieving}

Recently, sieving-based algorithms were presented for this decoding problem  \cite{EPRINT:GuoJohNgu23, DEEK23}. These algorithms make use of an oracle that solves NNS in $\F_2^n$ endowed with Hamming metric, which is defined as follows. 

\begin{problem}[Near-neighbor search, $\NNS(n,w,N)$]\label{prob: NNS} 
Let $n, w \in \N$ with $w \leq n$.
Given a list $\mathcal{L}$ of $N$ vectors sampled independently and uniformly at random from $S_w^n$, find a $(1 - o(1))$-fraction of all pairs $(\xv, \yv) \in \mathcal{L}^2$ satisfying $|\xv + \yv| = w$ (called \textit{solution pairs}).
\end{problem}

The basic idea of code sieving \cite{EPRINT:GuoJohNgu23, DEEK23} for solving $\DP(n,k,w,N)$ is to start from a list of arbitrary vectors of Hamming weight $w$ and then iteratively add code constraints to obtain a required number of codewords of Hamming weight $w$. These iteratively added code constraints define a so-called \textit{tower of codes}: a collection $\{\mathbb F_2^{n} = \mathcal{C}_0, \dots, \mathcal{C}_{n-k} = \mathcal{C}\}$  of codes with $\mathcal{C}_{i-1} \supseteq \mathcal{C}_{i}$ and dimension decrements of 1, starting with the `initial' code $\F_2^n$ and ending with the input code $\mathcal{C}$.

When going from $\mathcal{C}_{i-1}$ to $\mathcal{C}_{i}$, the addition of one constraint results in halving the linear subspace $\mathcal{C}_{i-1}$. 
Thus, in expectation only half of the vectors from the code $\mathcal{C}_{i-1}$ are in the code $\mathcal{C}_{i}$. 
To avoid an exponential drop, Algorithm~\ref{alg: sieving} therefore instead creates new elements from the elements of the previous iteration. 
Specifically, this is done by adding pairwise sums of the elements from the previous list $\mathcal{L}_{i-1}$ to the new list $\mathcal{L}_i$, which is a technique also used in lattice sieving. To improve the algorithm's performance, instead of searching through all pairs of elements from  $\mathcal{L}_{i-1}$, the algorithm searches through the list of \textit{near neighbors}, $\{(\xv, \yv) \colon \xv,\yv \in \mathcal{L}_{i-1}, |\xv + \yv| = w\}$, namely the elements that are at small Hamming distance in $\mathbb F_2^n$. Among the near neighbors $(\xv, \yv)$, the algorithm then adds $\xv + \yv$  to $\mathcal{L}_i$ for those $\xv + \yv$ that belong to $\mathcal{C}_i$. 

Algorithm \ref{alg: sieving} presents the overall sieving framework for solving the decoding problem (Problem~\ref{prob: decoding problem, N sol}) in more detail. Note that the time complexity of this sieving algorithm (Algorithm~\ref{alg: sieving}) is determined by the cost of one iteration, i.e., by the cost of solving $\NNS(n,w,N)$ (Problem~\ref{prob: NNS}).  
We therefore state the time and memory complexity of Algorithm~\ref{alg: sieving} after presenting the state-of-the-art algorithm from \cite{DEEK23} for $\NNS(n,w,N)$ in \Cref{sec: NNS using LSF}.   

\begin{algorithm}[h!]
    \SetKwInOut{Input}{Input}\SetKwInOut{Output}{Output}
    \Input{$[n,k]$ binary linear code $\mathcal{C} \subseteq \mathbb F_2^n$, weight $w$, output size $N$, oracle $\mathcal{A}_{\NNS}$ for $\NNS(n,w,N)$}
    \Output{$\mathcal{L} \subseteq \mathcal{C} \cap \mathcal{S}_{w}^{n}$ of size $N$
    }  
    \vspace{0.2cm}
    \nonl \textsc{Initialization:} \\
    \vspace{0.1cm}
    Sample a tower of codes $\{\mathbb F_2^{n} = \mathcal{C}_0, \dots, \mathcal{C}_{n-k} = \mathcal{C}\}$, with dimension decrements of 1 \\
    Sample $N$ vectors independently and uniformly at random from $\mathcal{S}_{w}^{n}$ and add them to a list $\mathcal{L}_{0}$ \\
    \vspace{0.2cm}
    \nonl \textsc{Sieving part:} \\
    \vspace{0.1cm}
    \For{$i = 1$ to $n-k$}
    {
        Invoke $\mathcal{A}_{\NNS}$ on $\mathcal{L}_{i-1}$ to obtain $\mathcal{L}'_i := \{(\xv, \yv) \colon \xv,\yv \in \mathcal{L}_{i-1}, |\xv + \yv| = w\}$\\
        \For{$(\mathbf{x}, \mathbf{y}) \in \mathcal{L}'_i$}
        {
            \If{$\mathbf{x} + \mathbf{y} \in \mathcal{C}_i$}
            {
                Add $\mathbf{x} + \mathbf{y}$ to $\mathcal{L}_{i}$
            } 
        }
    Discard some elements if $|\mathcal{L}_{i}| > N$ 
    }
    \Return $\mathcal{L}_{n-k}$
    \caption{Code sieving using NNS}
    \label{alg: sieving}
\end{algorithm}

\begin{remark}[Difference with lattice sieving]
    Note the difference with lattice sieving, where one starts with a list of long lattice vectors, and iteratively combines them to obtain \emph{shorter} lattice vectors. Another difference is that code sieving was presented in \cite{EPRINT:GuoJohNgu23, DEEK23} as a subroutine in an ISD algorithm, whereas lattice sieving is the main algorithm. (We elaborate on the application to ISD in Section~\ref{sec: sievingISD}.)
\end{remark}

Besides the upper bound on the output size $N$ that is imposed by the decoding problem itself, the construction of the sieving algorithm also puts a lower bound on the size $N$ used in the algorithm. 
Specifically, Lemma~\ref{lem: lower bound on N for NNS} shows that if $N$ is too small, then the expected number of solutions to $\NNS(n,w,N)$ is significantly smaller than $N$. As a result, the current list size would shrink in each iteration, and the output list cannot be of size $\Omega(N)$.

\begin{lemma}[\cite{DEEK23}]\label{lem: lower bound on N for NNS}
    Let $\mathcal{L}$ be a set containing $N$ vectors sampled uniformly and independently at random from $S_w^n$. Then the expected number of pairs $(\xv, \yv) \in \mathcal{L}^2$ satisfying $|\xv + \yv| = w$ is \begin{align*}
        N^2 \frac{{w \choose w/2} {n-w \choose w/2}}{{n \choose w}}.
    \end{align*} 
    In particular, the expected number of such pairs is at least $N$ if and only if $N \geq \frac{{n \choose w}}{{w \choose w/2} {n-w \choose w/2}}$.
\end{lemma}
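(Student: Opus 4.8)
The plan is to compute the expected number of solution pairs by linearity of expectation over all ordered pairs $(\xv,\yv) \in \mathcal{L}^2$, and then use the fact that for a fixed $\xv \in \Ss_w^n$ the number of $\yv \in \Ss_w^n$ with $|\xv+\yv|=w$ is a purely combinatorial quantity independent of $\xv$. First I would observe that for $\xv,\yv \in \Ss_w^n$, writing $a := |\xv \land \yv|$, we have $|\xv+\yv| = |\xv| + |\yv| - 2|\xv\land\yv| = 2w - 2a$, so the condition $|\xv+\yv|=w$ forces $w$ even and $a = w/2$. Hence the solution pairs are exactly those with $|\xv\land\yv| = w/2$.

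Next I would count, for a fixed $\xv \in \Ss_w^n$, the number of $\yv \in \Ss_w^n$ with $|\xv\land\yv| = w/2$: choose which $w/2$ of the $w$ support coordinates of $\xv$ lie in the support of $\yv$ (that is ${w \choose w/2}$ ways), and choose the remaining $w/2$ support coordinates of $\yv$ among the $n-w$ coordinates outside the support of $\xv$ (that is ${n-w \choose w/2}$ ways). This gives ${w \choose w/2}{n-w \choose w/2}$ such vectors $\yv$, independently of $\xv$. Since each of the $N$ entries of $\mathcal{L}$ is uniform on $\Ss_w^n$ and $|\Ss_w^n| = {n \choose w}$, for each fixed pair of indices $(i,j)$ (including $i=j$, but pairs with $i=j$ contribute negligibly, or one restricts to $i \neq j$ — both conventions give the same leading term and the stated formula takes the $N^2$ convention) the probability that the corresponding pair is a solution is $\frac{{w \choose w/2}{n-w \choose w/2}}{{n \choose w}}$. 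Summing over all $N^2$ ordered index pairs and using linearity of expectation yields the claimed expectation $N^2 \frac{{w \choose w/2}{n-w \choose w/2}}{{n \choose w}}$.

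For the ``in particular'' part, I would simply set this expression $\geq N$ and divide by $N$: this is equivalent to $N \cdot \frac{{w \choose w/2}{n-w \choose w/2}}{{n \choose w}} \geq 1$, i.e. $N \geq \frac{{n \choose w}}{{w \choose w/2}{n-w \choose w/2}}$, which is the stated inequality. There is no real obstacle here; the only point requiring a little care is the handling of the diagonal pairs $i=j$ (which are not genuine two-element pairs and for which $\xv+\yv = \mathbf 0$ has weight $0 \neq w$, assuming $w>0$), but since they number only $N$ out of $N^2$ ordered pairs they do not affect the leading-order expectation, and I would note that the formula as stated counts ordered pairs in $\mathcal{L}^2$ so that the clean $N^2$ expression holds exactly under that convention. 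I would also implicitly assume $w$ is even (consistent with the rest of the paper, where $w/2$ appears throughout), since otherwise there are no solution pairs at all.
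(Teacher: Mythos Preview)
Your proposal is correct and follows essentially the same approach as the paper: reduce $|\xv+\yv|=w$ to $|\xv\land\yv|=w/2$, compute the single-pair probability as ${w\choose w/2}{n-w\choose w/2}/{n\choose w}$, and apply linearity of expectation over the $N^2$ ordered pairs. Your extra care about the diagonal pairs $(i=j)$ is a clarification the paper's proof simply omits; since those pairs contribute zero (for $w>0$) the exact count is $N(N-1)$ times the single-pair probability, and the paper's $N^2$ expression should be read as the leading-order term, exactly as you note.
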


\begin{proof}
(This was also observed in \cite{DEEK23}.)
For $\xv, \yv \in \Ss_w^n$ sampled independently and uniformly at random, the probability that they satisfy $|\xv + \yv| = w$ is equal to ${w \choose w/2} {n-w \choose w/2}/{n \choose w}$.
To see this, one could use the fact that for any $\xv, \yv \in \Ss_w^n$, the condition $|\xv + \yv| = w$ is equivalent to the condition that $|\xv \land \yv| = w/2$. 
It follows that the expected number of pairs satisfying this condition is $N^2 {w \choose w/2} {n-w \choose w/2}/{n \choose w}$.
Therefore, the expected number of such pairs is at least $N$ if and only if $N \geq {n \choose w}\big/\big({w \choose w/2} {n-w \choose w/2}\big)$.
\qed
\end{proof} 

By Lemma~\ref{lem: lower bound on N for NNS}, if $N \geq {n \choose w}\big/\big({w \choose w/2} {n-w \choose w/2}\big)$, then $\NNS(n,w,N)$ has at least $N$ solutions in expectation. However, note that, in each iteration $i$ of Algorithm~\ref{alg: sieving}, on average only a quarter of the NNS solutions $(\xv, \yv)$ found by $\mathcal{A}_{\NNS}$ will be added to the new list $\mathcal{L}_i$. Indeed, half of them are discarded as each pair is counted twice: $(\xv, \yv)$ and $(\yv, \xv)$. Secondly, the algorithm discards those $\xv + \yv$ that do not belong to $\mathcal{C}_i$. 
Since two subsequent codes $\mathcal{C}_{i-1}, \mathcal{C}_{i}$ differ by one dimension, 
it follows that in expectation another half of the NNS solutions are discarded. 
Therefore, \cite{DEEK23} propose to take $N \geq 4 \cdot {n \choose w}\big/\big({w \choose w/2} {n-w \choose w/2}\big)$ to maintain a list size of $N$ through all iterations in the sieving part.

It is important to emphasize that the application of \Cref{lem: lower bound on N for NNS} in the above discussion is justified if the distribution of the list elements does not change throughout the iterations. 
In fact, to guarantee that the output list contains $N$ vectors (for $N$ within the stated bounds), the sieving algorithm from \cite{DEEK23} relies on the following heuristic.

\begin{quote}
    {\normalsize \textit{In each iteration of Algorithm~\ref{alg: sieving}, the elements in the current list `behave like' vectors distributed independently and uniformly at random over $\mathcal{S}_w^n$ in the sense that, even if some correlations appear in an iteration of the algorithm, it does not significantly affect the runtime of the algorithm.}}
\end{quote}

\noindent See \cite{DEEK23} for a more precise formulation of this heuristic and experimental verification.\footnote{A similar heuristic is used in the most efficient lattice-based sieving approaches (in particular, \cite{NV08,MV10} and their classical and quantum derivatives \cite{BDGL16,Laa16,AC:ChaLoy21,BCSS23}).}

\subsection{NNS using Locality-Sensitive Filtering (LSF)}\label{sec: NNS using LSF}

The algorithms for the NNS problem in the Hamming metric proposed in \cite{Car20, EPRINT:GuoJohNgu23, DEEK23} can all be formulated using the locality-sensitive filtering (LSF) framework in the Hamming metric. Notably, the best-performing algorithms for NNS in the Euclidean metric \cite{BDGL16} also employ LSF techniques.
We start by recalling this approach.

The underlying idea of locality-sensitive filtering (in $\F_2^n$) is the following: vectors $\xv, \yv$ satisfying $|\xv + \yv| = w$, also known as \textit{near neighbors}, can be found more efficiently if we restrict our search to \textit{local regions} of $\mathbb F_2^n$. The algorithm proceeds as follows. It starts with covering the space $\mathbb F_2^n$ with (potentially overlapping) regions, each region corresponding to a certain center $\mathbf{c} \in  \mathcal{C}_f$, for some $\mathcal{C}_f \subseteq \mathbb F_2^n$. Each vector $\mathbf{x}$ from the input list $\mathcal{L}$ is then \textit{filtered} according to the centers, namely, it is inserted into a \textit{bucket} corresponding to a center $\mathbf{c}$ if and only if $|\mathbf{x} \land \mathbf{c}| = \alpha$ for a fixed parameter $\alpha$.\footnote{The analog in the lattice setting is that a unit vector is added to a bucket corresponding to some unit vector $\cv \in \R^n$ if it has (absolute) inner product at least $\alpha$ with $\cv$.} 
This is what we refer to as the \textit{bucketing phase}. Each bucket thus contains all the elements from $\mathcal{L}$ that belong to the same region. The algorithm then searches for near neighbors within each bucket, which we refer to as the \textit{checking phase}. As the checking phase significantly affects the overall cost of the algorithm, our primary goal will be to improve its runtime by quantizing this part of the algorithm. 

We formalize the notion of bucket as follows, and define the set of \textit{valid centers} for a given vector. 

\begin{definition}[Bucket] 
    For $\cv \in \F_2^n$, integers $0 \leq \alpha \leq w \leq n$ with $\alpha \leq |\cv|$ and $\mathcal{L} \subseteq \Ss_w^n$, define
    $$\mathcal{B}_{\alpha}(\cv) := \{\xv \in \mathcal{L} \colon |\xv \land \cv| = \alpha\}.$$ 
\end{definition}

\begin{definition}[Valid centers]
    For $\xv \in \F_2^n$, integers $0 \leq \alpha \leq n$ with $\alpha \leq |\xv|$ and $\mathcal{C}_f \subseteq \F_2^n$, define
    \[\mathcal{VC}_{\alpha}(\xv) := \{\cv \in \mathcal{C}_f \colon |\xv \land \cv| = \alpha\}.\]
\end{definition}

The resulting NNS algorithm is given in Algorithm~\ref{alg: NNS}. It uses two subroutines: $\subr{FindValidCenters}$ and $\subr{FindSolutions}$. 
For a given vector $\mathbf{x} \in \mathcal{L}$, the subroutine $\subr{FindValidCenters}$ returns the set $\mathcal{VC}_\alpha(\mathbf{x}) = \{\mathbf{c} \in \mathcal{C}_f \ | \ |\mathbf{x} \land \mathbf{c}| = \alpha\}$. 
For a suitable choice of $\mathcal{C}_f$ (using so-called random product codes), the time complexity of this subroutine has been shown by \cite{DEEK23} to be $|\mathcal{VC}_\alpha(\xv)| + 2^{o(n)}$, as explained in the next section.

The subroutine $\subr{FindSolutions}$ returns all pairs $(\mathbf{x}, \mathbf{y})$ of vectors in the bucket $\mathcal{B}_{\alpha}(\mathbf{c})$ satisfying $|\mathbf{x} + \mathbf{y}| = w$. 
Classically, we obtain an algorithm for $\subr{FindSolutions}$ with expected runtime $\mathbb{E}[|\mathcal{B}_{\alpha}(\mathbf{c})|^2]$ by searching through all pairs in $\mathcal{B}_{\alpha}(\mathbf{c})$.  
(Although \cite{DEEK23} consider a slightly different approach for the checking phase, it results in the same asymptotic runtime for NNS.) 
In Section~\ref{sec: quantum speed-ups}, we focus on speeding up the search in the buckets by presenting  several quantum algorithms for $\subr{FindSolutions}$. 

\begin{algorithm}[h!]
    \DontPrintSemicolon
    \SetKwInOut{Input}{Input}\SetKwInOut{Output}{Output}
    
    \Input{weight $w$, input list $\mathcal{L} \subseteq \mathcal{S}_w^n$ of size $N$, set of centers $\mathcal{C}_f$, bucketing parameter $\alpha$} 
    \Output{list $\mathcal{L}'$ containing pairs $\mathbf{x}, \mathbf{y} \in \mathcal{L}$ with $|\mathbf{x} + \mathbf{y}| = w$}
    \vspace{0.2cm}
    \nonl \textsc{Initialization:}\\
    \For{$\mathbf{c} \in \mathcal{C}_f$}
    {
        $\mathcal{B}_{\alpha}(\mathbf{c}) = \emptyset$\;
    }
    \vspace{0.2cm}
    \nonl \textsc{Bucketing Phase:}\;    
    \For{$\mathbf{x} \in \mathcal{L}$}{
        \For{$\mathbf{c} \in \subr{FindValidCenters}(\mathcal{C}_f, \mathbf{x}, \alpha)$
        }{
        Add $\mathbf{x}$ to $\mathcal{B}_{\alpha}(\mathbf{c})$
        }
    }
    \vspace{0.2cm}
    \nonl \textsc{Checking Phase:}\;
    $\mathcal{L}' = \emptyset$\;
    \For{$\mathbf{c} \in \mathcal{C}_f$}{
        Add $\subr{FindSolutions}(\mathcal{B}_{\alpha}(\mathbf{c}))$ to $\mathcal{L}'$
    }
    \Return $\mathcal{L}'$
    \caption{$\NNS$ using locality-sensitive filtering}\label{alg: NNS}
\end{algorithm}

Depending on the size of $|\mathcal{C}_f|$, Algorithm~\ref{alg: NNS} has to be repeated a certain number of times to find all solutions. 
The size of $|\mathcal{C}_f|$ also affects other factors of the time and memory complexity of Algorithm~\ref{alg: NNS}, and should thus be chosen carefully. 
The best time and memory complexity in \cite[Cor.~4.2]{DEEK23} are obtained by choosing $|\mathcal{C}_f|$ such that $\mathbb{E}[|\mathcal{VC}_\alpha(\mathbf{x})|] = 2^{o(n)}$, ensuring that the expected runtime of $\subr{FindValidCenters}$ is $2^{o(n)}$.  
(An algorithm with similar time and memory complexities as \cite[Cor.~4.2]{DEEK23} was independently obtained by Carrier \cite[Cor.~8.2.6]{Car20}.)

More generally, given a classical or quantum algorithm for $\subr{FindSolutions}$, we obtain the following time and memory complexities for $\NNS(n,w,N)$, and thus for $\DP(n,k,w,N)$ by instantiating the oracle $\mathcal{A}_{\NNS}$ in Algorithm~\ref{alg: sieving} with the obtained algorithm for $\NNS(n,w,N)$. 

\begin{theorem}[Variant of {\cite[Cor.~8.2.6]{Car20}} and {\cite[Cor.~4.2, Theorem~3.2]{DEEK23}}]\label{thm: from NNS to DP}
    Consider $n, w, N \in \N$ with $w = \Theta(n)$ such that $\NNS(n,w,N)$ (Problem~\ref{prob: NNS}) is well-defined. For positive integers $v, \alpha = \Theta(n)$ and arbitrary $\xv,\yv \in \mathcal{S}_w^n$ such that $|\xv+\yv| = w$, let $|\mathcal{C}_f| = 2^{o(n)} \cdot |\mathcal{S}_v^n| / |\mathcal{C}_{v,\alpha}^n(\xv)|$ and $R = 2^{o(n)} \cdot |\mathcal{C}_{v,\alpha}^n(\xv)| / |\mathcal{W}_{v,\alpha}^n(\xv, \yv)|$. 
    
    Then, given a classical, resp.\ quantum, algorithm for $\subr{FindSolutions}$ with expected runtime $T$, there exists a classical, resp.\ quantum, algorithm that solves $\NNS(n,w,N)$ in expected time $$R \cdot (2^{o(n)}\cdot N + |\mathcal{C}_f| \cdot T)$$
    using $R$ calls to Algorithm~\ref{alg: NNS}.
    Moreover, the classical memory is of expected size $2^{o(n)} \cdot N$, and the other memory complexities are the same as for the $\subr{FindSolutions}$ subroutine. 

    In addition, if $4 \cdot {n \choose w}\big/\big({w \choose w/2} {n-w \choose w/2}\big) \leq N \leq {n \choose w}/{2^{n-k}}$ and $k = \Theta(n)$ is such that the binary-sieve heuristic in \cite{DEEK23} holds for $(n,k,w,N)$, then there exists a classical, resp.\ quantum, sieving-based algorithm that solves $\DP(n,k,w,N)$ (Problem~\ref{prob: decoding problem, N sol}) with the same time and memory complexity, up to polylogarithmic factors. 
\end{theorem}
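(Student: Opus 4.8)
The plan is to first bound the cost of a single invocation of Algorithm~\ref{alg: NNS}, then count the number $R$ of repetitions needed to recover a $(1-o(1))$-fraction of all solution pairs, and finally feed the resulting $\NNS$ routine into the sieve (Algorithm~\ref{alg: sieving}) and invoke the binary-sieve heuristic of \cite{DEEK23} to obtain the statement about $\DP$. For the cost of one call to Algorithm~\ref{alg: NNS}: its bucketing phase makes $N$ calls to $\subr{FindValidCenters}$, one per $\xv \in \mathcal{L}$. For a fixed weight-$w$ vector $\xv$ the number of weight-$v$ centers $\cv$ with $|\xv \land \cv| = \alpha$ is $|\Cs_{v,\alpha}^n(\xv)|$ out of $|\Ss_v^n|$ candidates, so the stated choice $|\mathcal{C}_f| = 2^{o(n)}\cdot|\Ss_v^n|/|\Cs_{v,\alpha}^n(\xv)|$ yields $\Expect[|\mathcal{VC}_\alpha(\xv)|] = |\mathcal{C}_f|\cdot|\Cs_{v,\alpha}^n(\xv)|/|\Ss_v^n| = 2^{o(n)}$. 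Using the random-product-code instantiation of $\mathcal{C}_f$ from \cite{DEEK23}, for which $\subr{FindValidCenters}$ runs in time $|\mathcal{VC}_\alpha(\xv)| + 2^{o(n)}$, the bucketing phase costs $2^{o(n)}\cdot N$ in expectation, and the same estimate bounds the total number of (vector, bucket) incidences, so storing $\mathcal{L}$ together with all buckets needs expected classical memory $2^{o(n)}\cdot N$. The checking phase makes $|\mathcal{C}_f|$ calls to $\subr{FindSolutions}$, each of expected cost $T$; processing buckets one at a time keeps only one $\subr{FindSolutions}$ instance live, so the remaining (e.g.\ quantum) memory is exactly that of $\subr{FindSolutions}$. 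Summing, one call to Algorithm~\ref{alg: NNS} has expected cost $2^{o(n)}\cdot N + |\mathcal{C}_f|\cdot T$.

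Next I would handle the repetition factor and correctness. A solution pair $(\xv,\yv)$ with $|\xv+\yv|=w$ is reported by Algorithm~\ref{alg: NNS} exactly when $\xv$ and $\yv$ share a bucket, i.e.\ when $\mathcal{C}_f$ contains a center in $\mathcal{VC}_\alpha(\xv)\cap\mathcal{VC}_\alpha(\yv)$; these are precisely the centers lying in the wedge $\Ws_{v,\alpha}^n(\xv,\yv)$, whose size by \Cref{rem: surface area only depends on weight} depends only on $w$ and $w^\ast = |\xv\land\yv|$ (with $w^\ast=w/2$ for a solution pair). Over the random draw of $\mathcal{C}_f$, a collision occurs with probability $2^{o(n)}\cdot|\Ws_{v,\alpha}^n(\xv,\yv)|/|\Cs_{v,\alpha}^n(\xv)| = 2^{o(n)}/R$ for the stated $R$, where I also use the property of random product codes from \cite{DEEK23} that the filtering events behave like independent ones up to subexponential factors. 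Running Algorithm~\ref{alg: NNS} with $R$ independent fresh random product codes then misses a fixed solution pair with probability $(1-2^{o(n)}/R)^{R}$, and by linearity of expectation the expected fraction of recovered solution pairs is $1-o(1)$ once the subexponential slack hidden in $R$ is chosen large enough. This gives a (classical, resp.\ quantum) algorithm for $\NNS(n,w,N)$ using $R$ calls to Algorithm~\ref{alg: NNS}, of expected time $R\cdot(2^{o(n)}\cdot N + |\mathcal{C}_f|\cdot T)$, with expected classical memory $2^{o(n)}\cdot N$ and all other memory as for $\subr{FindSolutions}$ (the buckets and code of one run are freed before the next).

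Finally I would deduce the $\DP$ statement by instantiating the oracle $\mathcal{A}_{\NNS}$ in Algorithm~\ref{alg: sieving} with this routine. Algorithm~\ref{alg: sieving} runs $n-k=\Theta(n)$ iterations, each consisting of one call to $\mathcal{A}_{\NNS}$ plus post-processing to test membership in $\mathcal{C}_i$ and to truncate, which is subsumed by the $\NNS$ call cost. Under the binary-sieve heuristic of \cite{DEEK23} for $(n,k,w,N)$, in each iteration the current list behaves like $N$ i.i.d.\ uniform samples from $\Ss_w^n$, so the per-call analysis applies verbatim in every iteration; and by \Cref{lem: lower bound on N for NNS} together with the slack $N\ge 4\binom{n}{w}/(\binom{w}{w/2}\binom{n-w}{w/2})$ — which absorbs the two constant-factor losses per iteration (the duplication $(\xv,\yv)$ vs.\ $(\yv,\xv)$, and the single extra code constraint from $\mathcal{C}_{i-1}$ to $\mathcal{C}_i$) — the list stays of size $\ge N$ throughout, so the output list has size exactly $N$ after discarding the surplus, while the upper bound $N\le\binom{n}{w}/2^{n-k}$ ensures the instance is well-defined. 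The total running time is $\Theta(n)$ times the $\NNS$ cost, and since $w,k=\Theta(n)$ every quantity in play is $2^{\Theta(n)}$, so the factor $\Theta(n)$ is polylogarithmic in the running time and the memory is unchanged, giving the claimed complexities up to polylogarithmic factors.

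The step I expect to be the main obstacle is the repetition analysis: showing that, for random product codes, the bucket-collision probability of a solution pair is $2^{o(n)}\cdot|\Ws_{v,\alpha}^n|/|\Cs_{v,\alpha}^n|$ \emph{uniformly} over pairs, with enough independence across the $R$ runs to push the recovered fraction to $1-o(1)$; this is exactly the technical input I would import as a black box from \cite{DEEK23} (and \cite{Car20}). The remaining work is bookkeeping — verifying that each $2^{o(n)}$ term ($\subr{FindValidCenters}$ overhead, expected bucket load, $|\mathcal{C}_f|$, the slack in $R$, the $\Theta(n)$ iteration count) is genuinely subexponential, hence absorbable into a $2^{o(n)}$ or polylogarithmic factor — together with checking that the heuristic legitimately licenses treating each iteration's list as a fresh batch of i.i.d.\ samples.
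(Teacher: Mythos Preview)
Your proposal is correct and follows the same approach as the paper, which in fact does not give a standalone proof but simply notes that the first part follows from \cite[Cor.~8.2.6]{Car20} and \cite[Cor.~4.2]{DEEK23} (allowing for a quantum subroutine) and the second part from \cite[Theorem~3.2]{DEEK23}. Your write-up is a faithful unpacking of precisely those cited arguments---the $2^{o(n)}$ expected size of $\mathcal{VC}_\alpha(\xv)$ from the choice of $|\mathcal{C}_f|$, the wedge-based repetition analysis for $R$, and the invocation of the binary-sieve heuristic for the $\DP$ part---so there is nothing to correct.
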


Note that the first part follows from {\cite[Cor.~8.2.6]{Car20}} and {\cite[Cor.~4.2]{DEEK23}} when allowing for a quantum subroutine, and that the second part follows from \cite[Theorem~3.2]{DEEK23}.

\subsection{Random Product Codes and an Efficient Algorithm for $\subr{FindValidCenters}$}\label{sec: RPC}

To efficiently perform the $\subr{FindValidCenters}$ subroutine, \cite{DEEK23} suggest to let $\mathcal{C}_f$ be a binary linear code for which there exists an efficient decoding algorithm. More precisely, they use the notion of \textit{random product codes}, originally introduced in \cite{BDGL16} for $\R^n$, and defined as follows for $\F_2^n$.   

\begin{remark}
    A similar notion of random product codes and a corresponding decoding algorithm was also independently presented (in French) in \cite[Section~9.1]{Car20} for a related problem.  
\end{remark}

\begin{definition}[Random product code (RPC) in $\F_2^n$]
    An $(n,v,t)$-RPC of size $\kappa$ is an element $\mathcal{C}$ drawn uniformly at random from the set \begin{align*}
        R_{n,v,t, \kappa} := \{\mathcal{C} = \mathcal{C}^{(1)} \times \cdots \times \mathcal{C}^{(t)} \colon \mathcal{C}^{(i)} \subseteq S_{v/t}^{n/t} \text{ such that } \forall i, |\mathcal{C}^{(i)}| = \kappa^{1/t}\}.
    \end{align*}
    We write $\mathcal{C} \sim R_{n,v,t, \kappa}$. 
\end{definition}

\noindent The set of centers is then sampled uniformly at random from $R_{n,v,t,\kappa}$ to guarantee:

\begin{outline}
    \1[(1)] Efficient decodability: in some reasonable parameter regimes, one can compute $\mathcal{VC}_\alpha(\xv) := \{\cv \in \mathcal{C} \colon |\xv \land \cv| = \alpha\}$ in time asymptotically equal to its size $|\mathcal{VC}_\alpha(\xv)|$. 
    \1[(2)] Random behavior: for $t$ not too large, a sample $\mathcal{C}$ behaves like a random code in the sense that the success probability that $\mathcal{C} \sim R_{n,v,t, \kappa}$ `captures' a pair $(\xv,\yv)\in \mathcal{L}^2$ is the same (up to factors subexponential in $n$) as for a random code in $S_v^n$.\footnote{This `randomness' property puts a constraint on $t$. More precisely, as described in \cite{DEEK23} (and in \cite{BDGL16} for $\R^n$), we would like $t$ to be small enough to guarantee that we can approximate a cap/wedge in $\F_2^n$ by the Cartesian product of $t$ caps/wedges in $\F_2^{n/t}$. By Lemma~4.6 and Lemma~4.7 in \cite{DEEK23}, these approximations are satisfied up to a subexponential factor in $n$ if $t = o(n/\log(n))$.} 
\end{outline}

\noindent For more details, we refer to \cite{DEEK23}. As we will refer back to the first property later in this work, we state the corresponding result here (which implicitly follows from the application of \cite[Lemma~4.5]{DEEK23} in the proof of \cite[Theorem~4.4]{DEEK23}).

\begin{theorem}[Efficient decodability of RPC \cite{DEEK23}]\label{thm: properties RPC}
Let $n \in \mathbb{N}$ and let $w, v, \alpha = \Theta(n), \kappa$ be positive integers. For $t = \Theta(\sqrt{n})$, let $\mathcal{C}$ be an $(n,v,t)$-RPC of size exponential in $n$. 
Then there exists a classical algorithm that, for any $\xv \in \Ss_w^n$, computes the set $\mathcal{VC}_\alpha(\xv) := \{\cv \in \mathcal{C} \colon |\xv \land \cv| = \alpha\}$ in time $|\mathcal{VC}_\alpha(\xv)| + 2^{o(n)}$. 
\end{theorem}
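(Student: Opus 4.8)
The plan is to exploit the Cartesian-product structure $\mathcal{C} = \mathcal{C}^{(1)} \times \cdots \times \mathcal{C}^{(t)}$ of the RPC and to reduce the decoding to a per-block bucketing followed by a combinatorial recombination, as in the random-product-code decoding of \cite{BDGL16} and its Hamming-metric counterpart in \cite{DEEK23}. First I would split the coordinate set $\{0,\dots,n-1\}$ into the $t$ consecutive blocks of length $n/t$ underlying the RPC and write $\xv = (\xv^{(1)}, \dots, \xv^{(t)})$ and $\cv = (\cv^{(1)}, \dots, \cv^{(t)})$ accordingly, so that $|\xv \land \cv| = \sum_{i=1}^t |\xv^{(i)} \land \cv^{(i)}|$. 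For each block $i$ and each integer $a$ with $0 \le a \le \min(|\xv^{(i)}|, v/t)$, a single scan of the block code produces the list $L_i(a) := \{\cv^{(i)} \in \mathcal{C}^{(i)} : |\xv^{(i)} \land \cv^{(i)}| = a\}$. This yields the disjoint decomposition
\begin{align*}
\mathcal{VC}_\alpha(\xv) \;=\; \bigsqcup_{\substack{(a_1, \dots, a_t) \in \N^t \\ a_1 + \cdots + a_t = \alpha}} \; L_1(a_1) \times \cdots \times L_t(a_t),
\end{align*}
since codewords arising from two distinct weight-compositions of $\alpha$ differ in at least one block.

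Given this, the algorithm I have in mind enumerates the weight-compositions $(a_1, \dots, a_t)$ of $\alpha$ and, for each composition with $L_i(a_i) \neq \emptyset$ for every $i$, lists the corresponding Cartesian product $L_1(a_1) \times \cdots \times L_t(a_t)$. To bound its cost I would check three things. First, each block code has size $|\mathcal{C}^{(i)}| = |\mathcal{C}|^{1/t} = 2^{\Theta(\sqrt n)} = 2^{o(n)}$ since $|\mathcal{C}|$ is exponential in $n$ and $t = \Theta(\sqrt n)$, so building all the lists $L_i(a)$ costs $2^{o(n)}$. Second, the number of weight-compositions is at most $(\alpha+1)^t = 2^{O(\sqrt n \log n)} = 2^{o(n)}$ because $\alpha = \Theta(n)$ and $t = \Theta(\sqrt n)$, and a single non-emptiness test costs $\mathsf{poly}(n)$, so the enumeration of compositions contributes only $2^{o(n)}$ of overhead. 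Third, the union of all the listed products is exactly $\mathcal{VC}_\alpha(\xv)$, so the total listing cost is $\mathsf{poly}(n)\cdot|\mathcal{VC}_\alpha(\xv)|$. Altogether the running time is $2^{o(n)} + \mathsf{poly}(n)\cdot|\mathcal{VC}_\alpha(\xv)|$, which is $|\mathcal{VC}_\alpha(\xv)| + 2^{o(n)}$ up to polynomial factors, and these are immaterial: they disappear into $2^{o(n)}$ when $|\mathcal{VC}_\alpha(\xv)|$ is subexponential and do not change the exponential growth rate otherwise, which is the level at which these complexities are stated.

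I expect the only genuinely delicate point to be this accounting, i.e.\ verifying that every quantity other than the output size — the number $t$ of blocks, the block-code sizes $|\mathcal{C}|^{1/t}$, the number $(\alpha+1)^t$ of weight-compositions, and the ranges of the partial weights $a_i$ — stays subexponential in $n$ under the hypothesis $t = \Theta(\sqrt n)$, so that nothing pollutes the leading term $|\mathcal{VC}_\alpha(\xv)|$; the remaining verifications are routine. If one insists on removing even the polynomial overhead on $|\mathcal{VC}_\alpha(\xv)|$, the flat enumeration can be organised as a balanced binary merge over the $t$ blocks, of depth $O(\log n)$, where each partial list is pruned by a cheap dynamic program recording which weights can still be completed to total weight $\alpha$; then each internal node holds at most $|\mathcal{VC}_\alpha(\xv)|$ partial codewords, and the per-node overhead telescopes into $|\mathcal{VC}_\alpha(\xv)| + 2^{o(n)}$. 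This is essentially the statement of \cite[Lemma~4.5]{DEEK23}, from which the theorem was said to follow implicitly.
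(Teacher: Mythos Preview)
Your proposal is correct and follows exactly the approach of \cite[Lemma~4.5]{DEEK23} (itself the Hamming-metric analog of \cite{BDGL16}), which is precisely what the paper invokes: it does not give its own proof but states that the result ``implicitly follows from the application of \cite[Lemma~4.5]{DEEK23} in the proof of \cite[Theorem~4.4]{DEEK23}.'' Your accounting of the subexponential terms (block-code sizes $|\mathcal{C}|^{1/t}=2^{\Theta(\sqrt{n})}$, number of compositions $(\alpha+1)^t=2^{O(\sqrt{n}\log n)}$) is exactly the verification needed, and your final remark on the balanced merge is the content of the cited lemma.
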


\section{Quantum Algorithms for NNS and Code Sieving}
\label{sec: quantum speed-ups} 
\label{sec: quantum NNS}
In this section, we present the first quantum algorithms for code sieving.
More precisely, we describe different quantum algorithms for the \subr{FindSolutions} procedure in the NNS subroutine (Algorithm~\ref{alg: NNS}), and analyze the time and memory complexities of our algorithms. 
By plugging our results (i.e., \Cref{thm: bucket search using Grover} and \Cref{thm: bucket search using QW and RPC}) into \Cref{thm: from NNS to DP}, we obtain the time and memory complexity of the resulting quantum algorithms for NNS and, hence, code sieving. 
Our numerical results in Section~\ref{sec: numerical results} illustrate the quantum speed-ups obtained for these resulting algorithms.

We recall the context in which \subr{FindSolutions} is used. That is, consider the successful completion of the bucketing phase of Algorithm~\ref{alg: NNS} for an input list $\mathcal{L} \subseteq \Ss_w^n$: for some bucketing parameter $\alpha$, we have sampled a set $\mathcal{C}_f \subseteq \Ss_v^n$ of bucket centers and created a data structure containing the buckets
\begin{equation} \label{eq:bucket}
    \mathcal{B}_\alpha(\cv) := \{\xv \in \mathcal{L} \colon |\xv \land \cv| = \alpha\} \subseteq \mathrm{Region}_\alpha(\cv)  := \{\xv \in \Ss_w^n : |\xv \wedge \cv|=\alpha\}
\end{equation}
for all $\cv \in \mathcal{C}_f$.
Then the goal of \texttt{FindSolutions} is to find all $\xv,\yv \in \mathcal{B}_\alpha(\cv)$ satisfying $|\xv + \yv| = w$.

Note that if $\mathcal{L} \subseteq \Ss_w^n$ is sampled independently and uniformly at random, then the vectors in $\mathcal{B}_\alpha(\cv)$ are distributed independently and uniformly over $\mathrm{Region}_\alpha(\cv)$.  
Hence, formally, our quantum algorithms for \texttt{FindSolutions} solve the following problem.

\begin{problem}[Bucket search]\label{prob: bucket search} 
    Let $B, n, w, v, \alpha \in \N$ with $w, v \leq n$ and $\alpha \leq \max\{v,w\}$. Let $\cv \in \Ss_v^n$. 
    Given a list $\mathcal{B}_\alpha(\cv)$ of $B$ vectors that are sampled independently and uniformly at random from $\{\xv \in \Ss_w^n : |\xv \wedge \cv|=\alpha\}$, find a $(1-o(1))$-fraction of all pairs $(\xv, \yv) \in \mathcal{B}_\alpha(\cv)^2$ satisfying $|\xv+\yv| = w$ (called \textit{solution pairs}). 
    We say that this problem has \textit{parameters} $(B, n, w, v, \alpha)$. 
\end{problem}

This problem is a variant of the near-neighbor search problem (Problem~\ref{prob: NNS}) defined in Section~\ref{sec: sieving for codes}.
An important difference here is that the input list $\mathcal{B}_\alpha(\cv)$ is not uniformly random on $\Ss_w^n$ but on the set $\{\xv \in \Ss_w^n : |\xv \wedge \cv|=\alpha\}$.  
Consequently, the probability (denoted $p$ below) that a uniformly random pair from the input list forms a solution pair does not only depend on $n,w$, but also on the parameters $v, \alpha$, as further shown in the next section.

\begin{remark}[Expected number of solutions]
    Note that the expected number of solutions to Problem~\ref{prob: bucket search} with parameters $(B, n, w, v, \alpha)$ is ${B \choose 2} p$, where $p$ denotes the probability that a uniformly random pair $\xv, \yv$ in $\mathcal{B}_\alpha(\cv)$ forms a solution pair.
\end{remark}

In the remainder of Section~\ref{sec: quantum speed-ups}, we fix parameters $(B,n,w,v,\alpha)$. 
Before presenting our quantum algorithms for solving Problem~\ref{prob: bucket search}, we calculate the probability $p$ of forming a solution pair, as it is used in the analysis of our algorithms.

\subsection{Probability $p$ of Forming a Solution Pair} \label{sec:p} 

As before, let $p$ denote the probability that a uniformly random pair $\xv, \yv \in \mathcal{B}_\alpha(\cv)$ satisfies $|\xv+\yv|=w$. 
That is,  
\begin{align} 
    p &:= \Pr_{\xv,\yv\in \mathcal{B}_\alpha(\cv)}[|\xv + \yv| = w] = \Pr_{\xv,\yv\in \Ss_w^n}[|\xv + \yv| = w \mid |\xv \land \cv| = \alpha \text{ and } |\yv \land \cv| = \alpha]. \label{eq: p}
\end{align}
We can express $p$ in terms of the parameters $(n, w, v, \alpha)$ as follows. 

\begin{lemma}[Probability $p$]\label{lem: probability p (wedge version)}
    The probability $p$ as defined in \Cref{eq: p} is equal to \begin{align*}
    p(n,w,v,\alpha) = \frac{{w \choose w/2} {n - w \choose w/2}}{ {v \choose \alpha} {n - v \choose w - \alpha} {w \choose \alpha} {n - w \choose v - \alpha}} \cdot |\mathcal{W}_{v,\alpha}^n(\xv,\yv)|
\end{align*} 
where $|\mathcal{W}_{v,\alpha}^n(\xv,\yv)| = \sum_{e = \max(0, 2\alpha - v)}^{\min(\alpha, w/2)} {w/2 \choose e} {w/2 \choose \alpha - e}^2 {n - 3w/2 \choose v - 2\alpha + e}$ is the surface area of a wedge for arbitrary $\xv, \yv \in \Ss_w^n$ with $|\xv + \yv| = w$.\footnote{Recall (see Remark~\ref{rem: surface area only depends on weight}) that $\mathcal{W}_{v,\alpha}^n(\xv,\yv)$ does not depend on the specific choice of $\xv$ and $\yv$, but only on $|\xv|, |\yv|, |\xv + \yv|$. Consequently, the probability $p$ is independent of the particular selection of $\xv$ and $\yv$.} 
\end{lemma}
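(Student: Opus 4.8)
The plan is to compute $p$ by a direct counting argument over $\Ss_w^n$, conditioning on the filtering constraints. Since $\xv, \yv$ are sampled independently and uniformly from $\mathcal{B}_\alpha(\cv)$, and the elements of $\mathcal{B}_\alpha(\cv)$ are uniform over $\mathrm{Region}_\alpha^n(\cv) = \{\xv \in \Ss_w^n : |\xv \land \cv| = \alpha\}$, the rightmost expression in \Cref{eq: p} applies: $p$ is the probability that two uniform elements of $\Ss_w^n$ satisfy $|\xv + \yv| = w$ conditioned on both lying in $\mathrm{Region}_\alpha^n(\cv)$. Writing this as a ratio of counts, the denominator is $|\mathrm{Region}_\alpha^n(\cv) \cap \Ss_w^n|^2 = |\mathcal{C}_{w,\alpha}^n(\cv)|^2$, which by the spherical-cap surface-area formula equals $\big({v \choose \alpha}{n-v \choose w-\alpha}\big)^2$ (using $|\cv| = v$). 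The numerator counts ordered pairs $(\xv,\yv)$ with $\xv,\yv \in \mathcal{C}_{w,\alpha}^n(\cv)$ and $|\xv+\yv| = w$.

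To count the numerator, I would fix $\xv \in \mathcal{C}_{w,\alpha}^n(\cv)$ first — there are $|\mathcal{C}_{w,\alpha}^n(\cv)| = {v \choose \alpha}{n-v \choose w-\alpha}$ such choices — and then count the $\yv \in \Ss_w^n$ with $|\yv \land \cv| = \alpha$ and $|\xv + \yv| = w$. By \Cref{lem: lower bound on N for NNS} (or the observation in its proof), for $\xv,\yv \in \Ss_w^n$ the condition $|\xv+\yv| = w$ is equivalent to $|\xv \land \yv| = w/2$. So I need to count $\yv \in \Ss_w^n$ with $|\xv \land \yv| = w/2$ and $|\yv \land \cv| = \alpha$; equivalently, $\yv$ must lie in the intersection of the cap $\mathcal{C}_{w,w/2}^n(\xv)$ and the cap $\mathcal{C}_{w,\alpha}^n(\cv)$, i.e., $\yv \in \Ss_w^n \cap \mathrm{Region}_{w/2}^n(\xv) \cap \mathrm{Region}_\alpha^n(\cv)$. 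This is exactly a spherical wedge, but one whose two centers $\xv$ and $\cv$ have \emph{different} weights ($w$ and $v$) — so I cannot directly cite \Cref{lem: surface area wedge}, which assumes equal-weight centers. By \Cref{rem: surface area only depends on weight}, though, the count depends only on $|\xv|=w$, $|\cv|=v$, and the overlap $|\xv \land \cv| = \alpha$; denote it $W$. I would either re-derive $W$ by an inclusion-summation over $e := |\xv \land \cv \land \yv|$ analogous to the proof of \Cref{lem: surface area wedge}, or — more cleanly — observe that by the weight-only dependence I may instead put $\cv$ at a canonical position and run the symmetric argument with the roles swapped, so that the "$w$-weight center" and the "$w/2$-overlap" line up with \Cref{lem: surface area wedge}'s hypotheses.

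The cleanest route is the second: since neither the numerator nor the denominator depends on the specific placement of $\xv, \yv, \cv$ beyond their weights and overlaps, I fix instead a reference pair $\xv_0, \yv_0 \in \Ss_w^n$ with $|\xv_0 + \yv_0| = w$ (equivalently $|\xv_0 \land \yv_0| = w/2$), and count centers $\cv \in \Ss_v^n$ with $|\xv_0 \land \cv| = |\yv_0 \land \cv| = \alpha$: that count is precisely $|\mathcal{W}_{v,\alpha}^n(\xv_0,\yv_0)|$, given by \Cref{lem: surface area wedge} with $w^\ast = w/2$, which is the displayed sum $\sum_{e} {w/2 \choose e}{w/2 \choose \alpha-e}^2{n-3w/2 \choose v-2\alpha+e}$ (using $2w - w^\ast = 3w/2$). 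Then I would assemble $p$ as a ratio of counts over the sample space of triples $(\xv,\yv,\cv)$ with $\xv,\yv$ uniform on $\Ss_w^n$ and the conditioning event $\{|\xv \land \cv| = |\yv \land \cv| = \alpha\}$: symmetrically, $p$ equals the number of triples with the additional property $|\xv+\yv|=w$ divided by the number of triples with just the conditioning event. Counting by fixing $\cv$ first and then $\xv, \yv$, each factor factors through independent cap counts, and one arrives at
\[
p = \frac{{w \choose w/2}{n-w \choose w/2} \cdot |\mathcal{W}_{v,\alpha}^n(\xv,\yv)| \big/ {n \choose w}^2}{{v \choose \alpha}{n-v \choose w-\alpha} \cdot {w \choose \alpha}{n-w \choose v-\alpha} \big/ {n \choose w}^2},
\]
after which the ${n \choose w}^2$ cancel and we recover the claimed formula; here ${w \choose w/2}{n-w \choose w/2}$ counts, for fixed $\xv$, the number of $\yv$ with $|\xv+\yv|=w$, and ${v \choose \alpha}{n-v \choose w-\alpha}$ is $|\mathcal{C}_{w,\alpha}^n(\cv)|$ for fixed $\cv$ (center weight $v$), while ${w \choose \alpha}{n-w \choose v-\alpha}$ is $|\mathcal{C}_{v,\alpha}^n(\xv)|$ for fixed $\xv$ (center weight $w$). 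The main obstacle is purely bookkeeping: making sure the normalization is done consistently — i.e., deciding once and for all whether to count triples with $\cv$ fixed first or $\xv$ first, and not mixing cap formulas for centers of weight $v$ with those of weight $w$ — and confirming via \Cref{rem: surface area only depends on weight} that the wedge count is insensitive to which vector plays which role, so that \Cref{lem: surface area wedge} with $w^\ast = w/2$ is legitimately applicable.
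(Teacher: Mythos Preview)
Your approach is correct and essentially the same as the paper's: both pivot from the asymmetric count (fixing $\xv$ and counting $\yv$, which would need a wedge with unequal-weight centers) to the symmetric one (fixing a solution pair $(\xv,\yv)$ and counting centers $\cv \in \Ss_v^n$, which is exactly $|\mathcal{W}_{v,\alpha}^n(\xv,\yv)|$ from \Cref{lem: surface area wedge} with $w^\ast = w/2$), and then assemble $p$ by double-counting triples --- the paper just packages this as an explicit Bayes' rule step together with a separate lemma showing $\Pr[|\xv\land\cv|=|\yv\land\cv|=\alpha \mid |\xv+\yv|=w] = |\mathcal{W}_{v,\alpha}^n(\xv,\yv)|/{n\choose v}$, followed by the same binomial identity ${n\choose v}{v\choose\alpha}{n-v\choose w-\alpha} = {n\choose w}{w\choose\alpha}{n-w\choose v-\alpha}$ you implicitly use. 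One bookkeeping caveat you yourself anticipated: in your displayed ratio the normalizations are inconsistent --- the numerator as a probability over triples should carry $1/\big({n\choose w}{n\choose v}\big)$, and the denominator should be $\big({v\choose\alpha}{n-v\choose w-\alpha}\big)^2/{n\choose w}^2$ --- but both sides are off by the same factor ${n\choose v}/{n\choose w}$, so the final expression is correct.
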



Before proceeding with the proof of \Cref{{lem: probability p (wedge version)}}, we state the following lemma.\footnote{In the following proofs, we will use the fact that for all $a,b,c,d \in \N$ (for which the binomial coefficients are well-defined),  ${a \choose b} {b \choose d} {a-b \choose c-d} = {a \choose c} {c \choose d} {a-c \choose b-d}$.} 
Also, it might help to keep Figure~\ref{fig: overlap x,y,c} in mind. 

\begin{figure}[h!]
    \centering
    \begin{tikzpicture}[scale=0.5]
    \fill[blue!20] (0,4)--(4,4)--(4,5)--(0,5)--cycle ; 
    \draw[decorate,decoration={brace,amplitude=3pt,mirror,raise=4ex}] (0,5)--(4,5) node[midway,yshift=-2.5em]{$w$};
    
    \fill[blue!20] (2,0)--(6,0)--(6,1)--(2,1)--cycle ; 
    \draw[decorate,decoration={brace,amplitude=3pt,mirror,raise=4ex}] (2,1)--(4,1) node[midway,yshift=-2.5em]{$w/2$};
    \draw[decorate,decoration={brace,amplitude=3pt,mirror,raise=4ex}] (4,1)--(6,1) node[midway,yshift=-2.5em]{$w/2$}; 

    \fill[blue!20] (0,2)--(0.6,2)--(0.6,3)--(0,3)--cycle ; 
    \draw[decorate,decoration={brace,amplitude=3pt,mirror,raise=4ex}] (0,2+1)--(0.6,2+1) node[midway,yshift=-2.5em]{$\alpha-e$}; 
    
    \fill[blue!20] (3,2)--(4,2)--(4,3)--(3,3)--cycle ; 
    \draw[decorate,decoration={brace,amplitude=3pt,mirror,raise=4ex}] (3,2+1)--(4,2+1) node[midway,yshift=-2.5em]{$e$}; 
    
    \fill[blue!20] (5.4,2)--(6,2)--(6,3)--(5.4,3)--cycle ; 
    \draw[decorate,decoration={brace,amplitude=3pt,mirror,raise=4ex}] (5.4,2+1)--(6,2+1) node[midway,yshift=-2.5em]{$\alpha-e$}; 

    \fill[blue!20] (7.7,2)--(10.5,2)--(10.5,3)--(7.7,3)--cycle ; 
    \draw[decorate,decoration={brace,amplitude=3pt,mirror,raise=4ex}] (7.7,2+1)--(10.5,2+1) node[midway,yshift=-2.5em]{$v-2\alpha+e$};

    \draw (0,4)--(10.5,4)--(10.5,5)--(0,5)--cycle;
    \draw (0,2)--(10.5,2)--(10.5,3)--(0,3)--cycle;
    \draw (0,0)--(10.5,0)--(10.5,1)--(0,1)--cycle;
    \draw (-0.5,4.5) node[left] {$\textbf{x}=$};
    \draw (-0.5,2.5) node[left] {$\textbf{c}=$};
    \draw (-0.5,0.5) node[left] {$\textbf{y}=$};

    \draw[black!40,dashed] (0.6,4)--(0.6,5);
    \draw[black!40,dashed] (0.6,0)--(0.6,1);
    \draw[black!40,dashed] (2,2)--(2,3);\draw[black!40,dashed] (2,4)--(2,5);\draw[black!40,dashed] (3,4)--(3,5);
    \draw[black!40,dashed] (3,0)--(3,1);
    \draw[black!40,dashed] (4,0)--(4,1);
    \draw[black!40,dashed] (6,4)--(6,5);
    \draw[black!40,dashed] (5.4,4)--(5.4,5);
    \draw[black!40,dashed] (5.4,0)--(5.4,1);
    \draw[black!40,dashed] (7.7,0)--(7.7,1);
    \draw[black!40,dashed] (7.7,4)--(7.7,5);
\end{tikzpicture}
    \caption{An example of the overlaps between $\xv,\yv \in \Ss_w^n$ and $\cv \in \Ss_v^n$, where $|\xv + \yv| = w$, $|\xv \land \cv| = |\yv \land \cv| = \alpha$, and $|\xv \land \yv \land \cv| = e$. All cases of such vectors under these conditions are permutations of this example.}
    \label{fig: overlap x,y,c}
\end{figure}

\begin{lemma}\label{lem: probability random solution is in fixed bucket}
    For all $\cv \in \Ss_v^n$ and all $\xv',\yv' \in S_w^n$ satisfying $|\xv' + \yv'| = w$, we have  \begin{align*}
        \Pr_{\xv,\yv \in_R \Ss_w^n}[|\xv \land \cv| = \alpha \text{ and } |\yv \land \cv| = \alpha \ \big| \ |\xv \land \yv| = w/2] = \frac{|\mathcal{W}_{v,\alpha}^n(\xv',\yv')|}{|\Ss_v^n|}.
    \end{align*}
\end{lemma}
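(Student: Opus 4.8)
\emph{Proof strategy.} The plan is to exploit the coordinate-permutation symmetry of the Hamming space. Let $G$ denote the group of permutations of the coordinate set $\{0,\dots,n-1\}$, acting on $\F_2^n$ in the obvious way. Fix $\cv \in \Ss_v^n$ and an arbitrary pair $\xv',\yv' \in \Ss_w^n$ with $|\xv'+\yv'| = w$; recall that for weight-$w$ vectors this condition is equivalent to $|\xv'\land\yv'| = w/2$, since $|\xv+\yv| = 2w - 2|\xv\land\yv|$ for $\xv,\yv \in \Ss_w^n$. Set $P := \{(\xv,\yv)\in(\Ss_w^n)^2 : |\xv\land\yv| = w/2\}$. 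Conditioned on $|\xv\land\yv| = w/2$, the pair $(\xv,\yv)$ drawn in the statement is uniform over $P$, so the left-hand side equals $|P|^{-1}\cdot|\{(\xv,\yv)\in P : |\xv\land\cv| = |\yv\land\cv| = \alpha\}|$. Rather than count this numerator directly, I would re-randomize $(\xv,\yv)$ through the action of $G$.

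First I would check that $G$ acts transitively on $P$: any $(\xv,\yv)\in P$ partitions $\{0,\dots,n-1\}$ into the four blocks $\mathrm{supp}(\xv)\cap\mathrm{supp}(\yv)$, $\mathrm{supp}(\xv)\setminus\mathrm{supp}(\yv)$, $\mathrm{supp}(\yv)\setminus\mathrm{supp}(\xv)$ and the remaining coordinates, of sizes $w/2, w/2, w/2, n-3w/2$ respectively; since these sizes do not depend on the chosen element of $P$, a permutation matching up corresponding blocks carries any element of $P$ to any other. Transitivity then implies that $\pi\mapsto(\pi\xv',\pi\yv')$ pushes the uniform distribution on $G$ forward to the uniform distribution on $P$ (each fibre has size $|G|/|P|$), so the left-hand side equals
\[
\Pr_{\pi \in_R G}\big[\, |\pi\xv'\land\cv| = \alpha \ \text{ and } \ |\pi\yv'\land\cv| = \alpha \,\big].
\]

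Next I would transfer the permutation onto $\cv$: since a coordinate permutation preserves the weight of a bitwise AND, $|\pi\xv'\land\cv| = |\xv'\land\pi^{-1}\cv|$ and likewise for $\yv'$, so the probability above equals $\Pr_{\pi\in_R G}[\,\pi^{-1}\cv\in\mathcal{W}_{v,\alpha}^n(\xv',\yv')\,]$. Finally, $G$ also acts transitively on $\Ss_v^n$ (any two weight-$v$ vectors differ by a coordinate permutation), so $\pi^{-1}\cv$ is uniform over $\Ss_v^n$ as $\pi$ ranges uniformly over $G$; the last probability is therefore $|\mathcal{W}_{v,\alpha}^n(\xv',\yv')|/|\Ss_v^n|$, as claimed. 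The only delicate point is the block-counting used to establish transitivity on $P$ — which, incidentally, re-confirms that the answer is independent of the chosen representative $\xv',\yv'$, consistent with Remark~\ref{rem: surface area only depends on weight} — and this is precisely why I would route the argument through symmetry rather than attempt the direct combinatorial count of the numerator.
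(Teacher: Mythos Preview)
Your argument is correct and takes a genuinely different route from the paper. The paper proves the lemma by a direct combinatorial count: it writes the conditional probability as a ratio of cardinalities, expands the numerator as a sum over $e = |\xv\land\yv\land\cv|$ of products of binomials, and then applies three instances of the identity $\binom{a}{b}\binom{b}{d}\binom{a-b}{c-d}=\binom{a}{c}\binom{c}{d}\binom{a-c}{b-d}$ to massage the resulting expression into $|\mathcal{W}_{v,\alpha}^n(\xv',\yv')|/\binom{n}{v}$. You instead exploit the transitive action of $S_n$ on the set $P$ of admissible pairs to replace the random pair $(\xv,\yv)$ by $(\pi\xv',\pi\yv')$ with $\pi$ uniform, then transfer the permutation onto $\cv$ and invoke transitivity on $\Ss_v^n$.

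Your symmetry argument is cleaner and more conceptual: it explains \emph{why} the answer is a wedge volume without any computation, and it makes the independence from the particular choice of $\xv',\yv',\cv$ evident from the outset rather than emerging only after the binomial manipulations. The paper's direct count, while more laborious, is entirely elementary and avoids having to verify the group-theoretic facts (transitivity on $P$, equivariance of $|\cdot\land\cdot|$, and that the pushforward along an orbit map is uniform). Both are valid; yours is the more elegant of the two.
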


\begin{proof}
For fixed $\cv \in \Ss_v^n$ and $\xv',\yv' \in S_w^n$ satisfying $|\xv' + \yv'| = w$, the stated probability is equal to \begin{align*}
        &\, \frac{|\{(\xv,\yv) \in (\Ss_w^n)^2 \colon |\xv \land \cv| = \alpha, |\yv \land \cv| = \alpha, |\xv \land \yv| = w/2\}|}{|\{(\xv,\yv) \in (\Ss_w^n)^2 \colon |\xv \land \yv| = w/2\}|}  \\
        =&\, \frac{\sum_e {v \choose \alpha} {n-v \choose w-\alpha} {\alpha \choose e} {v-\alpha \choose \alpha -e} {w-\alpha \choose w/2 -e} {n - (w + v - \alpha) \choose w/2 - \alpha + e}}{{n \choose w} {w \choose w/2} {n-w \choose w/2}}
\end{align*}
where $e$ ranges over all possible values of $|\xv \land \yv \land \cv|$. Now, using the equalities $\frac{{v \choose \alpha} {n-v \choose w-\alpha}}{{n \choose w}} = \frac{{w \choose \alpha} {n-w \choose v-\alpha}}{{n \choose v}}$, ${w \choose \alpha} {\alpha \choose e} {w - \alpha \choose w/2 - e} = {w \choose w/2} {w/2 \choose e} {w - w/2 \choose \alpha - e}$, and ${n - w \choose v - \alpha} {v- \alpha \choose \alpha - e} {(n - w) - (v - \alpha) \choose w/2 - (\alpha - e)} = {n - w \choose w/2} {w/2 \choose \alpha - e} {(n - w) - w/2 \choose (v - \alpha) - (\alpha -  e)}$, it can be seen that this indeed equals $\frac{|\mathcal{W}_{v,\alpha}^n(\xv',\yv')|}{{n \choose v}}$. 
\qed
\end{proof}

\begin{proof}[Proof of Lemma~\ref{lem: probability p (wedge version)}] 
By definition, \begin{align*}
    p &= \Pr_{\xv,\yv\in_R \Ss_w^n}[|\xv \land \yv| = w/2 \mid |\xv \land \cv| =  \alpha \text{ and } |\yv \land \cv| = \alpha] \\
    &= \Pr_{\xv,\yv\in_R \Ss_w^n}[ |\xv \land \cv| = \alpha \text{ and } |\yv \land \cv| = \alpha \mid |\xv \land \yv| = w/2 ] \cdot \frac{\Pr_{\xv,\yv\in_R \Ss_w^n}[ |\xv \land \yv| = w/2 ]}{\Pr_{\xv,\yv\in_R \Ss_w^n}[ |\xv \land \cv| = |\yv \land \cv| = \alpha]} \\
    &= \frac{|\mathcal{W}_{v,\alpha}^n(\xv',\yv')|}{{n \choose v}} \cdot \frac{\Pr_{\xv,\yv\in_R \Ss_w^n}[ |\xv \land \yv| = w/2 ]}{\Pr_{\xv,\yv\in_R \Ss_w^n}[ |\xv \land \cv| = |\yv \land \cv| = \alpha]}
\end{align*}
for arbitrary $\xv',\yv' \in S_w^n$ satisfying $|\xv' + \yv'| = w$. Note that the last equality follows from Lemma~\ref{lem: probability random solution is in fixed bucket}. 
Since $\Pr_{\xv,\yv\in_R \Ss_w^n}[ |\xv \land \yv| = w/2 ] = \frac{{n \choose w} {w \choose w/2} {n-w \choose w/2}}{{n \choose w}^2}$ and  $\Pr_{\xv,\yv\in_R \Ss_w^n}[ |\xv \land \cv| = \alpha \text{ and } |\yv \land \cv| = \alpha] = \frac{{v \choose \alpha}^2 {n-v \choose w-\alpha}^2}{{n \choose w}^2}$, it follows that \begin{align*}
    p = \frac{|\mathcal{W}_{v,\alpha}^n(\xv',\yv')|}{{n \choose v}} \frac{{n \choose w} {w \choose w/2} {n-w \choose w/2}}{{v \choose \alpha}^2 {n-v \choose w-\alpha}^2}. 
\end{align*}
Notice that ${n \choose v} {v \choose \alpha} {n-v \choose w-\alpha} = {n \choose w} {w \choose \alpha} {n-w \choose v-\alpha}$, hence the desired result follows. 
\qed
\end{proof}

\subsection{Quantum Algorithm for \subr{FindSolutions} Using Grover's Algorithm} \label{sec:grover}

As a warm-up, we give a straightforward quantum algorithm for solving Problem~\ref{prob: bucket search} (and thus \subr{FindSolutions}) using Grover's algorithm \cite{Gro96}. 
This will be used as a baseline to compare our more advanced algorithms with. 

\begin{theorem}[\subr{FindSolutions} using Grover]\label{thm: bucket search using Grover}
    Let $(B, n, w, v, \alpha)$ be well-defined parameters for Problem~\ref{prob: bucket search} and let $p = p(n,w,v,\alpha)$ be according to \Cref{lem: probability p (wedge version)}. 
    Then there exists a quantum algorithm that solves Problem~\ref{prob: bucket search} with parameters $(B, n, w, v, \alpha)$ in expected time $\Tilde{O}(B^2 \sqrt{p})$ using classical memory and QRACM of expected size $M_C = M_{QRACM} = B$ and quantum memory of expected size $M_Q = n^{O(1)}$. 
\end{theorem}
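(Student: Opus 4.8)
The plan is to solve Problem~\ref{prob: bucket search} by a Grover search over the pairs of the bucket, exploiting that the bucket $\mathcal{B}_\alpha(\cv) = (\xv_0, \dots, \xv_{B-1})$ contains only $\binom{B}{2}$ unordered pairs, of which an expected $\binom{B}{2}p$ form a solution pair. Since Grover's algorithm finds all $t$ solutions among $M$ items in about $\sqrt{Mt}$ queries, this gives an expected query count of roughly $\sqrt{\binom{B}{2}\cdot\binom{B}{2}p} = \tilde{O}(B^2\sqrt{p})$, a $\sqrt{p}$-factor improvement over the classical exhaustive search over all pairs.

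Concretely, I would store the bucket in classical memory with quantum random access (QRACM), using $B$ words of $n$ bits, and observe that the predicate ``$|\xv_i + \xv_j| = w$'' on an index pair $(i,j) \in [B]^2$ is computed by two QRACM look-ups, a bitwise XOR of two $n$-bit strings, a Hamming-weight computation, and a comparison, i.e.\ by $n^{O(1)}$ gates using $n^{O(1)}$ qubits of workspace; with the two index registers this gives $M_Q = n^{O(1)}$. I would then enumerate the solution pairs one at a time, maintaining a classical list $F$ (with quantum read access) of the pairs found so far: repeatedly run Grover's algorithm (\Cref{thm: Grover}) on the predicate ``$(i,j)$ is a solution pair and $(i,j)\notin F$'', append the returned pair to $F$, and stop once no new pair is returned. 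By \Cref{thm: Grover}, locating the $j$-th of $t$ solutions among $\binom{B}{2}$ pairs (with $t-j+1$ still present) costs $\tilde{O}(\sqrt{\binom{B}{2}/(t-j+1)})$ queries, and the estimate of the solution count needed in that step is available from \Cref{lem: probability p (wedge version)}. Summing over all $t$ solutions and the final ``no more solutions'' check gives $\tilde{O}(B\sqrt{t} + B)$ queries, hence $\tilde{O}(B\sqrt{t} + B)$ gates (the $n^{O(1)}$ cost per oracle call is absorbed into $\tilde{O}(\cdot)$). Taking the expectation over the random bucket and using concavity of the square root (so that $\mathbb{E}[\sqrt{t}] \le \sqrt{\mathbb{E}[t]} = \sqrt{\binom{B}{2}p}$) yields expected time $\tilde{O}(B\sqrt{\binom{B}{2}p} + B) = \tilde{O}(B^2\sqrt{p})$, where the additive $B$ is absorbed because a non-degenerate instance has $\binom{B}{2}p = \Omega(1)$, i.e.\ $p = \Omega(1/B^2)$. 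Amplifying each Grover call to failure probability $2^{-\Theta(n)}$ costs an $O(n)$ multiplicative overhead absorbed into $\tilde{O}(\cdot)$, and by a union bound over the $O(B^2)$ calls the procedure then returns every solution pair --- a fortiori a $(1-o(1))$-fraction --- with overwhelming probability.

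The one delicate point is the memory bound $M_C = B$: the list $F$ can reach size $\binom{B}{2}p$, which may exceed $B$. To keep $M_C = O(B)$ in that case I would instead loop classically over the $B$ choices of the first coordinate $\xv$ and, for each, run the above enumeration over its at most $B$ candidate partners, flushing the at most $B$ partners found to the output before continuing; a second application of $\mathbb{E}[\sqrt{t_\xv}] \le \sqrt{\mathbb{E}[t_\xv]} = \sqrt{(B-1)p}$ bounds the total expected time by $\tilde{O}(\sum_\xv \sqrt{B(t_\xv + 1)}) = \tilde{O}(B^{3/2} + B^2\sqrt{p})$, which equals $\tilde{O}(B^2\sqrt{p})$ precisely in the regime $\binom{B}{2}p \ge B$ that forced this branch, while for $\binom{B}{2}p \le B$ the single-Grover version above already keeps $|F| = O(B)$. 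I expect the main obstacle to be carrying out this expectation analysis rigorously --- in particular, verifying that running Grover with the fixed estimate rather than the random true count $t$ still gives the stated \emph{expected} per-search cost, and that the concavity step is unaffected by the (mild) dependence among the counts $t_\xv$, since $\mathbb{E}[\sqrt{t_\xv}] \le \sqrt{\mathbb{E}[t_\xv]}$ holds for each fixed $\xv$ irrespective of that dependence.
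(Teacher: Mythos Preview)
Your proposal is correct and takes essentially the same approach as the paper: repeated Grover search over the $\binom{B}{2}$ pairs in the bucket, yielding expected time $\tilde{O}(\sqrt{\binom{B}{2}\cdot t}) = \tilde{O}(B^2\sqrt{p})$ where $t = \binom{B}{2}p$ is the expected number of solutions. The paper's proof is a four-sentence sketch that invokes a coupon-collector argument (run Grover with the full solution set each time and collect until all $t$ are seen), whereas you maintain an exclusion list $F$ and shrink the solution set at each step; both are standard and give the same $\tilde{O}(\sqrt{Mt})$ total cost.

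Your treatment is in fact more careful than the paper's on two points it glosses over. First, you explicitly handle the expectation over the random bucket via concavity of the square root, whereas the paper simply plugs in the expected value of $t$. Second, you notice that storing the found solutions (needed either for your exclusion test or for the paper's duplicate check in coupon collecting) can exceed the claimed $M_C = B$ when $\binom{B}{2}p > B$, and you give a clean fix by looping classically over the first coordinate. The paper does not address this; presumably it either does not count output-side storage against $M_C$ or implicitly assumes the regime $\binom{B}{2}p = O(B)$. Your worry about running Grover with an estimated rather than exact solution count is not a real obstacle: the variant of Grover from \cite{BBHT98} that the paper cites in \Cref{thm: Grover} already handles unknown $t$ with the same expected query bound.
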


\begin{proof}
    The statement follows from repeatedly applying Grover's algorithm (Theorem~\ref{thm: Grover}) with the set of all ${B \choose 2}$ pairs as the search space. The expected number of solutions is $t:= {B \choose 2} p$. Using a coupon-collector argument, it follows that all solutions can be found in expected time $\Tilde{O}(\sqrt{t {B \choose 2}} ) = \Tilde{O}(B^2 \sqrt{p})$.  
    The $B$ list vectors are classically stored, and Grover's algorithm uses quantum access to them (QRACM); it also uses a polynomial (in $n$) number of qubits.
\qed
\end{proof}

This Grover-based quantum algorithm can be used as the subroutine \subr{FindSolutions} in \Cref{thm: from NNS to DP} to obtain a quantum algorithm for NNS, and thus for code sieving. 

\subsection{Quantum Algorithm for \subr{FindSolutions} Using Quantum Walks} \label{sec:quwalk}

We will now replace Grover's algorithm with quantum-walk techniques. In particular, we describe a $\subr{FindSolutions}$ subroutine that searches for solution pairs ($\xv, \yv$ such that $|\xv + \yv| = w$) inside a given list $\mathcal{B}_{\alpha}(\mathbf{c})$ using a quantum walk. 
To speed up the search for solution pairs, we add a layer of locality-sensitive filtering (LSF) as part of the data of each vertex. 
This idea was inspired by \cite{AC:ChaLoy21}, and we show that we can indeed transfer most of their ideas from the Euclidean metric to the Hamming metric.
We analyze the complexity of the resulting quantum walk, and consider a version of the walk where we apply the `sparsification' technique from \cite{AC:ChaLoy21}, resulting in a different balance of the complexity parameters.

\paragraph{MNRS-style quantum walk.} 
We use the quantum-walk framework from \cite{MNRS11} (see \Cref{sec: quantum prelims}). Our quantum walk is defined on the Johnson graph $J(B, s)$, for some parameter $s \leq B$ that must be carefully chosen later. The vertices of the graph are identified with the size-$s$ subsets of the input list $\mathcal{B}_\alpha(\cv)$, and two vertices are adjacent if and only if the corresponding subsets differ in exactly one element. 
The goal of the quantum walk is to return a \textit{marked vertex}, which is a vertex containing a solution pair. As aforementioned, we add some additional \textit{data} to each vertex to speed up the search for marked vertices. 

We briefly sketch the inner steps of the walk (for more details see \Cref{sec: proof of bucket search using QW}). 
The set-up step (of cost $\mathsf{S}$) constructs a uniform superposition over all vertices and their corresponding data. 
In the update step (of cost $\mathsf{U}$), the current vertex (i.e., a size-$s$ subset of vectors) is mapped to a uniform superposition of its neighbors. During the update step, the algorithm also keeps track of whether it has encountered a solution pair, which will be facilitated by the data associated to the vertices. As a result, the checking step is immediate (and thus has cost $\mathsf{C} = \Tilde{O}(1)$). 
Writing $\epsilon$ for the fraction of marked vertices and $\delta$ for the spectral gap of the graph, the quantum walk returns a marked vertex in time $\mathsf{S}+\frac{1}{\sqrt{\epsilon}}(\frac{1}{\sqrt{\delta}}\mathsf{U}+\mathsf{C})$. 
Each run of our quantum walk finds one solution to Problem~\ref{prob: bucket search}, so we repeat it until a $(1-o(1))$-fraction of all solution pairs is found.\footnote{The reader might notice we do not apply the reusable quantum-walk techniques from \cite{BCSS23}. The reason is that our numerical experiments do not show a significant speed-up, see Remark~\ref{rem: reusable walk}.}

\paragraph{Adding a layer of LSF.}  
We will now describe the data added to each vertex of the Johnson graph $J(B, s)$ to facilitate the detection of marked vertices during the update step. The first part of the update step can essentially be viewed as a map between two adjacent vertices (which in $J(B,s)$ differ by exactly one element): given a subset $S \subseteq \mathcal{B}_\alpha(\cv)$, a neighbor $S'$ of $S$ is obtained by removing a vector $\xv_{old}$ from $S$ and adding a vector $\xv_{new}$ from $\mathcal{B}_\alpha(\cv) \setminus S$. 
In the second part of the update step, the algorithm checks whether the newly added vector $\xv_{new}$ forms a solution pair with one of the non-removed vertices (i.e., those in $S \setminus \{\xv_{old}\}$). This check can be performed by simply applying Grover, but we achieve better complexities if we add a layer of locality-sensitive filtering.

More precisely, we identify the input list $\mathcal{B}_{\alpha}(\cv) \subseteq S_w^n$ with the list $\mathcal{L}' = \{\pi_{\cv}(\xv)  \colon \xv \in \mathcal{B}_{\alpha}(\cv)\} \subseteq \Ss_{\alpha}^v$, where $\pi_{\cv}(\xv) := \xv \land \cv$ is the $v$-dimensional vector obtained by projecting $\xv$ onto the support of $\cv$.\footnote{Note that $\xv \land \cv$ is actually an $n$-dimensional vector. With abuse of notation, we sometimes view it as a $v$-dimensional vector. The dimension should be clear from the context.} As further detailed in \Cref{sec: reduction to a variant of NNS}, this allows us to apply the LSF techniques from \Cref{sec: NNS using LSF} and \Cref{sec: RPC}. 
In particular, at the start of our algorithm, we sample an RPC $\mathcal{C}_f' \subseteq S_{v'}^v$ for some parameter $v'$. Each vector $\cv' \in \mathcal{C}_f'$ forms the center of a bucket $\mathcal{B}_{\beta}(\mathbf{c}')$ for some bucketing parameter $\beta$. However, these buckets will only be defined per vertex: for a vertex corresponding to subset $S$, we fill $\mathcal{B}_{\beta}(\mathbf{c}')$ with the vectors in $S$ that are in a certain sense `close' to $\cv'$ with respect to the parameter $\beta$. Specifically, we let 
    \begin{equation*} 
        \mathcal{B}_{\beta}(\mathbf{c}') := \{\xv \in S \colon |\pi_{\cv}(\xv) \land \cv'| = \beta\} 
    \end{equation*}
The filled buckets $(\mathcal{B}_{\beta}(\mathbf{c}'))_{\cv' \in \Cs_f'}$ are then added as part of the data of the vertex corresponding to $S$.

Consequently, instead of checking for solution pairs within a vertex $S$, the algorithm will only check for solution pairs within the $|\mathcal{C}_f'|$ $\beta$-buckets. This may induce some false negatives, but this can be controlled by choosing $\beta$ carefully. 
Just like the parameter $\alpha$ was used to quantify the probability that two vectors in an $\alpha$-bucket $\mathcal{B}_{\alpha}(\cv)$ form a solution to Problem~\ref{prob: NNS} (NNS), the parameter $\beta$ quantifies the probability that two vectors in a $\beta$-bucket $\mathcal{B}_{\beta}(\cv')$ form a solution to Problem~\ref{prob: bucket search} (bucket search).
By choosing the parameters (including $\beta$) carefully, it allows us to obtain speed-ups, which is further demonstrated by our numerical results in \Cref{sec: numerical results}.

\

To state our main result on our quantum-walk algorithm, we need the following notions. 
For fixed $t' \in \Theta(\sqrt{v})$ and $C$, we define $q = q_C = \Pr[\exists \cv' \in \mathcal{C}_f' \text{ s.t. } |\pi_{\cv}(\xv) \land \cv'| = \beta \text{ and } |\pi_{\cv}(\yv) \land \cv'| = \beta]$, where the probability is taken over $\mathcal{C}_f' \sim R_{v,v',t', C}$ and uniformly random $\xv, \yv \in \mathrm{Region}_{\alpha}(\cv)$ satisfying $|\xv + \yv| = w$. Informally, for an arbitrary solution pair $(\xv, \yv)$, $q$ denotes the probability that they share a valid $\beta$-center. 
Furthermore, for any $\mathcal{C}'_f \sim R_{v, v', t', C}$, we define  
\begin{equation*} 
    \mathcal{VC}_\beta(\mathbf{x}) := \{ \cv' \in \mathcal{C}'_f \colon |\pi_{\cv}(\xv) \land \cv'| = \beta\}
\end{equation*} 
to be the set of valid $\beta$-bucket centers $\cv'$ for any given $\xv \in \mathrm{Region}_\alpha(\cv)$. 
We write $\mathbb{E}[|\mathcal{VC}_\beta(\mathbf{x})|]$ for the expected size of $\mathcal{VC}_\beta(\mathbf{x})$, where the expectation is taken over $\mathcal{C}_f' \sim R_{v,v',t', C}$. 
Furthermore, we write $\mathbb{E}[|\mathcal{B}_{\beta}(\mathbf{c}')|]$ for the expected size of a bucket $\mathcal{B}_{\beta}(\mathbf{c}')$, where the expectation is taken over $\xv_1, \ldots, \xv_s \in \mathrm{Region}_\alpha(\cv)$ sampled independently and uniformly at random. Note that $\mathbb{E}[|\mathcal{VC}_\beta(\mathbf{x})|]$ and $\mathbb{E}[|\mathcal{B}_{\beta}(\mathbf{c}')|]$ are the same for all $\xv \in \mathrm{Region}_\alpha(\cv)$ and $\cv' \in \Ss_{v'}^v$, respectively. 

The proof of \Cref{thm: bucket search using QW and RPC} will be given in \Cref{sec: proof of bucket search using QW}.  

\begin{remark}\label{rem: writing max(1, exp size) as exp size}
    With a slight abuse of notation, we write $\mathbb{E}[|\mathcal{VC}_{\beta}(\mathbf{x})|])$ for
    $\max(1, \mathbb{E}[|\mathcal{VC}_{\beta}(\mathbf{x})|])$ to simplify the expressions for time and memory complexity. We do the same for $\mathbb{E}[|\mathcal{B}_{\beta}(\mathbf{c}')|]$. 
\end{remark}

\begin{theorem}[\subr{FindSolutions} using quantum walks]\label{thm: bucket search using QW and RPC}
\label{thm:BucketSearchQW}
    Let $(B, n, w, v, \alpha)$ be well-defined parameters for Problem~\ref{prob: bucket search} and let $p = p(n,w,v,\alpha)$ be according to \Cref{lem: probability p (wedge version)}. 
    For non-negative integers $s = \Tilde{O}(\frac{1}{\sqrt{p}})$,  $\beta \leq \alpha$, $v' \leq v$, $t' = \Theta(\sqrt{v})$, and $C$, define $q = q_C$, $\mathbb{E}[|\mathcal{VC}_{\beta}(\mathbf{x})|]$, and $\mathbb{E}[|\mathcal{B}_{\beta}(\mathbf{c}')|]$ as above. 
    
    Then there exists a quantum-walk algorithm  that solves Problem~\ref{prob: bucket search} with parameters $(B, n, w, v, \alpha)$ in expected time  $\Tilde{O}\left( B^2 p \cdot  (\mathsf{S} + \frac{1}{\sqrt{\epsilon}}(\frac{1}{\sqrt{\delta}} \mathsf{U} + \mathsf{C})) \right)$,
    where, omitting factors subexponential in $n$,
    \begin{align*}
      \mathsf{S} &= s \cdot \mathbb{E}[|\mathcal{VC}_\beta(\mathbf{x})|] \\
      \mathsf{U} &= \mathbb{E}[|\mathcal{VC}_\beta(\mathbf{x})|] + \sqrt{\mathbb{E}[|\mathcal{VC}_\beta(\mathbf{x})|] \cdot \mathbb{E}[|\mathcal{B}_{\beta}(\mathbf{c}')|]} \\
      \mathsf{C} &= 1 \\
      \delta &= \frac{1}{s} \\
      \epsilon &= s^2 p q.
    \end{align*}
    
    This quantum-walk algorithm uses classical memory and QRACM of expected size $\Tilde{O}(B)$, and quantum memory and QRAQM of expected size $\Tilde{O}(s \cdot \mathbb{E}[|\mathcal{VC}_\beta(\mathbf{x})|])$.
\end{theorem}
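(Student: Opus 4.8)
The plan is to instantiate the MNRS quantum-walk framework (\Cref{sec: quantum prelims}) on the Johnson graph $J(B,s)$ and to carefully establish the five quantities $\mathsf{S}, \mathsf{U}, \mathsf{C}, \delta, \epsilon$, after which the stated running time is immediate from the MNRS proposition together with a coupon-collector-style repetition argument. First I would set up the walk precisely: vertices are size-$s$ subsets $S \subseteq \mathcal{B}_\alpha(\cv)$ together with a data structure $D(S)$ consisting of the list of projected vectors $\{\pi_{\cv}(\xv) : \xv \in S\} \subseteq \Ss_\alpha^v$, an RPC $\mathcal{C}_f' \sim R_{v,v',t',C}$ (sampled once at the start, outside the walk), and the filled buckets $(\mathcal{B}_\beta(\cv'))_{\cv' \in \mathcal{C}_f'}$ stored in a quantum data structure supporting superposed insertion/deletion (as in \cite{Amb07,BJLM13}). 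A vertex is marked iff some $\beta$-bucket contains a solution pair; since each solution pair survives bucketing with probability $q$, and $\epsilon$ counts such pairs among $\binom{s}{2} \approx s^2$ pairs, we get $\epsilon = \Theta(s^2 p q)$ (using that a uniformly random pair in the subset is a solution pair with probability $p$). The spectral gap of $J(B,s)$ is the standard $\delta = \Omega(1/s)$.

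Next I would analyze the three costs. For the checking cost $\mathsf{C}$: because the update step maintains a flag recording whether a solution pair has been detected, checking merely inspects this flag, so $\mathsf{C} = \Tilde O(1)$. For the set-up cost $\mathsf{S}$: constructing the uniform superposition over vertices requires, for each of the $s$ vectors in the subset, computing its valid $\beta$-centers via the RPC-decoding algorithm of \Cref{thm: properties RPC} (applied in dimension $v$, with $t' = \Theta(\sqrt v)$), which costs $\mathbb{E}[|\mathcal{VC}_\beta(\mathbf{x})|] + 2^{o(n)}$ per vector and inserts the vector into that many buckets; hence $\mathsf{S} = s \cdot \mathbb{E}[|\mathcal{VC}_\beta(\mathbf{x})|]$ up to subexponential factors. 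For the update cost $\mathsf{U}$: swapping $\xv_{old}$ for $\xv_{new}$ requires (i) removing $\xv_{old}$ from its $\mathbb{E}[|\mathcal{VC}_\beta(\mathbf{x})|]$ buckets and inserting $\xv_{new}$ into its $\mathbb{E}[|\mathcal{VC}_\beta(\mathbf{x})|]$ buckets — cost $\mathbb{E}[|\mathcal{VC}_\beta(\mathbf{x})|]$ up to subexponential factors — and (ii) searching for a solution-partner of $\xv_{new}$ among the vectors sharing one of its $\beta$-buckets; the latter is a Grover search over a space of expected size $\mathbb{E}[|\mathcal{VC}_\beta(\mathbf{x})|] \cdot \mathbb{E}[|\mathcal{B}_\beta(\mathbf{c}')|]$ (centers times bucket occupancy), giving cost $\sqrt{\mathbb{E}[|\mathcal{VC}_\beta(\mathbf{x})|] \cdot \mathbb{E}[|\mathcal{B}_\beta(\mathbf{c}')|]}$ and hence the stated $\mathsf{U}$.

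To justify the probabilities $p$ and $q$ and the bucket-size expectations, I would invoke \Cref{rem: surface area only depends on weight} and the RPC "random behavior" property: projecting onto $\cv$ maps $\mathrm{Region}_\alpha(\cv)$ onto $\Ss_\alpha^v$ uniformly, so the $\beta$-bucketing on $\Ss_\alpha^v$ behaves (up to subexponential factors) like random bucketing, letting us compute $q$, $\mathbb{E}[|\mathcal{VC}_\beta(\mathbf{x})|]$, and $\mathbb{E}[|\mathcal{B}_\beta(\mathbf{c}')|]$ from surface areas of caps/wedges in dimension $v$ — exactly the role that $p$, valid centers, and buckets play in the outer NNS analysis. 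One subtlety to handle carefully: a solution pair $(\xv,\yv)$ with $|\xv+\yv|=w$ need not have $|\pi_\cv(\xv) + \pi_\cv(\yv)| = \alpha$ in general, so I must verify that the relevant overlap statistic on the projected vectors is what $\beta$ is actually filtering, which is where the "Hamming-metric residual vectors" discussed in \Cref{sec: reduction to a variant of NNS} come in; this is the step I expect to be the main obstacle, since the Euclidean-metric analog in \cite{AC:ChaLoy21} behaves somewhat differently. Finally, since $s = \Tilde O(1/\sqrt p)$, each walk run finds one solution in time $\mathsf{S} + \tfrac{1}{\sqrt\epsilon}(\tfrac{1}{\sqrt\delta}\mathsf{U} + \mathsf{C})$ with constant probability, and repeating $\Tilde O(B^2 p)$ times (the expected number of solution pairs, up to polylog) recovers a $(1-o(1))$-fraction of all solutions by a coupon-collector argument, yielding the claimed expected time; the memory bounds follow by reading off the sizes of the classical list ($\Tilde O(B)$) and of the per-vertex data structure ($\Tilde O(s \cdot \mathbb{E}[|\mathcal{VC}_\beta(\mathbf{x})|])$ qubits, accessed via QRAQM), and I would note in a remark that re-running the walk uses fresh RPCs so the heuristic of \cite{DEEK23} applies uniformly.
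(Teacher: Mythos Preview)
Your proposal is correct and follows essentially the same approach as the paper: an MNRS walk on $J(B,s)$ with per-vertex data consisting of the $\beta$-buckets over the projected list $\mathcal{L}' \subseteq \Ss_\alpha^v$, yielding exactly the stated $\mathsf{S}, \mathsf{U}, \mathsf{C}, \delta, \epsilon$, followed by $\Tilde{O}(B^2 p)$ repetitions. You also correctly flag the key subtlety---that $|\xv+\yv|=w$ does not translate to a single fixed overlap on the projections---which the paper resolves via the existence of a dominant value $e^*$ (\Cref{thm: existence residuals for codes}); the only minor omission is that the paper explicitly initializes the solution flag during set-up via a Grover search over all $\binom{s}{2}$ pairs (cost $\Tilde{O}(s)$ since $s^2 p = \Tilde{O}(1)$), but this is dominated by the bucketing cost anyway.
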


Note that the expected size of $\mathcal{VC}_\beta(\mathbf{x})$ plays an important role in the complexity of our quantum-walk algorithm. In \Cref{sec: sparsification}, we describe how this expected size (and thus the complexity) crucially depends on the choice of $|\Cs_f'|$, and we analyze two different choices. 
Besides the choice of $|\Cs_f'|$,  the precise complexity of \Cref{thm: bucket search using QW and RPC} also depends on the optimal choice of the parameters $s$, $v'$, and $\beta$. 

Similar to the Grover-based algorithm, we can use the above quantum-walk algorithm as the subroutine \subr{FindSolutions} in \Cref{thm: from NNS to DP} to obtain a quantum algorithm for NNS and code sieving.

\subsection{Adding a Layer of Filtering}\label{sec: reduction to a variant of NNS} 

In this section, we present the technical details to justify adding a layer of LSF to the data of our quantum walk. Recall that our input list $\mathcal{B}_\alpha(\cv)$ is a subset of $\{\xv \in \Ss_w^n : |\xv \wedge \cv|=\alpha\}$. In order to apply the LSF techniques from \cite{DEEK23} (described in Section~\ref{sec: RPC}), we will relate the problem of finding solution pairs in $\mathcal{B}_\alpha(\cv)$ to a problem where we search for solution pairs in a list $\mathcal{L}'$ of vectors sampled independently and uniformly over a suitable (smaller-dimensional) Hamming sphere. 

This is motivated by the following observations. 
Note that any $\xv \in \mathcal{B}_{\alpha}(\cv)$ can be written as $\xv = \xv \land \cv + \xv \land \overline{\cv}$ for the center $\cv$. Suppose, for a moment, that $\cv$ is of weight $\alpha$ (i.e., $v = \alpha$). Then, for any $\xv,\yv \in \mathcal{B}_{\alpha}(\cv)$ we have that $\xv \land \cv = \cv$ and $\yv \land \cv = \cv$, and thus $\xv + \yv = \xv \land \overline{\cv} + \yv \land \overline{\cv}$. In other words, for $\xv,\yv \in \mathcal{B}_{\alpha}(\cv)$, we have that $|\xv + \yv| = w$ if and only if $|\xv \land \overline{\cv} + \yv \land \overline{\cv}| = w$. Therefore, we could restrict to searching for solutions among the vectors $\xv \land \overline{\cv}$ instead. 
However, if $v > \alpha$, then this equivalence breaks down: whether $\xv + \yv$ is of weight $w$ does no longer only depend on $|\xv \land \overline{\cv} + \yv \land \overline{\cv}|$, but also on $|\xv \land \yv \land \cv|$ (that is, the latter need no longer be $0$). Specifically, for a fixed pair $\xv,\yv \in \mathcal{B}_{\alpha}(\cv)$, we have that $|\xv + \yv| = w$ if and only if $|\xv \land \overline{\cv} + \yv \land \overline{\cv}| = w - 2\alpha + 2e$, where $e := |\xv \land \yv \land \cv|$. Although the value of $e$ may differ for different pairs $(\xv,\yv)$, Theorem~\ref{thm: complexity of quantum ISD assuming existence oracle} shows that for most solution pairs $(\xv, \yv)$ it is the same. The proof will be given at the end of this section. 

\begin{theorem}\label{thm: existence residuals for codes}
    Let $\mathcal{B}_\alpha(\cv)$ be an instance of Problem~\ref{prob: bucket search} with parameters $(B,n,w,v,\alpha)$. There exists an integer $e^*$ such that the expected number of pairs $(\xv,\yv) \in \mathcal{B}_{\alpha}(\cv)^2$ satisfying $|\xv + \yv| = w$ is equal, up to a multiplicative factor that is linear in $n$, to the expected number of pairs $(\xv,\yv)\in\mathcal{B}_{\alpha}(\cv)^2$ satisfying $|\xv + \yv| = w$ and $|\xv \land \yv \land \cv| = e^*$. 
\end{theorem}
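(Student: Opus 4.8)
\emph{Proof plan.} The plan is to partition the solution pairs according to the value $e := |\xv \land \yv \land \cv|$ and then invoke pigeonhole, exploiting that for a solution pair the index $e$ can take only linearly (in $n$) many values.

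First I would recall from the Remark following \Cref{prob: bucket search} that the expected number of pairs $(\xv,\yv) \in \mathcal{B}_\alpha(\cv)^2$ with $|\xv+\yv|=w$ is (up to the usual ordered/unordered counting convention, which is immaterial since diagonal pairs $\xv=\yv$ are never solutions) a fixed multiple of $p = p(n,w,v,\alpha)$ from \Cref{lem: probability p (wedge version)}, namely ${B \choose 2}\,p$ when counting distinct unordered pairs. For each integer $e$, let $p_e$ denote the probability that a uniformly random pair $\xv,\yv \in \mathcal{B}_\alpha(\cv)$ satisfies both $|\xv+\yv|=w$ \emph{and} $|\xv \land \yv \land \cv| = e$; then the expected number of pairs $(\xv,\yv)\in\mathcal{B}_\alpha(\cv)^2$ with $|\xv+\yv|=w$ and $|\xv\land\yv\land\cv|=e$ is ${B \choose 2}\,p_e$ (same convention). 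The key structural observation — which is exactly the disjoint-union-over-$e$ decomposition already used in the proof of \Cref{lem: surface area wedge} and carried through the computation of $p$ in \Cref{lem: probability p (wedge version)} — is that $p_e$ is precisely the $e$-th summand in the expression for $p$; in particular $p = \sum_{e=\max(0,2\alpha-v)}^{\min(\alpha,w/2)} p_e$, since any pair with $|\xv\land\cv|=|\yv\land\cv|=\alpha$ and $|\xv\land\yv|=w/2$ necessarily has $|\xv\land\yv\land\cv|$ in that range, and the events are pairwise disjoint across $e$.

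Next I would take $e^*$ to be a value of $e$ in the range $[\max(0,2\alpha-v),\,\min(\alpha,w/2)]$ maximising $p_e$; if $p=0$ the statement is trivial (both expectations vanish), so assume $p>0$, whence $p_{e^*}>0$. The number of integers in that range is at most $\min(\alpha,w/2)+1 \le \tfrac{w}{2}+1 = O(n)$, so a sum of that many non-negative terms lies between its largest term and $O(n)$ times its largest term:
\begin{align*}
    {B \choose 2}\, p_{e^*} \;\le\; {B \choose 2}\, p \;=\; \sum_{e} {B \choose 2}\, p_e \;\le\; \Bigl(\tfrac{w}{2}+1\Bigr)\, {B \choose 2}\, p_{e^*}.
\end{align*}
This is exactly the asserted statement: the expected number of solution pairs and the expected number of solution pairs with $|\xv\land\yv\land\cv|=e^*$ agree up to a multiplicative factor that is linear in $n$.

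The main ``obstacle'' is in fact only a bookkeeping point: making precise that the $e$-th summand appearing in \Cref{lem: probability p (wedge version)} is equal to the probability $p_e$ that a random solution pair has overlap exactly $e$ with $\cv$. This follows from the same partition $\Ws_{v,\alpha}^n(\xv,\yv) = \bigsqcup_e S_e$ used for the surface area of a wedge in the proof of \Cref{lem: surface area wedge} (together with the independence of the surface areas on the specific centers, \Cref{rem: surface area only depends on weight}), so once that identification is recorded, everything reduces to the elementary pigeonhole inequality displayed above. \qed
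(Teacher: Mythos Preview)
Your proposal is correct and follows essentially the same approach as the paper: decompose $p$ as $\sum_e p_e$ over the admissible range of $e$, set $e^* = \operatorname{argmax}_e p_e$, and bound the sum by its largest term times the number of terms, which is $O(n)$. The paper is slightly terser (it bounds the number of summands by $n/2$ rather than your sharper $w/2+1$) and does not spell out the identification of $p_e$ with the $e$-th summand of the wedge formula, but the argument is the same.
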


\noindent In other words, Problem~\ref{prob: bucket search} is essentially equivalent to the following problem:
\begin{quote}
    {\normalsize \textit{Find (almost) all $\xv,\yv$ in $\mathcal{B}_{\alpha}(\cv)$ satisfying $|\xv + \yv| = w$ and $|\xv \land \yv \land \cv| = e^*$.}}
\end{quote}
We will therefore restrict to solving this latter problem instead. 

Let $\mathcal{L}' := \{ \pi_{\cv}(\xv) \colon \xv \in  \mathcal{B}_{\alpha}(\cv)\}$, where we view each $\pi_{\cv}(\xv) := \xv \land \cv$ as a vector in $S_\alpha^v$ (i.e., we ignore all coefficients $(\xv \land \cv)_i$ where $\cv_i = 0$). 
Note that the vectors in $\mathcal{L}'$ are independently and uniformly distributed over $S_\alpha^v$.  
That is, the problem of finding all $\xv', \yv' \in \mathcal{L}'$ satisfying $|\xv' \land \yv'| = e^*$ is a variant of an NNS problem (where the target weight $e^*$ does not necessarily equal the weight $\alpha$ of the vectors in the input list $\mathcal{L}'$), so we can apply the LSF techniques from \cite{DEEK23}. Furthermore, the set of those pairs $(\xv',\yv')$ (or actually the corresponding $(\xv, \yv)$) forms a superset of the set $\{(\xv, \yv) \in \mathcal{B}_\alpha(\cv)^2 \colon |\xv + \yv| = w\}$. 
Therefore, to asymptotically solve Problem~\ref{prob: bucket search}, it suffices to find all $\xv', \yv' \in \mathcal{L}'$ satisfying $|\xv' \land \yv'| = e^*$, and keep those for which it also holds that $|\xv + \yv| = w$. 

In order to asymptotically find all solution pairs in the bucket $\mathcal{B}_\alpha(\cv)$, we will thus focus on finding all pairs   $(\xv', \yv') \in \mathcal{L}' \times \mathcal{L}'$ satisfying  $|\xv' \land \yv'| = e^*$ and $|\xv + \yv| = w$. We will call such a pair  $(\xv', \yv')$ an \textit{$\mathcal{L}'$-solution}.

\begin{remark}[`Residual vectors for codes']
    For the reader familiar with the techniques from \cite{AC:ChaLoy21}: we consider the vectors in $\mathcal{L}'$ as a code analog of the `residual vectors' in \cite{AC:ChaLoy21}. 
    The lattice equivalent of $\subr{FindSolutions}$ aims to find all pairs $(\vv, \wv)$ in a given bucket that satisfy $\langle \vv, \wv \rangle = \theta$ for some given $\theta$, where the bucket is defined for some center $\cv\in\R^n$ (of unit norm) and bucketing parameter $\alpha \in (0,1)$, and consists of unit (lattice) vectors $\vv \in \R^n$ satisfying $\langle \vv, \cv \rangle \geq \alpha$. (Here, $\langle \cdot, \cdot \rangle$ denotes the Euclidean inner product.) Note that if $\vv$ is in the bucket of $\cv$, then it can be written as $\vv = \alpha \cv + \sqrt{1 - \alpha^2} \vv'$ for some unit vector $\vv'$ that is orthogonal to $\cv$. In particular, the problem of finding all $(\vv,\wv)$ satisfying $\langle \vv, \wv \rangle = \theta$ is equivalent to the problem of finding all  $(\vv',\wv')$ satisfying $\langle \vv', \wv' \rangle = \theta'$ for some $\theta'$ depending on $\theta$. These (left-over) vectors $\vv'$ are called \textit{residual vectors} in \cite{AC:ChaLoy21}, and are the analogs of the vectors $\xv' = \pi_{\cv}(\xv)$ that we defined for codes.  However, note that in the setting of codes, we don't have an exact equivalence between the condition $|\xv + \yv| = w$ and a condition $|\xv' + \yv'| = w'$ for some $w'$ depending on $w$ (unless $v = \alpha$). Nevertheless, Theorem~\ref{thm: existence residuals for codes} shows that most pairs $(\xv, \yv)$ satisfying $|\xv + \yv| = w$ will satisfy $|\xv' + \yv'| = w'$ for $w' := 2\alpha - 2e^*$. 
\end{remark}

It remains to prove  \Cref{thm: existence residuals for codes}. 
\begin{proof}[Proof of \Cref{thm: existence residuals for codes}]
Recall that $p$ denotes the probability that two independent and uniformly random $\xv,\yv \in \mathcal{B}_{\alpha}(\cv)$ satisfy $|\xv + \yv| = w$. From the definition of $p$, we observe that \begin{align*}
        p = \Pr_{\xv,\yv\in_R \mathcal{B}_\alpha(\cv)}[|\xv + \yv| = w] = \sum_{e = \max(0, 2\alpha - v)}^{\min(\alpha,w/2)} p(e)
    \end{align*}
where $p(e) := \Pr_{\xv,\yv\in_R \mathcal{B}_\alpha(\cv)}[|\xv + \yv| = w \text{ and } |\xv \land \yv \land \cv| = e]$. 
In particular, it holds that $p(e^*) \leq p \leq p(e^*) \cdot n/2$, where $e^* := \mathrm{argmax}_e p(e)$.\footnote{The value of $e^*$ can be computed numerically. For instance, see \cite{Car20, DEEK23}.} 
(Note that an explicit formula for $p(e)$ is given in \Cref{lem: probability p (wedge version)}.)
This implies that $\mathbb{E}[| \{ (\xv,\yv) \in \mathcal{B}_\alpha(\cv)^2 \colon |\xv + \yv| = w \} |]$ and $\mathbb{E}[| \{ (\xv,\yv) \in \mathcal{B}_\alpha(\cv)^2 \colon |\xv + \yv| = w \text{ and } |\xv \land \yv \land \cv| = e^* \} |]$ are equal up to a multiplicative factor that is linear in $n$, as we wanted to show.
\qed
\end{proof}

\subsection{On the Choice of $|\mathcal{C}_f'|$} \label{sec: sparsification} 

We will now explain how the choice of the size of the RPC $\mathcal{C}_f' \subseteq S_{v'}^v$ (together with the choice of $\beta$) affects the time complexity of our quantum-walk algorithm (\Cref{thm: bucket search using QW and RPC}) and present two choices of $|\mathcal{C}_f'|$ that we focus on. 
We start by showing how the two components $\mathbb{E}[|\mathcal{VC}_\beta(\xv)|]$ and $q$ in the time complexity depend on $|\Cs_f'|$. Since we identify the input list $\mathcal{B}_{\alpha}(\cv) \subseteq S_w^n$ with the list $\mathcal{L}' \subseteq S_\alpha^v$, we will from now on write $\mathcal{VC}_\beta(\xv')$ for $\xv' \in \mathcal{L}'$, instead of $\mathcal{VC}_\beta(\xv)$ for $\xv \in \mathcal{B}_{\alpha}(\cv)$.

For all $\xv' \in S_{\alpha}^{v}$, the probability that a uniformly random $\cv' \in S_{v'}^{v}$ satisfies $|\xv' \land \cv'| = \beta$ is \begin{align*}
        p_\beta := \Pr_{\cv' \in S_{v'}^{v}}[|\xv' \land \cv'| = \beta] = \frac{|\mathcal{C}_{v',\beta}^{v} (\xv')|}{|S_{v'}^{v}|}. 
\end{align*} 

Note that this also equals the probability that (for fixed $\cv' \in S_{v'}^{v}$) a uniformly random $\xv' \in S_{\alpha}^{v}$ satisfies $|\xv' \land \cv'| = \beta$. Therefore, $\mathbb{E}[|\mathcal{VC}_\beta(\xv')|] = |\mathcal{C}_f'| \cdot p_\beta $ (up to factors subexponential in $n$, see \cite[Lemma~4.6]{DEEK23}) and $\mathbb{E}[|\mathcal{B}_{\beta}(\cv')|] = s \cdot p_\beta$, where we recall that $s$ denotes the vertex size of the Johnson graph.\footnote{In the remainder of Section~\ref{sec: quantum speed-ups}, we often omit writing $\Tilde{O}(\cdot)$ or factors of the form $2^{o(n)}$ for ease of reading.}

Next, we will show how the probability $q$ depends on the size of $\Cs_f'$, where we recall that $q$ denotes the probability that a given solution pair can be found in a same $\beta$-bucket. 
First, note that we can express $q$ as follows.  
\begin{lemma}\label{lem: probability q}
    We have $q = \Pr[\exists \cv' \in \mathcal{C}_f' \text{ s.t. } \xv',\yv' \in \mathcal{B}_\beta(\cv')]$ 
    for all $\xv',\yv' \in \mathcal{L}'$ satisfying $|\xv' \land \yv'| = e^*$.
\end{lemma}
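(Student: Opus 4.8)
The plan is to reduce both sides of the claimed identity to the probability (over the RPC) of a single event attached to a \emph{fixed} pair, and then show by symmetry that this probability is the same for every admissible pair. First I would unfold the right-hand side: by the way the $\beta$-buckets are filled, ``$\xv' \in \mathcal{B}_\beta(\cv')$ and $\yv' \in \mathcal{B}_\beta(\cv')$'' is by construction the same as ``$|\xv' \land \cv'| = \beta$ and $|\yv' \land \cv'| = \beta$'', i.e.\ (recalling $\xv' = \pi_\cv(\xv)$, $\yv' = \pi_\cv(\yv)$) exactly the event ``$|\pi_\cv(\xv) \land \cv'| = \beta$ and $|\pi_\cv(\yv) \land \cv'| = \beta$'' appearing in the definition of $q = q_C$. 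So the right-hand side is the probability of this event over $\mathcal{C}_f' \sim R_{v,v',t',C}$ with $(\xv',\yv')$ held fixed (any overlap-$e^*$ pair in $\mathcal{L}'$), while $q$ is the same probability additionally averaged over a uniformly random $\mathcal{L}'$-solution; it therefore suffices to prove that the fixed-pair probability does not depend on which overlap-$e^*$ pair is chosen.

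Second I would establish this independence. For a fixed candidate center $\cv' \in S_{v'}^v$, the event ``$|\xv' \land \cv'| = |\yv' \land \cv'| = \beta$'' holds exactly when $\cv' \in \mathcal{W}_{v',\beta}^v(\xv',\yv')$, and by \Cref{rem: surface area only depends on weight} together with \Cref{lem: surface area wedge} the size $|\mathcal{W}_{v',\beta}^v(\xv',\yv')|$ depends only on $|\xv'| = |\yv'| = \alpha$ and on $|\xv'\land\yv'|$. Since a uniformly random code on $S_{v'}^v$ is invariant under coordinate permutations of $[v]$, for such a code the ``at least one center captures $(\xv',\yv')$'' probability depends only on the overlap type of $(\xv',\yv')$; and because $\mathcal{C}_f'$ is an $(v,v',t')$-RPC with $t' = \Theta(\sqrt v) = o(v/\log v)$, its capturing probability agrees with the random-code one up to a factor subexponential in $v$ (hence in $n$), by the cap/wedge approximation results of \cite{DEEK23} (Lemmas~4.6 and~4.7 there). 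Hence, up to such factors, the fixed-pair probability is a function $q(e)$ of $e = |\xv'\land\yv'|$ alone.

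Finally I would close the averaging: every $\mathcal{L}'$-solution satisfies $|\xv'\land\yv'| = |\xv\land\yv\land\cv| = e^*$ by the definition of $\mathcal{L}'$-solutions, which was set up via \Cref{thm: existence residuals for codes}, so averaging the event probability over a uniformly random $\mathcal{L}'$-solution simply returns $q(e^*)$, i.e.\ the right-hand side for any fixed such pair. (If one instead reads the definition of $q$ as averaging over all pairs with $|\xv+\yv|=w$ rather than only the overlap-$e^*$ ones, then the decomposition $q = \sum_e \Pr[\,|\xv\land\yv\land\cv| = e \mid |\xv+\yv| = w\,]\, q(e)$ together with \Cref{thm: existence residuals for codes} -- which says the $e = e^*$ term dominates this sum up to a factor linear in $n$ -- still yields $q = \tilde{\Theta}(q(e^*))$, consistent with the $\tilde{O}(\cdot)$ conventions used throughout this section.)

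The hard part will be the symmetry step in the second paragraph: an RPC is \emph{not} itself permutation-invariant -- it carries only the wreath-product symmetry $S_{v/t'}\wr S_{t'}$ of its block decomposition, which does not act transitively on overlap-$e^*$ pairs -- so one genuinely has to route through the ``RPC behaves like a random code'' estimates of \cite{DEEK23}, and this is precisely where the hypothesis $t' = \Theta(\sqrt v)$ is needed. Everything else -- unfolding the bucket-membership condition through $\pi_\cv$ and reconciling the averaged and fixed-pair forms of $q$ -- is routine bookkeeping.
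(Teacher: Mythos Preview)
Your argument is correct and in fact more careful than the paper's, but it takes a longer route. The paper's proof is a two-line observation: for a \emph{single} uniformly random center $\cv' \in S_{v'}^v$, the capture probability
\[
\Pr_{\cv'}[\,|\xv'\land\cv'|=\beta \text{ and } |\yv'\land\cv'|=\beta\,] \;=\; \frac{|\mathcal{W}_{v',\beta}^v(\xv',\yv')|}{|S_{v'}^v|}
\]
depends only on $|\xv'|,|\yv'|,|\xv'\land\yv'|$ (a direct counting fact, i.e.\ your \Cref{lem: surface area wedge}/\Cref{rem: surface area only depends on weight} step). In particular it is the same whether one conditions on $|\xv'\land\yv'|=e^*$ alone or additionally on $|\xv+\yv|=w$; that is the entire proof. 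The passage from this single-center probability $W$ to the ``$\exists\,\cv'\in\mathcal{C}_f'$'' probability $q = 1-(1-W)^{|\mathcal{C}_f'|}$, and the RPC-versus-random-code issue you rightly flag, are \emph{not} handled inside the proof---they are deferred to the sentence immediately following the lemma (``we are implicitly viewing $\mathcal{C}_f'$ as a random code; a formal justification for RPCs follows from \cite[Lemma~4.8]{DEEK23}''). Likewise, the averaging over different overlap values $e$ when $|\xv+\yv|=w$ is not addressed explicitly; it is absorbed into the $e^*$-dominance convention already established in \Cref{thm: existence residuals for codes}.

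So the difference is one of scope: you fold the RPC-approximation step and the $e^*$-dominance step into the proof of the lemma itself, whereas the paper treats the lemma as a pure counting statement about wedges and pushes those approximations to the surrounding discussion. Your version is self-contained and makes the hidden subexponential factors visible; the paper's version is shorter but relies on conventions stated elsewhere. Either is fine, and your identification of the wreath-product symmetry obstruction for RPCs (and the need for $t'=\Theta(\sqrt v)$ to invoke the approximation lemmas) is a genuine point the paper leaves implicit.
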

\begin{proof}
Notice that the following two probabilities are the same: 
\begin{itemize}
    \item For fixed $\xv',\yv' \in S$ satisfying $|\xv' \land \yv'| = e^*$, the probability that a uniformly random $\cv' \in S_{v'}^v$ satisfies $|\xv' \land \cv'| = \beta$ and $|\yv' \land \cv'| = \beta$. 
    \item For fixed $\xv',\yv' \in S$ satisfying $|\xv' \land \yv'| = e^*$ and $|\xv + \yv| = w$, the probability that a uniformly random $\cv' \in S_{v'}^v$ satisfies $|\xv' \land \cv'| = \beta$ and $|\yv' \land \cv'| = \beta$. 
\end{itemize}
Indeed, by a counting argument, one can show that both probabilities are equal to $\frac{|\mathcal{W}_{v', \beta}^{v}(\xv',\yv')|}{|S_{v'}^{v}|}$.
\qed
\end{proof} 

Fix an arbitrary pair $(\xv',\yv') \in \mathcal{L}'^2$ satisfying $|\xv' \land \yv'| = e^*$, and define \begin{align*}
            W := \Pr_{\cv' \in_R \mathcal{C}_f'}[|\xv' \land \cv'| = \beta \text{ and } |\yv' \land \cv'| = \beta] = \frac{|\mathcal{W}_{v', \beta}^{v}(\xv',\yv')|}{|S_{v'}^{v}|}.
        \end{align*}
Then we can write $q$ as  $q = 1 - (1 - W)^{|\mathcal{C}_f'|}$, so $q = \Theta(|\mathcal{C}_f'| W)$ if $|\mathcal{C}_f'| W \leq 1$, and $q = \Theta(1)$ otherwise. (Here, we are implicitly viewing $\mathcal{C}_f'$ as a random code; a formal justification for RPCs follows from \cite[Lemma~4.8]{DEEK23}.)

We will now consider two variants of the previous quantum-walk algorithm, where we vary the choice of $|\mathcal{C}_f'|$, i.e., the number of $\beta$-buckets, and see how that affects the two factors of the time complexity of our quantum-walk algorithm:
    $$\mathsf{S} = s \cdot \mathbb{E}[|\mathcal{VC}_\beta(\xv')|] \quad \text{ and } \quad \frac{1}{\sqrt{\epsilon}}\left(\frac{1}{\sqrt{\delta}}U+C\right) = \frac{1}{\sqrt{spq}} \left(\mathbb{E}[|\mathcal{VC}_\beta(\xv')|] + \sqrt{\mathbb{E}[|\mathcal{VC}_\beta(\xv')|] \cdot \mathbb{E}[|\mathcal{B}_\beta(\cv')|]}\right).$$ 
(Recall that the expected values should be replaced by 1 if they are smaller than 1, see Remark~\ref{rem: writing max(1, exp size) as exp size}.)

\subsubsection{Variant 1: Choosing $|\mathcal{C}_f'|$ such that $q$ is maximized.}
A first approach is to choose $|\mathcal{C}_f'|$ such that each $\mathcal{L}'$-solution $\mathbf{x}', \mathbf{y}'$ have at least one bucket in common, i.e., there are no false-negatives. In particular, for similar reasons as in the proof of \cite[Theorem~4.4]{DEEK23} (but then considered for our \textit{second} layer of LSF) taking $|\mathcal{C}_f'| = \Omega(\frac{1}{W})$ ensures $q = \Theta(1)$ and $\mathbb{E}[|\mathcal{VC}_\beta(\xv')|] = \frac{|\mathcal{C}_{v',\beta}^{v} (\xv')|}{|\mathcal{W}_{v', \beta}^{v}(\xv',\yv')|} = \frac{p_\beta}{W}$ (which is $\geq 1$).

Then the two contributors to the quantum-walk cost are as follows: 
\begin{align}\label{eq: QW params without sparsification}
    \mathsf{S} = s \cdot \frac{p_\beta}{W} \quad \text{ and } \quad
    \frac{1}{\sqrt{\epsilon}}\left(\frac{1}{\sqrt{\delta}} \mathsf{U} + \mathsf{C} \right) = \sqrt{\frac{p_\beta}{spW}} \cdot \left( \sqrt{\frac{p_\beta}{W}} + \max(1, \sqrt{s p_\beta}) \right). 
\end{align}

\subsubsection{Variant 2: Sparsification.} 
A way to obtain a better time complexity is to apply a technique that we refer to as \textit{sparsification}, which was already used in \cite{Laa16, AC:ChaLoy21}. 
Reducing the size $|\mathcal{C}_f'|$ (taking a \textit{sparser} code $\mathcal{C}_f'$), reduces both $q$ and $\mathbb{E}[|\mathcal{VC}_\beta(\mathbf{x}')|]$, resulting in an increase in the cost $1/\sqrt{\epsilon}$, but a decrease in both the set-up $\mathsf{S}$ and update $\mathsf{U}$ costs. Altogether, this results in a different balance between the terms $\mathsf{S}$ and $\frac{1}{\sqrt{\epsilon}}(\frac{1}{\sqrt{\delta}} \mathsf{U} + \mathsf{C})$  in the runtime of the quantum walk subroutine.

Following the same reasoning as in \cite{AC:ChaLoy21}, we take $|\mathcal{C}_f'| = \frac{|S_{v'}^{v}|}{|\mathcal{C}_{v',\beta}^{v} (\xv')|} = \frac{1}{p_\beta}$. 
Note that now $q = \Omega(\frac{W}{p_\beta})$ and 
$\mathbb{E}[|\mathcal{VC}_\beta(\xv')|] = 1$ (recall that we omit writing $2^{o(n)}$), indeed resulting in a different balance between the contributors of the quantum-walk cost: 
\begin{align}\label{eq: QW params with sparsification}
    \mathsf{S} = s \quad \text{ and } \quad \frac{1}{\sqrt{\epsilon}}\left(\frac{1}{\sqrt{\delta}}\mathsf{U} + \mathsf{C}\right) = \sqrt{\frac{p_\beta}{spW}}(1 + \max(1, \sqrt{sp_\beta})).
\end{align}
More precisely, comparing Equation~\eqref{eq: QW params with sparsification} to Equation~\eqref{eq: QW params without sparsification}, shows that sparsification has reduced the set-up cost $\mathsf{S}$ by a factor of $\frac{p_\beta}{W}$ and part of the cost $\frac{1}{\sqrt{\epsilon}}\left(\frac{1}{\sqrt{\delta}}\mathsf{U} + \mathsf{C}\right)$ by a factor of $\sqrt{\frac{p_\beta}{W}}$.

\

We will present further in Section~\ref{sec: numerical results} our numerical results of these formulas (optimized to minimize the time complexity), which illustrate that the second choice -- i.e., sparsification -- indeed provides a speed-up over the first.

\subsection{Proof of the Complexity of \subr{FindSolutions} Using Quantum Walks}\label{sec: proof of bucket search using QW}

We will now prove \Cref{thm: bucket search using QW and RPC}. Recall  (see Remark~\ref{rem: writing max(1, exp size) as exp size}) that we write $\mathbb{E}[|\mathcal{VC}_{\beta}(\mathbf{x})|])$ and $\mathbb{E}[|\mathcal{B}_{\beta}(\mathbf{c}')|]$ as shorthand for $\max(1, \mathbb{E}[|\mathcal{VC}_{\beta}(\mathbf{x})|])$ and $\max(1, \mathbb{E}[|\mathcal{B}_{\beta}(\mathbf{c}')|])$, respectively.

\subsubsection{Time complexity analysis}

\begin{proof}[Proof of \Cref{thm: bucket search using QW and RPC}: Part 1/2, time complexity]
Let $\mathcal{B}_\alpha(\cv)$ be an instance of Problem~\ref{prob: bucket search} with parameters $(B, n, w, v, \alpha)$, and let $p = p(n,w,v,\alpha)$ be according to \Cref{lem: probability p (wedge version)}. 

We start by describing the details of the quantum walk according to the MNRS framework \cite{MNRS11}. 
We identify $\mathcal{B}_\alpha(\cv)$ with the list $\mathcal{L}' \subseteq S_\alpha^v$ defined in \Cref{sec: reduction to a variant of NNS}, and recall that each vector $\xv \in \mathcal{B}_\alpha(\cv)$ has an associated  $\xv' \in \mathcal{L}'$ (more precisely, $\mathcal{L}' = \{ \xv \land \cv \colon \xv \in  \mathcal{B}_{\alpha}(\cv)\}$ where $\xv \land \cv$ is viewed as a vector in $S_\alpha^v$).  
By the arguments from \Cref{sec: reduction to a variant of NNS} (and for the $e^*$ defined there), it suffices to instead search for  $\mathcal{L}'$-solutions, i.e., pairs $(\xv', \yv') \in \mathcal{L}' \times \mathcal{L}'$ satisfying $|\xv' \land \yv'| = e^*$ and $|\xv + \yv| = w$. 
We will find those $\mathcal{L}'$-solutions by applying a quantum walk using an extra layer of LSF. 
More precisely, our walk will be over subsets $S$ of $\mathcal{L}'$, and we aim to reach a vertex $S$ that contains an $\mathcal{L}'$-solution. This search for a vertex containing an $\mathcal{L}'$-solution will be facilitated by the use of bucketing methods; however, this time we do both the bucketing and checking phase within the quantum walk. 

\paragraph{Graph.} Consider a quantum walk on the Johnson graph $J(|\mathcal{L}'|, s) = (V,E)$ for some suitable parameter $s < |\mathcal{L}'| = B$ satisfying $s = \Tilde{O}(\frac{1}{\sqrt{p}})$.\footnote{This constraint on $s$ ensures that the number of $\mathcal{L}'$-solutions per vertex is $\Tilde{O}(1)$ on average. It is used in our analysis of the set-up and update costs, as well as in our derivation of $\epsilon$.} 
In other words, the set of vertices is $V = \{ S \subseteq \mathcal{L}' \colon |S| = s \}$ and two vertices $S_1, S_2$ are adjacent (i.e., $(S_1, S_2) \in E$) if and only if $|S_1 \cap S_2| = s - 1$. 

\paragraph{Data.} To each vertex $S \in V$ we add some additional data structure $\textsc{data}(S)$ that allows us to efficiently check in the update step whether a newly added element forms an $\mathcal{L}'$-solution with one of the existing elements in a vertex (instead of having to search for a `colliding element' among all $s-1$ other vertex elements). 
More precisely, the idea (originating from \cite{AC:ChaLoy21}) is as follows. We will invoke the RPC framework from \cite{DEEK23} (\Cref{sec: RPC}) again and sample a random product code $\mathcal{C}_f' \sim R_{v, v', t', C}$ (i.e., $\mathcal{C}_f' \subseteq S_{v'}^v$) for $t' = \Theta(\sqrt{v})$ with $v'$ to be determined. 
Let $\beta$ be another parameter to be determined later. Instead of searching for any $\mathcal{L}'$-solution in $S$, the idea is to only search for solutions that both have overlap $\beta$ with a bucket center from $\mathcal{C}_f'$.\footnote{This could mean that there are some false negatives: it might happen that $S$ contains an $\mathcal{L}'$-solution, but that it isn't captured in one of the buckets. However, by choosing the parameters $|\mathcal{C}_f'|$ and $\beta$ carefully we can control how likely this is to happen. (For more details, see the explanation in our derivation of $\epsilon$.)}
Therefore, we will add the following data $\textsc{data}(S)$ to each vertex $S \in V$: 
\begin{itemize}
    \item The buckets $\mathcal{B}_{\beta}(\cv') := \{ \xv' \in S \colon |\xv' \land \cv'| = \beta\}$ for each $\cv' \in \mathcal{C}_f'$. These will be `stored' using a data structure that allows for efficient inserting and removing of elements. (See \cite{Amb07, BJLM13} for details.) 
    \item To each bucket $\mathcal{B}_{\beta}(\cv')$, we will add a list containing the $\mathcal{L}'$-solutions in the bucket, if there are any. 
    \item We keep track of a bit indicating whether $S$ is marked according to the following definition. 
\end{itemize} 

\paragraph{Marked vertices.} Define the set $M \subseteq V$ of marked vertices as
\begin{align*}
    M = \{S \in V \colon \exists\, \cv' \in \mathcal{C}_f', \exists\, \xv', \yv' \in \mathcal{B}_{\beta}(\cv') \text{ s.t. } (\xv',\yv') \text{ is an $\mathcal{L}'$-solution} \}.
\end{align*} 
Note that this is a subset of the set of vertices containing an $\mathcal{L}'$-solution. 
In particular, an $\mathcal{L}'$-solution will only be detected if it (the pair) can be found in a $\mathcal{C}_f'$-bucket. 

We will now describe the parameters determining the expected time complexity of the quantum walk; the memory complexity will be discussed after. We will repeatedly make use of the fact that, for any $\xv' \in \mathcal{L}'$, the set $\mathcal{VC}_\beta(\xv') := \{\cv' \in \mathcal{C}_f' \colon |\xv' \land \cv'| = \beta\}$ of its valid buckets can be computed in time $|\mathcal{VC}_\beta(\xv')| + 2^{o(n)}$ by \Cref{thm: properties RPC}. 
Furthermore, for ease of presentation, we will often omit writing down $\Tilde{O}(\cdot)$ and factors of the form $2^{o(n)}$.  

\paragraph{$\mathsf{S}$, set-up cost.}  
Constructing a uniform superposition over all $S \in V$ costs $\Tilde{O}(s)$. In addition, we need to compute the data for each vertex $S$. We will construct the quantum data structure storing the buckets $\mathcal{B}_{\beta}(\cv')$ by computing, for each $\xv' \in S$, the set $\mathcal{VC}_\beta(\xv')$ of valid buckets, and then insert $\xv'$ to the right locations in the quantum data structure. Since insertion can be done efficiently, for instance using one of the quantum data structures in \cite{Amb07, BJLM13}, this takes  $\Tilde{O}(s \cdot \mathbb{E}[|\mathcal{VC}_\beta(\xv')|])$ time in expectation. 
It remains to construct the second and third part of the data. Note that the condition on $s$ implies that the expected number of $\mathcal{L}'$-solutions per vertex is ${s \choose 2} p = \Tilde{O}(1)$. Therefore, applying Grover over all pairs in $S$ allows for finding these (in expectation) ${s \choose 2} p = \Tilde{O}(1)$ pairs in time $\Tilde{O}(s)$. 
For any found pair $(\xv', \yv')$, it then suffices to go over the valid $\beta$-buckets of $\xv'$ to add $(\xv', \yv')$ to any bucket that also contains $\yv'$.\footnote{This step could possibly be sped up, but there is no reason to as it does not dominate the set-up cost.} 
If at least one pair is added, we add $1$ as third component of the data, and $0$ otherwise. Thus, altogether the construction of this second and third part of the data cost at most $\Tilde{O}(s + \mathbb{E}[|\mathcal{VC}_\beta(\xv')|])$ time. 
We conclude that $\mathsf{S} = \max(s, s \cdot \mathbb{E}[|\mathcal{VC}_\beta(\xv')|], s + \mathbb{E}[|\mathcal{VC}_\beta(\xv')|]) = s \cdot \mathbb{E}[|\mathcal{VC}_\beta(\xv')|]$.

\paragraph{$\mathsf{C}$, checking cost.}
Since we only need to check the last component of the data, we have $\mathsf{C} = 1$.

\paragraph{$\mathsf{U}$, update cost.} 
The dominating cost will be that of updating the data when going from a vertex $S$ to a neighbor $S'$. Updating the first part of the data, i.e., the buckets, can be done in time $\mathbb{E}[|\mathcal{VC}_\beta(\xv')|]$, since it suffices to compute the sets of valid buckets for the old and new element, to remove the old element from its valid buckets, and to add the new element to its valid buckets. Updating the second part of the data (and the third part, if needed) can be done in time $\sqrt{\mathbb{E}[|\mathcal{VC}_\beta(\xv')|] \cdot \mathbb{E}[|\mathcal{B}_{\beta}(\cv')|]}$: apply Grover to the union of all valid buckets of the old, respectively new, element to see if they are part of an $\mathcal{L}'$-solution; if so, remove, respectively add, this $\mathcal{L}'$-solution. (Here, we again make use of the fact that the condition on $s$ implies that the number of solutions per vertex is $s \cdot p \leq s \cdot \sqrt{p} = \Tilde{O}(1)$.)
It follows that $\mathsf{U} = \mathbb{E}[|\mathcal{VC}_\beta(\xv')|] +  \sqrt{\mathbb{E}[|\mathcal{VC}_\beta(\xv')|] \cdot \mathbb{E}[|\mathcal{B}_{\beta}(\cv')|]}$.

\paragraph{$\delta$, spectral gap.}
As mentioned in \Cref{sec: preliminaries}, the spectral gap of the Johnson graph satisfies $\delta = \Omega(\frac{1}{s})$.

\

For the derivation of $\epsilon$, we use the following lemma. 

\begin{lemma}\label{lem: derivation of epsilon for QW with RPC}
Let $\mathcal{C}_f' \sim R_{v,v',t', C}$ and let $S$ be subset of $\mathcal{L}'$ of size $s$ sampled independently and uniformly at random. 
Write $E_1$ for the event that there exists $\cv' \in \mathcal{C}_f'$ with $\xv', \yv' \in \mathcal{B}_{\beta}(\cv') := \{\xv' \in S \mid |\xv' \land \cv'| = \beta\}$ satisfying $|\xv' \land \yv'| = e^*$ and $|\xv + \yv| = w$ (i.e., the $\beta$-bucket of $\cv'$ contains an $\mathcal{L}'$-solution). Then \begin{align*}
    \Pr[E_1] \geq \min(q, \Omega(q s^2 p))
\end{align*} 
where $q := \Pr[\exists \cv' \in \mathcal{C}_f' \text{ s.t. } \xv',\yv' \in \mathcal{B}_\beta(\cv')]$ for an arbitrary $\xv',\yv' \in \mathcal{L}'$ satisfying $|\xv' \land \yv'| = e^*$. 
\end{lemma}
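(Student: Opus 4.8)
The plan is to prove the lemma in two steps. First I would lower-bound the probability that the random vertex $S$ contains \emph{some} $\mathcal{L}'$-solution pair at all (ignoring the $\beta$-buckets of $\mathcal{C}_f'$), and then I would argue that, conditioned on this, the \emph{independent} randomness of $\mathcal{C}_f'$ places such a pair in a common $\beta$-bucket with probability at least $q$.

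For the first step, let $X$ denote the number of unordered pairs $\{\xv',\yv'\}\subseteq S$ that are $\mathcal{L}'$-solutions (i.e.\ $|\xv'\wedge\yv'|=e^*$ and $|\xv+\yv|=w$), and let $A=\{X\ge 1\}$. Since the elements of $\mathcal{L}'$ are i.i.d.\ uniform on $S_\alpha^v$ (equivalently, the elements of $\mathcal{B}_\alpha(\cv)$ are i.i.d.\ uniform on $\mathrm{Region}_\alpha(\cv)$) and $S$ is a uniformly random size-$s$ subset, a fixed pair of positions of $S$ carries two (essentially) independent uniform elements, so $\E[X]=\binom{s}{2}p(e^*)=\tilde\Theta(s^2p)$, where $p(e^*)$ is the probability that a random pair of bucket elements is an $\mathcal{L}'$-solution (in the notation of the proof of Theorem~\ref{thm: existence residuals for codes}) and the last estimate uses $p(e^*)=\tilde\Theta(p)$ from that theorem, absorbing subexponential factors throughout. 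The heart of this step is a second-moment bound: writing $X=\sum_{\{a,b\}}\mathbf{1}[\{a,b\}\text{ is an }\mathcal{L}'\text{-solution of }S]$, the pairs of pairs that are disjoint contribute at most $\E[X]^2$ to $\E[X(X-1)]$, whereas those sharing exactly one index contribute $O(s^3 p(e^*)^2)=\E[X]\cdot\tilde O(sp)$; since the assumption $s=\tilde O(1/\sqrt p)$ forces $sp=\tilde O(\sqrt p)$ to be subexponentially small, the latter is $o(\E[X])$. Hence $\E[X^2]\le(1+o(1))\E[X]+\E[X]^2$, and Paley--Zygmund (equivalently $\Pr[X\ge1]\ge\E[X]^2/\E[X^2]$) gives $\Pr[A]=\Omega(\min(1,\E[X]))=\Omega(\min(1,s^2p))$.

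For the second step I would condition on the realization of the pair $(\mathcal{L}',S)$. On the event $A$ there is a canonical (say, lexicographically first) $\mathcal{L}'$-solution pair $(\xv'_0,\yv'_0)\subseteq S$; it is a deterministic function of $(\mathcal{L}',S)$, and in particular satisfies $|\xv'_0\wedge\yv'_0|=e^*$. Because $\mathcal{C}_f'\sim R_{v,v',t',C}$ is sampled independently of $(\mathcal{L}',S)$, Lemma~\ref{lem: probability q} applies verbatim and yields $\Pr[\,\exists\,\cv'\in\mathcal{C}_f':\xv'_0,\yv'_0\in\mathcal{B}_\beta(\cv')\mid(\mathcal{L}',S)\,]=q$; since this event is contained in $E_1$, we get $\Pr[E_1\mid(\mathcal{L}',S)]\ge q\cdot\mathbf{1}_A$ pointwise. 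Taking expectations over $(\mathcal{L}',S)$ gives $\Pr[E_1]\ge q\,\Pr[A]=\Omega\!\left(q\cdot\min(1,s^2p)\right)$, which is the claimed bound (rewriting $\Omega(q\cdot\min(1,s^2p))$ as $\min(q,\Omega(qs^2p))$).

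The step I expect to be the main obstacle is the second-moment estimate, and specifically the bound on the shared-index cross term: making it rigorous requires the heuristic used throughout \cite{DEEK23} (recalled in Section~\ref{sec: sieving for codes}) that list elements behave like independent uniform vectors, so that for a fixed $\xv$ the conditional solution probabilities $\Pr[(\xv,\yv)\text{ is a solution}\mid\xv]$ are uniformly $\tilde O(p(e^*))$ rather than being dominated by a few atypically heavy elements. I would make this dependence explicit and keep the remaining combinatorics --- counting the pairs of pairs inside $S$ and passing from sampling without replacement to the i.i.d.\ model --- at the level of routine estimates.
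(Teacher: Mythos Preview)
Your approach is essentially the same as the paper's: define the event $E_0$ (your $A$) that $S$ contains an $\mathcal{L}'$-solution, bound $\Pr[E_1]\ge\Pr[E_1\mid E_0]\Pr[E_0]$, show $\Pr[E_1\mid E_0]\ge q$ by fixing a particular solution pair and invoking \Cref{lem: probability q} together with the independence of $\mathcal{C}_f'$, and combine. The only difference is that the paper simply asserts $\Pr[E_0]=\min(1,\Theta(s^2 p))$ without further argument, whereas you supply a Paley--Zygmund second-moment computation (using $s=\tilde O(1/\sqrt{p})$ to control the overlapping-pair cross term) to justify the lower bound; this makes your write-up more self-contained but does not change the route.
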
 

\begin{proof}[Proof of \Cref{lem: derivation of epsilon for QW with RPC}]
Define $E_0$ to be the event that there exist $\xv', \yv' \in S$ satisfying $|\xv' \land \yv'| = e^*$ and $|\xv + \yv| = w$. Note that $\Pr[E_0] = \min(1, \Theta(s^2 p(e^*))) = \min(1, \Theta(s^2 p))$. Therefore, we have that 
\begin{align*}
        \Pr[E_1] = \frac{\Pr[E_1 \mid E_0] \Pr[E_0]}{\Pr[E_0 \mid E_1]} \geq \Pr[E_1 \mid E_0] \Pr[E_0] = \Pr[E_1 \mid E_0] \min(1, \Theta(s^2 p)).
\end{align*}
So it remains to show that $\Pr[E_1 \mid E_0] \geq \Omega(q)$.  
So suppose that $E_0$ holds. Then there exists a pair $\xv',\yv' \in S$ satisfying $|\xv' \land \yv'| = e^*$ and $|\xv + \yv| = w$. It then follows that the probability that $E_1$ holds (conditional on $E_0$) is at least the probability that there exists $\cv' \in \mathcal{C}_f'$ such that this particular pair $\xv',\yv'$ is in the bucket of $\cv'$. 
In other words, 
\begin{align*}
    \Pr[E_1 \mid E_0] \geq \Pr[\exists \cv' \in \mathcal{C}_f' \text{ s.t. } \xv',\yv' \in \mathcal{B}_\beta(\cv')]
\end{align*}
for any $\xv',\yv' \in S$ satisfying $|\xv' \land \yv'| = e^*$ and $|\xv + \yv| = w$. 
By \Cref{lem: probability q}, we have that $\Pr[\exists \cv' \in \mathcal{C}_f' \text{ s.t. } \xv',\yv' \in \mathcal{B}_\beta(\cv')] = q$ from which it follows that $\Pr[E_1 \mid E_0] = \Omega(q)$, finishing the proof. 
\qed
\end{proof}

\paragraph{$\epsilon$, probability of a vertex being marked.} 
Note that $\epsilon = \Pr[E_1]$ for $E_1$ as defined in the statement of \Cref{lem: derivation of epsilon for QW with RPC}. It follows from the lemma that $\epsilon = \min(q, \Omega(q {s \choose 2} p))$, where $q := \Pr[\exists \cv' \in \mathcal{C}_f' \text{ s.t. } \xv',\yv' \in \mathcal{B}_\beta(\cv')]$ for any $\xv',\yv'$ satisfying $|\xv' \land \yv'| = e^*$.
Our condition on $s$ implies that ${s \choose 2} p = \Tilde{O}(1)$, and thus $\epsilon = \Tilde{\Omega}(q s^2 p)$.

\paragraph{Conclusion on expected runtime.}
Since the number of solution pairs in a bucket $\mathcal{B}_{\alpha}(\mathbf{c})$ is in expectation $t = O(|\mathcal{B}_{\alpha}(\mathbf{c})|^2 p)$, the expected runtime of the quantum walk, repeated  $t$ times, is as given in the theorem statement. 
\qed 
\end{proof}

\begin{remark}
    The proof can be adapted to relax the condition $s = \Tilde{O}(1/\sqrt{p})$ to $s = \Tilde{O}(1/\sqrt{pq})$ if we replace $\epsilon = s^2 pq$ by $\epsilon = \min(q, s^2 pq)$. However, when we make these changes in our implementation of the algorithm, a numerical optimization of the parameters does not result in better complexities. Therefore, we only present the proof for  $s = \Tilde{O}(1/\sqrt{p})$ here. 
\end{remark}

It remains to prove the claims on the memory complexity. 

\subsubsection{Memory complexity analysis}

\begin{proof}[Proof of \Cref{thm: bucket search using QW and RPC}: Part 2/2, memory complexity]

\paragraph{Classical memory.}
The algorithm stores the bucket $\mathcal{B}_\alpha(\cv)$ (or, equivalently, the list $\mathcal{L}'$), so the algorithm uses classical space $\Tilde{O}(|\mathcal{B}_\alpha(\cv)|)$.

\paragraph{QRACM.}
The quantum walk uses quantum access to the bucket $\mathcal{B}_\alpha(\cv)$, stored classically and of size $M_{QRACM} = \Tilde{O}(|\mathcal{B}_\alpha(\cv)|)$. 

\paragraph{Quantum memory and QRAQM.}
The quantum walk stores a superposition of vertices. Each vertex contains $s$ elements and some additional data, including the buckets of the second layer. (The other components of the data will not dominate the QRAQM cost, so we can safely ignore those.) Within each vertex, a vector $\xv'$ is inserted in $|\mathcal{VC}_\beta(\mathbf{x}')|$ buckets. Altogether, this is stored in a quantum register of size $M_Q = s + s \cdot \mathbb{E}[|\mathcal{VC}_\beta(\mathbf{x}')|]$ (which is the number of qubits used). Since we use the data structures from \cite{Amb07} (or \cite{BJLM13}) to efficiently delete and insert vectors in superposition, we use QRAQM of size $M_{QRAQM} = M_Q$.  
\qed
\end{proof}

\section{Numerical Results} 
\label{sec: numerical results}
\label{sec:numerics}
In this section, we summarize our numerical results on the asymptotic runtime and memory of four different algorithms for the near-neighbor search (NNS) problem defined in Section~\ref{sec: sieving for codes} (Problem~\ref{prob: NNS}), and consequently for code sieving. As our work is inspired by its lattice-based analog, we complete the analysis by also including the complexities of the lattice-based equivalents of the four algorithms. The algorithms we compare are the following:
\begin{itemize}
    \item \textsc{Classical}: The sieving algorithm originally introduced in \cite{DEEK23}. It is based on the RPC approach described in Section~\ref{sec: RPC}, which is the best known approach in the classical case. Its lattice-based analog is presented in \cite{BDGL16}.   
    \item  \textsc{Grover}: The first quantum algorithm introduced in our paper (\cref{sec:grover}). It is a natural modification from the classical RPC algorithm to the quantum model, obtained using Grover's algorithm. A lattice-based analog was introduced in \cite{Laa16}.
    \item \textsc{QW + LSF}: A more general quantum approach based on quantum walks in combination with the additional layer of locality-sensitive filtering, as explained in \cref{sec:quwalk}. Its lattice-based analog was introduced in \cite[Sec. 4]{AC:ChaLoy21}.
    \item \textsc{QW + LSF + Sparsification}: A variant of our quantum-walk algorithm obtained using sparsification, as explained in \cref{sec: sparsification}. The lattice-based analog is given in \cite[Sec. 5]{AC:ChaLoy21}.
\end{itemize}

\Cref{fig: Runtime comparison} shows the asymptotic runtime of the four algorithms in the code-based setting. The figures are obtained by calculating the asymptotic runtime in $100$ equidistant values of $\omega := w/n$ ranging in $[0, 0.5)$, where $n,w$ are parameters of Problem~\ref{prob: NNS} and $w$ is chosen such that there is a unique solution to the problem on average (i.e., we analyze the problem in the unique-decoding regime). 

\begin{figure}
    \centering 
    \includegraphics[width=0.8\textwidth]{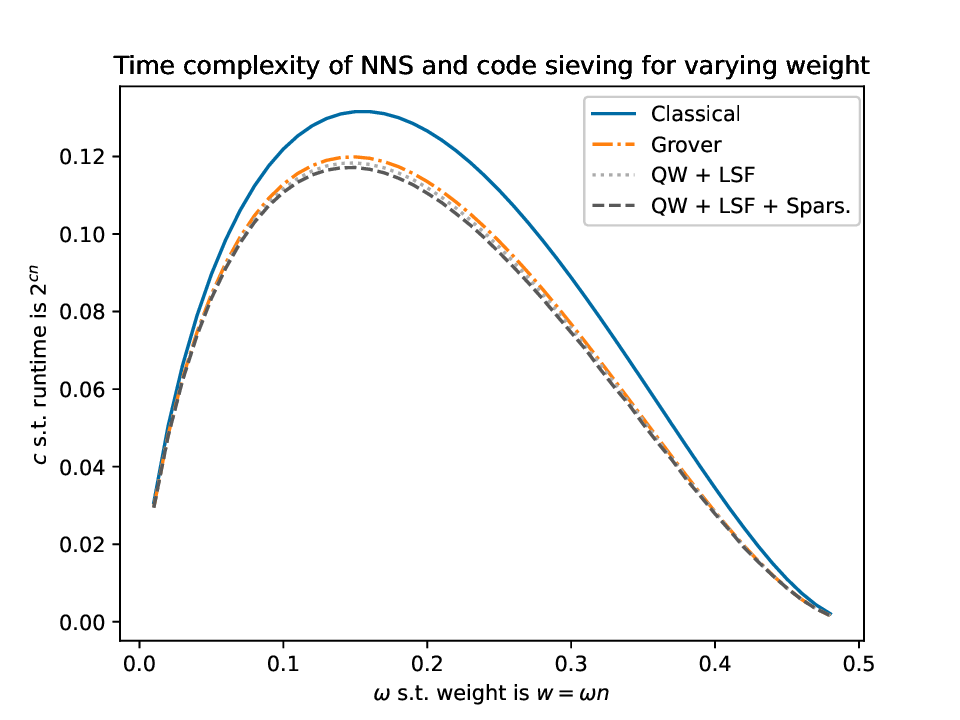}
    \caption{Comparison of the asymptotic runtime of the four algorithms solving $\NNS(n,w,N)$ for $w = \omega \cdot n$, where $\omega \in [0,0.5)$.}
    \label{fig: Runtime comparison}
\end{figure}

\newpage
\Cref{tab:HardestInstances} shows the calculated asymptotic runtime and memory for the hardest instances of NNS in the code-based and lattice-based settings, where by the \textit{hardest instances} we mean those values of $\omega$ for which the runtime curve in \Cref{fig: Runtime comparison} reaches its peak. The asymptotic memory is then calculated for the same value of $\omega$. As we distinguish four different types of memory, we denote them by $M_C, M_Q, M_{QRACM}, M_{QRAQM}$ and refer to them as classical memory, quantum memory, QRACM, and QRAQM, respectively.


\begin{table}[h!]
\begin{center}
	\begin{tabular}{| m{1.5cm}  m{5cm} | m{1.5cm} m{1.5cm} m{1.5cm} m{1.5cm} m{1.5cm} |}
        \hline
		& Algorithms & $t$ & $M_C$ & $M_Q$ & $M_{QRACM}$ & $M_{QRAQM}$ \\
		\hline
            \hline
            \multirow{4}{*}{Codes} & \textsc{Classical} \cite{DEEK23} & 0.132 & 0.093 & - & - & - \\
			& \textsc{Grover} & 0.120 & 0.094 & 0 & 0.026 & - \\
			& \textsc{QW + LSF} & 0.118 & 0.094 & 0.024 & 0.031 & 0.024 \\
			& \textsc{QW + LSF + Spars.} & 0.117 & 0.094 & 0.023 & 0.036 & 0.023 \\
		\hline
            \hline
            \multirow{4}{*}{Lattices} & \textsc{Classical} \cite{BDGL16} & $0.292$ & $0.208$ & - & - & - \\
			& \textsc{Grover} \cite{Laa16} & $0.265$ & $0.208$ & $0$ & $0.058$ & - \\
			& \textsc{QW + LSF} \cite{AC:ChaLoy21} & $0.261$ & $0.208$ & $0.053$ & $0.069$ & $0.053$ \\
			& \textsc{QW + LSF + Spars.} \cite{AC:ChaLoy21} & $0.257$ & $0.208$ & $0.050$ & $0.077$ & $0.050$ \\ 
		\hline
		\end{tabular}
\end{center}
\caption{The exponents of the asymptotic runtime (i.e., $t$ s.t.\ runtime is $2^{t n + o(n)}$) and corresponding memory exponents for the hardest instances of NNS for codes (i.e., for those $\omega$ that yield the highest runtime for each considered algorithm), and the asymptotic runtime and memory of their lattice analogs. A dash `-' means that the memory type is not used.}
\label{tab:HardestInstances}
\end{table}

\begin{remark}
    We emphasize that \Cref{tab:HardestInstances} does not provide the asymptotic complexity of information-set decoding algorithms, in contrast to what was presented in \cite{DEEK23}. In particular, the number 0.132 in the table refers to the runtime of the classical \textit{sieving} algorithm from \cite{DEEK23} (here referred to as \textsc{Classical}), and is not explicitly stated in their paper.    
\end{remark}

All numerical results were obtained using Python code, available in the aforementioned repository.

\begin{remark}\label{rem: reusable walk}
    Our repository also includes code to estimate the complexity of a variant of \textsc{QW+LSF+} \textsc{Sparsification} where we incorporate the reusable quantum-walk techniques from \cite{BCSS23} (in a similar way as they apply their techniques to \cite{AC:ChaLoy21}). Similar to their result on lattice sieving, we only obtain a minor improvement, namely an exponent of $t = 0.1169$ instead of $0.1171$. (In lattice sieving, \cite{BCSS23} obtain an  exponent of $t = 0.2563$ instead of $0.2570$.)
    We leave the elaboration of the analytical details of this approach as potential future work. 
\end{remark}

\subsubsection{Some observations.}

Our numerical results show that our quantum algorithms provide a speed-up in comparison with the classical runtime \cite{DEEK23}. While this is non-surprising, it is not necessarily guaranteed. Despite the structural differences between codes and lattices (particularly, their metric), we observe that similar techniques yield comparable improvements in the asymptotic runtime for sieving in both contexts. Grover's algorithm applied to the classical version delivers the most substantial quantum speed-up. Nevertheless, our quantum-walk algorithms outperform the version using Grover, with a slight improvement obtained using the sparsification technique. 
It appears that fundamentally different quantum techniques are needed to obtain more significant improvements. 

While our work only focused on improving the asymptotic time complexities of sieving algorithms, it is also important to quantify the memory used by these algorithms, and minimize it. 
In particular, note that the improved runtime obtained when moving from classical to Grover, and subsequently to the quantum-walk algorithms, comes at the cost of an increased memory complexity and extra assumption about the memory model. Specifically, the Grover version uses quantum memory and QRACM access; quantum walks additionally use QRAQM, which is an even stronger assumption about the memory model. 
Note that this trade-off between improved runtime and assumption regarding the memory model is consistent with the observations from other quantum algorithms in code-based (and lattice-based) cryptanalysis. 

\section{On the Application to Quantum Information-Set Decoding}
\label{sec: sievingISD}
Information-set decoding (ISD) algorithms are known as the most efficient generic attacks against the decoding problem for a wide range of parameters.\footnote{For a more specific range of parameters, there exist more efficient attacks such as statistical decoding \cite{AC:CDMT22}.}
Each of these can be seen as improvements of the algorithm originally introduced by Prange\cite{Pra62}. We view ISD algorithms as a framework, as formulated in \cite{AC:FinSen09}. Specifically, following \cite{EPRINT:GuoJohNgu23} and \cite{DEEK23} we see the sieving algorithm as a subroutine of an ISD algorithm within the framework. We refer to the resulting algorithm as \textit{SievingISD}.  
In this section, we analyze a quantum analog of SievingISD.

\begin{remark}
    In this section, the parameters $n,w$ have a different meaning than in previous sections.
\end{remark}

\subsection{ISD and SievingISD}

The central computational problem underlying the security of code-based primitives is the aforementioned decoding problem (Problem~\ref{prob: decoding problem, N sol}) with $N = 1$. For completeness, we restate the $N = 1$ variant here. We are primarily interested in the complexity of the cryptographically relevant instances of this problem. Namely, we analyze the problem for $\Cs$ being a random $[n,k]$ code, and $w$ chosen such that there is only one solution to the problem on average. We refer to these instances as a \textit{unique decoding instance} with parameters $(n,k,w)$.

\begin{problem}[Decoding problem, $\DP(n,k,w)$]\label{prob: decoding problem, 1 sol}
Given an $[n,k]$ binary linear code $\Cs$ and an integer value $w$, find a codeword $\xv_{\Cs} \in \Cs$ of weight $|\xv_{\Cs}| := w$. 
\end{problem}

Algorithm~\ref{alg: ISD} presents the ISD framework, using the same formulation as in \cite{DEEK23}. As part of the input, the algorithm is given an oracle $\oracleA$ that, given a $[n', k]$ binary linear code $\Cs'$ and an integer $w'$, returns $N$ independent and uniformly random weight-$w'$ codewords in $\Cs'$. We use the notation $\pi_{\xv}(\cdot)$ for code puncturing, which is defined as follows.

\begin{definition}[Code puncturing, $\pi_{\xv}(\cdot)$]
For a linear $[n,k]$ code $\Cs$ and a binary vector $\xv \in \F_2^n$ with $|\xv|=n'$ we define by $\pi_{\xv}: \cv \mapsto \cv \wedge \xv$ the puncturing function relative to the support of $\xv$, and we define $\pi_{\xv}(\Cs)$ to be the corresponding punctured code, which is a $[n',k]$ binary linear code.
\end{definition}

\begin{algorithm}
\caption{Information-set decoding (ISD)}\label{alg: ISD}
\DontPrintSemicolon
\SetKwInOut{Input}{Input}\SetKwInOut{Output}{Output}

\Input {$[n,k]$ linear code $\mathcal{C}$, weight $w$, and an oracle $\oracleA$ as described above (for fixed $n' > k$, $w'$, $N$)}
\Output{$\mathbf{c} \in \mathcal{C}$ such that $|\mathbf{c}| = w$}
\vspace{0.2cm}
\While{True}{
Choose $\mathbf{x} \in \Ss_{n'}^n$ uniformly at random and repeat if $\dim(\pi_{\mathbf{x}}(\mathcal{C})) \neq k$\; 
$\mathcal{L} \leftarrow \mathcal{A}(\pi_{\mathbf{x}}(\mathcal{C}), w')$\;
\If{$\exists~ \mathbf{y} \in \mathcal{L} \colon |\pi_{\mathbf{x}}^{-1}(\mathbf{y})| = w$}{
\Return $\pi_{\mathbf{x}}^{-1}(\mathbf{y})$}
}
\end{algorithm}

We remark that for a uniformly random $\xv \in \F_2^n$ with $|\xv| = n'$, it happens with constant probability that $\dim(\pi_{\xv}(\mathcal{C})) = k$ \cite{Cooper2000}. 

We refer to \textit{SievingISD} as any ISD algorithm in the form of Algorithm~\ref{alg: ISD} that uses a sieving algorithm (Algorithm~\ref{alg: sieving}) as input oracle $\mathcal{A}$. Recall that by \Cref{lem: lower bound on N for NNS} and the subsequent discussion that this requires $N = \Omega\left(\frac{{n' \choose w'}}{{w' \choose w'/2} {n' - w' \choose w'/2}}\right)$.

\subsection{Quantum ISD and Quantum SievingISD} 

We will now discuss quantum analogs of the ISD framework and SievingISD. Quantum ISD algorithms (e.g., \cite{PQCRYPTO:KacTil17, PQCRYPTO:Kirshanova18}) are based on the idea that one can apply amplitude amplification (AA) \cite{BHMT02} to speed up the search for the solution of the decoding problem over multiple iterations of the ISD. Furthermore, the subroutine $\mathcal{A}$ is allowed to be quantum. 
It results in the following quantum analog of \cite[Theorem~3.1]{DEEK23}. Here, and in the remainder of Section~\ref{sec: sievingISD}, we only focus on the runtime of (quantum) ISD algorithms, not their space usage.

\begin{theorem}[Quantum ISD] \label{thm: complexity of quantum ISD assuming existence oracle}
    Let $\mathcal{C}$ be a unique decoding instance with parameters $(n,k,w)$. Let $n' > k$ and let $w' \leq n'$. 
    Suppose $\mathcal{A}$ is an algorithm that returns $N$ independent and uniformly random weight-$w'$ codewords in a given $[n', k]$ binary linear code, where $N \leq {n' \choose w'}/2^{n' - k}$. 
    \\
    If the expected runtime of $\mathcal{A}$ is $T_{\mathcal{A}}$, 
    then there is a quantum algorithm that returns a weight-$w$ codeword in $\mathcal{C}$, if one exists, in expected time \begin{align*}
        \Tilde{O}\left(\frac{T_{\mathcal{A}}}{\sqrt{p_1 p_2}}\right)  
    \end{align*} 
    where $p_1 := \frac{{n' \choose w'} {n - n' \choose w - w'}}{{n \choose w}}$ and $p_2 := \frac{N \cdot 2^{n' - k}}{{n' \choose w'}}$. 
\end{theorem}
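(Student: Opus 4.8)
The plan is to reuse the classical analysis of the ISD framework of Algorithm~\ref{alg: ISD} (in particular the single-iteration success probability computed in \cite[Theorem~3.1]{DEEK23}) and to replace the outer \textbf{while}-loop by amplitude amplification (\Cref{thm: AA}), so that the expected $\Tilde{\Theta}(1/(p_1p_2))$ iterations needed classically become $\Tilde{\Theta}(1/\sqrt{p_1p_2})$ calls to the loop body.

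First I would establish that one execution of the loop body of Algorithm~\ref{alg: ISD} succeeds — i.e.\ returns some $\yv \in \mathcal{L}$ with $|\pi_{\xv}^{-1}(\yv)| = w$ — with probability $p := \Tilde{\Theta}(p_1 p_2)$. Fix a weight-$w$ codeword $\xv_{\mathcal{C}} \in \mathcal{C}$ (which exists by assumption). Success factors through three events: (i) the random $\xv$ with $|\xv|=n'$ satisfies $\dim(\pi_{\xv}(\mathcal{C})) = k$, which by \cite{Cooper2000} happens with constant probability and hence costs only a constant factor; (ii) conditioned on (i), the projection $\pi_{\xv}(\xv_{\mathcal{C}})$ has Hamming weight exactly $w'$, which is the combinatorial event that the size-$n'$ coordinate set hits exactly $w'$ of the $w$ support coordinates of $\xv_{\mathcal{C}}$, of probability $p_1 = {n' \choose w'}{n-n' \choose w-w'}/{n \choose w}$; and (iii) conditioned on (i)--(ii), the list $\mathcal{L}$ of $N$ independent uniformly random weight-$w'$ codewords of $\pi_{\xv}(\mathcal{C})$ contains the specific weight-$w'$ codeword $\pi_{\xv}(\xv_{\mathcal{C}})$. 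Since a random $[n',k]$ code has $\Tilde{\Theta}({n' \choose w'}/2^{n'-k})$ codewords of weight $w'$, a single uniform weight-$w'$ codeword equals $\pi_{\xv}(\xv_{\mathcal{C}})$ with probability $\Tilde{\Theta}(2^{n'-k}/{n' \choose w'})$, and the $N$-fold union bound — tight up to constants precisely because the hypothesis $N \le {n' \choose w'}/2^{n'-k}$ forces $p_2 \le 1$ — gives probability $\Tilde{\Theta}(p_2)$ with $p_2 = N\,2^{n'-k}/{n' \choose w'}$. Multiplying the three contributions yields $p = \Tilde{\Theta}(p_1p_2)$.

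Next I would apply amplitude amplification. The loop body defines a procedure $\mathcal{A}'$ (sample $\xv$, run $\mathcal{A}$ on $\pi_{\xv}(\mathcal{C})$, inspect $\mathcal{L}$) that can be implemented as a reversible quantum circuit, and the predicate ``$\mathcal{L}$ contains a weight-$w$ lift'' is efficiently checkable, giving the efficient phase oracle $U_f$ required by \Cref{thm: AA}; the success probability $p$ is in general unknown, so I would invoke the unknown-$p$ variant of amplitude amplification (\cite[Theorem~3]{BHMT02}), which uses an expected $\Theta(1/\sqrt{p})$ applications of $\mathcal{A}'$ and $U_f$. Since $\mathcal{A}$ only comes with a bound $T_{\mathcal{A}}$ on its \emph{expected} runtime, I would first make its running time worst-case by standard truncation (or, alternatively, use variable-time amplitude amplification), at the cost of only polylogarithmic factors and a constant-factor loss in $p$; each application of $\mathcal{A}'$ then costs $\Tilde{O}(T_{\mathcal{A}})$, and the overall expected runtime is $O(1/\sqrt{p}) \cdot \Tilde{O}(T_{\mathcal{A}}) = \Tilde{O}(T_{\mathcal{A}}/\sqrt{p_1p_2})$. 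If no weight-$w$ codeword exists, then $p = 0$ and the procedure simply never returns one (it can be halted after a prescribed number of rounds).

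The routine parts are the quantum bookkeeping (reversibility, truncating the expected runtime of $\mathcal{A}$, and coping with an unknown $p$). The genuine obstacle — identical to the one in the classical proof of \cite[Theorem~3.1]{DEEK23} — is step (iii): showing that the single-iteration success probability matches $p_1p_2$ up to polynomial factors, not merely that it is bounded below by it. This needs a concentration statement for the number of weight-$w'$ codewords in a random $[n',k]$ code (a second-moment / Chebyshev argument), together with care that ``$N$ independent uniformly random weight-$w'$ codewords'' behaves like $N$ independent coupon draws; the resulting polynomial slack is absorbed into the $\Tilde{O}(\cdot)$, and since the oracle model and parameter constraints here coincide with those in \cite{DEEK23}, this step can be taken over essentially verbatim.
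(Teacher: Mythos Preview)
Your proposal is correct and follows essentially the same approach as the paper: identify the single-iteration success probability as $\Tilde{\Theta}(p_1p_2)$ by appealing to the classical analysis in \cite[Theorem~3.1]{DEEK23}, then replace the outer loop by amplitude amplification to obtain $\Tilde{O}(T_{\mathcal{A}}/\sqrt{p_1p_2})$. The paper's own proof is a two-line sketch that simply cites \cite{DEEK23} for the success probability and invokes \Cref{thm: AA}; your version is more explicit about the decomposition into $p_1$ and $p_2$, the unknown-$p$ variant of amplitude amplification, and the truncation of $\mathcal{A}$'s expected runtime, but the underlying argument is the same.
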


\begin{proof} 
(This result is not new, but we sketch the proof for completeness.) 
The correctness follows from the proof of Theorem 3.1 in \cite{DEEK23}. Moreover, since the success probability of the ISD algorithm (Algorithm~\ref{alg: ISD}) is given by $p_1 p_2$, amplitude amplification (\Cref{thm: AA}) allows the algorithm to succeed after $\frac{1}{\sqrt{p_1 p_2}}$ iterations. One iteration is dominated by the time it takes for algorithm $\mathcal{A}$, which is~$T_{\mathcal{A}}$. 
\qed
\end{proof}

\begin{remark}\label{rem: upper bound on N}
    We recall from \Cref{sec: sieving for codes} that the upper bound on $N$ is to ensure that there exist $N$ codewords on average (for a random code $\mathcal{C}$) as output of Algorithm~\ref{alg: sieving}. Note that it ensures $p_2 \leq 1$.  
\end{remark}

If $N = \Omega\left({n' \choose w'}/\left({w' \choose w'/2} {n' - w' \choose w'/2}\right)\right)$, we can use a  (classical or quantum) sieving algorithm as the subroutine $\mathcal{A}$, resulting in a natural quantum analog of SievingISD. We refer to it as \textit{quantum SievingISD}. It turns out that quantum SievingISD does not allow for runtime close to the best known quantum ISD algorithms, as we will now show.

\subsection{Limitations of Quantum SievingISD}\label{sec: limitations of quantum SievingISD}

We numerically illustrate that, contrary to the classical setting, \textit{quantum} SievingISD cannot do better than what we will refer to as \textit{quantum Prange}, the classical Prange algorithm quantized with AA due to \cite{Bernstein10}. Quantum Prange was the first quantum ISD algorithm, and therefore a natural starting point for comparison.\footnote{We remark that the more recent quantum ISD algorithms in \cite{PQCRYPTO:KacTil17, PQCRYPTO:Kirshanova18} have even better time complexities than quantum Prange.}  
We recall its time complexity. 

\begin{lemma}[Quantum Prange, \cite{Bernstein10}]\label{lem: quantum Prange}
    Consider a unique decoding instance of DP with parameters $n, k, w$. Then there is a quantum algorithm that solves it in time \begin{align*}
    \Tilde{O}\left(\sqrt{\frac{{n \choose w}}{{n - k \choose w}}}\right). 
\end{align*}
\end{lemma}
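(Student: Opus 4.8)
The plan is to recover quantum Prange as the special case of the quantum-ISD statement (\Cref{thm: complexity of quantum ISD assuming existence oracle}) in which the inner subroutine $\mathcal{A}$ is completely trivial. Concretely, I would instantiate that theorem with $n' = k$, $w' = 1$, and let $\mathcal{A}$, on input $(\pi_{\xv}(\Cs), 1)$, return all weight-one codewords of the punctured code. The hypothesis $n' > k$ in \Cref{thm: complexity of quantum ISD assuming existence oracle} is only there to leave room for a nontrivial sieve inside; its proof goes through verbatim for $n' = k$, and one may alternatively take $n' = k+1$ at the cost of slightly messier constants.

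With $n' = k$, whenever the dimension check in Algorithm~\ref{alg: ISD} passes (which happens with constant probability) the punctured code $\pi_{\xv}(\Cs)$ equals all of $\F_2^{k}$, so its weight-one codewords are exactly the $k$ standard basis vectors of $\F_2^{k}$; enumerating them costs $\poly(n)$, i.e.\ $T_{\mathcal{A}} = \Tilde{O}(1)$. Returning all of them means $N = k$, which is admissible since $N = {n' \choose w'}/2^{n'-k} = k$, and it yields $p_2 = N\cdot 2^{n'-k}/{n' \choose w'} = 1$. For $p_1$ I would use ${n-k \choose w-1} = \tfrac{w}{\,n-k-w+1\,}{n-k \choose w}$ to write
\[
  p_1 \;=\; \frac{{n' \choose w'}{n-n' \choose w-w'}}{{n \choose w}} \;=\; \frac{k\,{n-k \choose w-1}}{{n \choose w}} \;=\; \frac{kw}{\,n-k-w+1\,}\cdot\frac{{n-k \choose w}}{{n \choose w}}.
\]
In the unique-decoding regime one has $w \le n-k$ (so the last expression is well defined) and $k, w, n-k, n-k-w$ are all $\Theta(n)$, so the prefactor $\tfrac{kw}{n-k-w+1}$ is polynomially bounded in $n$. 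Plugging into the runtime of \Cref{thm: complexity of quantum ISD assuming existence oracle} gives expected time $\Tilde{O}\big(T_{\mathcal{A}}/\sqrt{p_1 p_2}\big) = \Tilde{O}\big(\poly(n)\cdot\sqrt{{n \choose w}/{n-k \choose w}}\big)$; since $\sqrt{{n \choose w}/{n-k \choose w}} = 2^{\Theta(n)}$, the polynomial factors are absorbed into $\Tilde{O}(\cdot)$, giving exactly the claimed bound.

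Conceptually this instantiation is just Prange's information-set decoding \cite{Pra62} made quantum: the outer loop picks a uniformly random size-$k$ coordinate set, bets that the unique weight-$w$ codeword puts exactly one support coordinate inside it (and the remaining $w-1$ outside), tests the $k$ resulting candidate codewords, and amplitude amplification over the outer loop turns the classical expected cost $\poly(n)\cdot{n \choose w}/{n-k \choose w}$ into its square root \cite{Bernstein10}. I do not expect any real obstacle here beyond bookkeeping: the points to verify are the oracle side condition $N \le {n' \choose w'}/2^{n'-k}$ and that the binomial ratio ${n-k \choose w-1}/{n-k \choose w}$ — equivalently, the discrepancy between the codeword-finding problem and the syndrome-decoding form of Prange — costs only a polynomial factor, which vanishes inside $\Tilde{O}$.
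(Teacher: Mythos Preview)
The paper does not actually prove \Cref{lem: quantum Prange}; it simply states the bound and cites \cite{Bernstein10}. Your proposal goes further and derives the bound from the paper's own \Cref{thm: complexity of quantum ISD assuming existence oracle}, which is a perfectly valid and self-contained route.

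Your argument is correct. The choice $w'=1$ (rather than the $w'=0$ the paper later associates with Prange in \Cref{sec: limitations of quantum SievingISD}) is the right adaptation for the \emph{codeword-finding} formulation of $\DP$: with $n'=k$ and $w'=0$ the only weight-$0$ codeword is $\mathbf{0}$, whose preimage is $\mathbf{0}$, so the algorithm would never succeed. Taking $w'=1$ fixes this at the cost of the polynomial factor $kw/(n-k-w+1)$ you identify, which is indeed $\poly(n)$ since $w\le n-k$ in the unique-decoding regime and thus harmless inside $\Tilde{O}(\cdot)$. Two small remarks: (i) strictly speaking \Cref{thm: complexity of quantum ISD assuming existence oracle} asks $\mathcal{A}$ to output $N$ \emph{uniformly random} weight-$w'$ codewords, whereas your oracle outputs \emph{all} $k$ of them deterministically; this only helps, since it makes $p_2=1$ exactly rather than in expectation, so the conclusion stands. (ii) You could avoid the side discussion about $n'>k$ versus $n'=k$ by taking $n'=k+1$, $w'=1$, and letting $\mathcal{A}$ enumerate the (at most $k+1$) weight-$1$ codewords of the punctured $[k{+}1,k]$ code; the same computation gives the same $\Tilde{O}$ bound.
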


Let us first formalize our claim. Recall that any (classical or quantum) SievingISD algorithm requires $N$ to satisfy \begin{align}\label{eq: bounds on N}
        N = \Omega\Bigg(\frac{{n' \choose w'}}{{w' \choose \frac{w'}{2}} {n'-w' \choose \frac{w'}{2}}}\Bigg) \quad \text{ and } \quad N = O\left(\frac{{n' \choose w'}}{2^{n' - k}}\right).
    \end{align}
While the ISD framework itself imposes the upper bound (see \Cref{rem: upper bound on N}), the lower bound is imposed due to using a \textit{sieving} subroutine (recall \Cref{lem: lower bound on N for NNS}). The lower bound turns out to be a bottleneck in being able to improve over quantum Prange, as the following claim (and its justification) illustrates. Since our justification of the claim is partly numerical, we do not refer to it as a `theorem'.

\begin{claim}\label{claim: no sieving-based quantum ISD}
    Consider a unique decoding instance of DP with parameters $n, k, w$. 
    For all $n', w', N \in \N$ satisfying Equation~\eqref{eq: bounds on N}, there is no (classical or quantum) algorithm for the oracle $\mathcal{A}$ in Algorithm~\ref{alg: ISD} such that the resulting quantum ISD algorithm has a better runtime than quantum Prange \cite{Bernstein10}. 
\end{claim}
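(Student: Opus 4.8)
\textbf{Proof plan for Claim~\ref{claim: no sieving-based quantum ISD}.}
The plan is to lower-bound the runtime of \emph{any} quantum SievingISD instantiation by the best achievable runtime of the oracle $\mathcal A$ itself, and then show that this lower bound already exceeds the quantum Prange runtime from \Cref{lem: quantum Prange} for every admissible choice of $(n',w',N)$. First I would invoke \Cref{thm: complexity of quantum ISD assuming existence oracle}: the resulting quantum ISD algorithm runs in expected time $\Tilde O(T_{\mathcal A}/\sqrt{p_1 p_2})$, where $T_{\mathcal A}$ is the expected runtime of the sieving subroutine, and $p_1,p_2$ are as defined there. Since $T_{\mathcal A}$ is the time to produce a list of $N$ codewords, and the output list alone must be written down, we always have the trivial bound $T_{\mathcal A} = \Omega(N)$ (in fact sieving is considerably more expensive, but this crude bound already suffices). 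Combining these, quantum SievingISD runs in time $\Omega\!\left(N/\sqrt{p_1 p_2}\right)$, and substituting $p_2 = N\cdot 2^{n'-k}/\binom{n'}{w'}$ gives a runtime of
\begin{align*}
    \Omega\!\left(\frac{N}{\sqrt{p_1 p_2}}\right) = \Omega\!\left(\sqrt{\frac{N \binom{n'}{w'}}{p_1 \, 2^{n'-k}}}\right).
\end{align*}

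Next I would plug in the lower bound on $N$ forced by the use of a sieving subroutine (\Cref{lem: lower bound on N for NNS}): $N = \Omega\!\left(\binom{n'}{w'}\big/\big(\binom{w'}{w'/2}\binom{n'-w'}{w'/2}\big)\right)$. Using $p_1 = \binom{n'}{w'}\binom{n-n'}{w-w'}\big/\binom{n}{w}$ and simplifying, the runtime lower bound becomes
\begin{align*}
    \Omega\!\left(\sqrt{\frac{\binom{n'}{w'}^2 \, \binom{n}{w}}{\binom{w'}{w'/2}\binom{n'-w'}{w'/2}\,\binom{n'}{w'}\binom{n-n'}{w-w'}\, 2^{n'-k}}}\right) = \Omega\!\left(\sqrt{\frac{\binom{n'}{w'}\binom{n}{w}}{\binom{w'}{w'/2}\binom{n'-w'}{w'/2}\binom{n-n'}{w-w'}\, 2^{n'-k}}}\right).
\end{align*}
It then remains to show that, for every valid triple $(n',w',N)$ in the unique-decoding regime (so $\binom{n}{w}\approx 2^{n-k}$), this quantity is at least $\Tilde\Omega\!\left(\sqrt{\binom{n}{w}/\binom{n-k}{w}}\right)$, the quantum Prange runtime. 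This reduces to a purely asymptotic inequality between binomial coefficients in the exponents $\omega=w/n$, $\nu = n'/n$, $\omega' = w'/n'$; I would establish it by optimizing the left-hand exponent over $\nu,\omega'$ numerically (as the claim's preamble explicitly permits a partly numerical justification) and checking it never drops below the quantum Prange exponent $\frac12\big(h(\omega) - (1-\kappa)h(\tfrac{\omega}{1-\kappa})\big)$, where $h$ is the binary entropy and $\kappa = k/n$.

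The main obstacle I expect is the last step: proving (or numerically certifying over the full two-dimensional parameter range $(\nu,\omega')$, for each relevant $(\kappa,\omega)$ on the unique-decoding curve) that the sieving-imposed lower bound on $N$ can never be made small enough to beat quantum Prange. The intuition — and what the numerics will confirm — is that the lower bound $N = \Omega\big(\binom{n'}{w'}/(\binom{w'}{w'/2}\binom{n'-w'}{w'/2})\big)$ is exponentially large whenever $w' = \Theta(n')$, which forces $T_{\mathcal A}$ (hence the whole ISD runtime) to be large; pushing $w'$ or $n'$ toward the degenerate regime where $N$ could be polynomial makes $p_1 p_2$ collapse correspondingly, so the ratio $N/\sqrt{p_1 p_2}$ stays above the Prange bound. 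I would also note two natural ways one might try to escape this bottleneck — decreasing the required output list size $N$ (ruled out by \Cref{lem: lower bound on N for NNS}, which is intrinsic to sieving), or tolerating a list that shrinks across iterations (which breaks the heuristic underlying \Cref{thm: from NNS to DP}) — and explain why neither circumvents the lower bound, matching the discussion promised in the introduction.
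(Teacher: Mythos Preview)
Your proposal is correct and follows essentially the same route as the paper: use the trivial bound $T_{\mathcal A}\ge N$, substitute into $T_{\mathcal A}/\sqrt{p_1p_2}$, minimize over $N$ by taking the sieving-imposed lower bound from \Cref{lem: lower bound on N for NNS}, arrive at exactly the same expression $\sqrt{\binom{n'}{w'}\binom{n}{w}\big/\big(\binom{w'}{w'/2}\binom{n'-w'}{w'/2}\binom{n-n'}{w-w'}2^{n'-k}\big)}$, and then verify numerically (under the constraint $\binom{w'}{w'/2}\binom{n'-w'}{w'/2}\ge 2^{n'-k}$ induced by \eqref{eq: bounds on N}) that this never beats the quantum Prange exponent. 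The paper also notes, as you anticipate, that the numerical optimum degenerates to $(n',w')\approx(k,0)$, i.e., Prange itself.
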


\textit{Justification of the claim.} 
A trivial lower bound on the runtime of \textit{any} classical or quantum algorithm for the oracle $\mathcal{A}$ in Algorithm~\ref{alg: ISD} (with corresponding parameters $n', w', N$) is given by $N$, since the framework requires it to output $N$ solutions.  
Thus, given any such algorithm $\mathcal{A}$ and parameters $n', w', N$ satisfying Equation~\eqref{eq: bounds on N}, the runtime $T$ of the resulting quantum ISD algorithm must satisfy \begin{align*}
    T \geq \frac{N}{\sqrt{p_1 p_2}} &= \sqrt{\frac{ N {n \choose w}}{2^{n' - k} {n - n' \choose w - w'}}}
\end{align*}
where $p_1 := \frac{{n' \choose w'} {n - n' \choose w - w'}}{{n \choose w}}$ and $p_2 := \frac{N 2^{n' - k}}{{n' \choose w'}}$ as in \Cref{thm: complexity of quantum ISD assuming existence oracle}. 
For fixed $n', w'$, this lower bound on $T$ is minimal when $N$ is as small as possible, i.e., when $N = \Theta(N_{n', w'})$ for $N_{n',w'} := {n' \choose w'}/\left({w' \choose \frac{w'}{2}} {n'-w' \choose \frac{w'}{2}}\right)$. (In the remainder of the argument, we will leave out constant factors and assume for simplicity that $N = N_{n', w'}$.)
It follows that $T$ is lower bounded by \begin{align}
    \min_{(n', w', N) \text{ satisfying } \eqref{eq: bounds on N}} \sqrt{\frac{ N {n \choose w}}{2^{n' - k} {n - n' \choose w - w'}}} = \min_{(n', w') \text{ s.t. } {w' \choose \frac{w'}{2}} {n'-w' \choose \frac{w'}{2}} \geq 2^{n' - k}} \sqrt{\frac{ {n' \choose w'} {n \choose w}}{{w' \choose \frac{w'}{2}} {n'-w' \choose \frac{w'}{2}} 2^{n' - k} {n - n' \choose w - w'}}}. \label{eq: lower bound Quantum-SievingISD to be optimized}
\end{align} 

\noindent We want to show that this is never better than the runtime of quantum Prange, i.e., $\sqrt{\frac{{n \choose w}}{{n - k \choose w}}}$.

Numerically, we minimized Equation~\eqref{eq: lower bound Quantum-SievingISD to be optimized}; our code is publicly available in the aforementioned GitHub repository. We obtain that, for all  $n', w'$ satisfying ${w' \choose \frac{w'}{2}} {n'-w' \choose \frac{w'}{2}} \geq 2^{n' - k}$, it asymptotically holds that \begin{align*}
    \min_{(n', w') \text{ s.t. } {w' \choose \frac{w'}{2}} {n'-w' \choose \frac{w'}{2}} \geq 2^{n' - k}} \sqrt{\frac{ {n' \choose w'} {n \choose w}}{{w' \choose \frac{w'}{2}} {n'-w' \choose \frac{w'}{2}} 2^{n' - k} {n - n' \choose w - w'}}} \geq \sqrt{\frac{{n \choose w}}{{n - k \choose w}}}.
\end{align*}   
(Note here that the constraint ${w' \choose \frac{w'}{2}} {n'-w' \choose \frac{w'}{2}} \geq 2^{n' - k}$ is induced by Equation~\eqref{eq: bounds on N}.)
\qed

In fact, the optimal values obtained through numerical optimization essentially correspond to the values characterizing quantum Prange: $n' = k$ and $w' = 0$.

\begin{remark}[Comparison with the classical setting]
    The same argument does not apply to \textit{classical} SievingISD (non-surprisingly, since \cite{DEEK23} have shown that it indeed outperforms classical Prange). There, a trivial lower bound on the runtime of the oracle $\mathcal{A}$ (for any parameters $n', k, w', N$) would again be $N$, giving a (possibly non-tight) lower bound on the time complexity of classical SievingISD of $N/(p_1p_2)$. Unlike in the quantum setting, the dependency (and, in particular, the lower bound) on $N$ disappeared as it is canceled out by $p_2$, namely $N/(p_1p_2) = {n \choose w}/\left({n - n' \choose w - w'} 2^{n' - k} \right)$ since $p_2$ is linear in $N$. 
\end{remark}

\subsection{On Overcoming the Limitations of Quantum SievingISD}

The previous section illustrates limitations of the presented quantum SievingISD framework and implies that the framework should be adapted to outperform quantum Prange \cite{Bernstein10}. Specifically, the main bottleneck appears to be the \textit{lower bound} on the output size $N$ of the sieving subroutine imposed by \Cref{lem: lower bound on N for NNS} (where we emphasize that in this section we use $n', w'$ to specify the dimension and weight). Recall that the factors $T_{\mathcal{A}}$ and $p_2$ in the expected runtime $\Tilde{O}(T_{\mathcal{A}}/\sqrt{p_1p_2})$ of a quantum ISD algorithm both depend on $N$.  
Here, we present two natural approaches for potentially overcoming these limitations and explain why neither of them works. 

\subsubsection{Approach 1: From Pairs to Tuples.}
In this paper, we were considering finding pairs $(\xv_1,\xv_2)$ of vectors in our input list $\mathcal{L}$ such that $|\xv_1 + \xv_2| = w'$ (called \textit{solution pairs}). To guarantee the existence of $N = |\mathcal{L}|$ solution pairs, \Cref{lem: lower bound on N for NNS} implied that we need the lower bound on $N$. 
Inspired by lattice-based cryptanalysis, an idea to obtain a smaller lower bound on $N$ is to focus instead on finding $t$-tuples (for some $t > 2$).\footnote{The intuition comes from tuple-sieving algorithms (e.g., \cite{AC:KMPM19}) which consider $t>2$ to reduce the lower bound on $N$.}
That is, the sieving algorithm is instructed to (repeatedly) find $N$ tuples $(\xv_1,\dots, \xv_t) \in \mathcal{L}^t$ satisfying $|\xv_1 + \ldots + \xv_t| = w'$ for a given list $\mathcal{L}$ of size $N$. We refer to such tuples as \textit{$t$-solutions}. Note that, for $t = 2$, we recover the original setting in which the algorithm searches for solution pairs. 

Specifically, for $t \geq 2$, we say that a  \textit{$t$-sieving algorithm} is an algorithm constructed similarly to the 2-sieving algorithm in Algorithm~\ref{alg: sieving}, but instead searching for $t$-solutions: in each iteration $i$ of the sieving part, it aims to find $N$ $t$-solutions given a list $\mathcal{L}$ of $N$ independent and uniformly random vectors from $\Ss_{w'}^{n'}$. Writing $p^{(t)}$ for the probability that a $t$-tuple of independent and uniformly random vectors in $\Ss_{w'}^{n'}$ forms a $t$-solution, 
the expected number of $t$-solutions in $\mathcal{L}$ is then given by $N^t p^{(t)}$.  
To ensure that, on average, there exist at least $N$ $t$-solutions, the list size $N$ thus needs to satisfy $N^t p^{(t)} \geq N$, possibly resulting in a reduced lower bound on the output size $N$.

We refer to \textit{quantum $t$-sieving ISD} as the resulting quantum ISD algorithm where we instantiate the oracle $\mathcal{A}$ with a $t$-sieving algorithm, where we recall that $\mathcal{A}$ aims to find $N$ solutions to $\DP(n',k,w')$.  
We keep the meaning of $p_1$ and $p_2$ from \Cref{thm: complexity of quantum ISD assuming existence oracle}, and write $p_2 = N q_2$ to highlight that $p_2$ depends on the output size $N$. (Note that $p_1$ and $q_2$ do not depend on $N$ or $t$.) The expected runtime of quantum $t$-sieving ISD is then $\Tilde{O}\left(T_{\mathcal{A}}/\sqrt{p_1 q_2 N}\right)$, where $T_{\mathcal{A}}$ is the expected runtime of $\mathcal{A}$. 
\

However, even if the use of $t$-tuples for $t > 2$ might potentially reduce the lower bound on the output size $N$, we obtain the following lower bound on the runtime of quantum $t$-sieving ISD. 
We remark that the assumption holds, for instance, if the runtime of any $t$-sieving algorithm is lower bounded by the optimal runtime of $2$-sieving (since $1/p^{(2)}$ is a lower bound on the output size of a 2-sieving algorithm). 

\begin{proposition}[A lower bound on quantum $t$-sieving ISD] 
Assume that the runtime of any $t$-sieving algorithm is at least $1/p^{(2)}$. Consider a quantum ISD algorithm with the aforementioned $t$-sieving algorithm as oracle $\mathcal{A}$. The expected runtime of this algorithm is $\Omega(1/\sqrt{p_1 q_2 p^{(2)}})$.
\end{proposition}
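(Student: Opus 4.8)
The plan is to lower bound the expected runtime of quantum $t$-sieving ISD by combining the runtime expression derived just above with two trivial lower bounds on the cost $T_{\mathcal{A}}$ of the oracle $\mathcal{A}$, and then to eliminate the dependence on the output size $N$ using the elementary inequality $\max(a,b)\geq\sqrt{ab}$ (valid for all $a,b\geq 0$).

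First I would start from the expected runtime $\Theta\!\big(T_{\mathcal{A}}/\sqrt{p_1 p_2}\big)$ of the quantum ISD algorithm (amplitude amplification performs $\Theta(1/\sqrt{p_1 p_2})$ iterations, each invoking $\mathcal{A}$ once), and rewrite it as $\Theta\!\big(T_{\mathcal{A}}/\sqrt{p_1 q_2 N}\big)$ using $p_2 = N q_2$; in particular, the expected runtime is $\Omega\!\big(T_{\mathcal{A}}/\sqrt{p_1 q_2 N}\big)$. Next I would record the two lower bounds on $T_{\mathcal{A}}$: on the one hand $T_{\mathcal{A}}\geq N$, since $\mathcal{A}$ must output $N$ vectors; on the other hand $T_{\mathcal{A}}\geq 1/p^{(2)}$, by the standing assumption that any $t$-sieving algorithm runs in time at least $1/p^{(2)}$. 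Hence $T_{\mathcal{A}}\geq \max\!\big(N,\,1/p^{(2)}\big)\geq\sqrt{N/p^{(2)}}$.

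Substituting this into the runtime bound gives
\[
\frac{T_{\mathcal{A}}}{\sqrt{p_1 q_2 N}} \;\geq\; \frac{\sqrt{N/p^{(2)}}}{\sqrt{p_1 q_2 N}} \;=\; \frac{1}{\sqrt{p_1 q_2 p^{(2)}}},
\]
which yields the claimed bound $\Omega\!\big(1/\sqrt{p_1 q_2 p^{(2)}}\big)$. There is no real obstacle here; the only point requiring a little care is the bookkeeping of the $N$-dependence: since $p_2$ is \emph{linear} in $N$ it cancels one power of $N$, but the square root in the amplitude-amplification cost leaves a residual $\sqrt{N}$ in the denominator, and applying the geometric-mean bound to the two estimates of $T_{\mathcal{A}}$ is precisely what absorbs that residual regardless of whether $N$ lies above or below $1/p^{(2)}$. (Any polylogarithmic factors hidden in the runtime are at least $1$, so dropping them only strengthens the lower bound.)
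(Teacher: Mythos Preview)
Your proof is correct and follows essentially the same approach as the paper: both combine the two lower bounds $T_{\mathcal{A}}\geq N$ and $T_{\mathcal{A}}\geq 1/p^{(2)}$ to eliminate the dependence on $N$ from the expression $T_{\mathcal{A}}/\sqrt{p_1 q_2 N}$. The only cosmetic difference is that the paper splits into the two cases $N\leq 1/p^{(2)}$ and $N\geq 1/p^{(2)}$ and applies the appropriate bound in each, whereas you fuse the cases via $\max(a,b)\geq\sqrt{ab}$; the underlying argument is the same.
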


By \Cref{sec: limitations of quantum SievingISD}, the latter is (asymptotically) never better than the runtime of quantum Prange. Hence, we can conclude that, under the stated assumption, quantum $t$-sieving ISD does no better than quantum Prange for all $t \geq 2$.  

\begin{proof}
We analyze the cases $N \leq 1/p^{(2)}$ and $N \geq 1/p^{(2)}$ separately, where   $N$ (as usual) denotes the output size of~$\mathcal{A}$. 
First, suppose that $N \leq 1/p^{(2)}$. 
By the assumption, the runtime $T_{\mathcal{A}}$ of any $t$-sieving algorithm is at least $1/p^{(2)}$, so the  runtime of the resulting quantum ISD algorithm is lower bounded by $T_{\mathcal{A}}/\sqrt{p_1 q_2 N} \geq 1/ \left(p^{(2)} \sqrt{p_1 q_2 N}\right) \geq 1/\sqrt{p_1 q_2 p^{(2)}}$.
On the other hand, if $N \geq 1/p^{(2)}$, then $T_{\mathcal{A}}/\sqrt{p_1 q_2 N} \geq \sqrt{N}/\sqrt{p_1 q_2} \geq  1/\sqrt{p_1 q_2 p^{(2)}}$ since $T_{\mathcal{A}} \geq N$.  
(Note that we omit writing $\Tilde{O}(\cdot)$ throughout the proof.)
\qed 
\end{proof}

\subsubsection{Approach 2: Varying List Sizes.} 

The lower bound on the output size $N$ of the sieving subroutine comes from maintaining the same list size $N$ throughout each iteration in the sieving algorithm. A natural question to ask is whether it would be possible to vary the list size in order to overcome this intrinsic limitation of quantum SievingISD.

We propose the following approach for varying list sizes. Suppose that for given parameters $n', w'$ there exists a (classical/quantum) algorithm for the subroutine $\mathcal{A}$ in the quantum ISD algorithm that iteratively applies a sieving step of the following form, where the values $N_i$ are arbitrary. It~starts by sampling a list $\mathcal{L}_0$ of $N_0$ independent and uniformly random vectors from $\Ss_{w'}^{n'}$. 
For iteration~$i = 1$ up to $i = n'-k$, the algorithm finds $N_i$ pairs $(\xv_1, \xv_2) \in \mathcal{L}_{i-1}^2$ satisfying $|\xv_1 + \xv_2| = w'$, yielding a new list $\mathcal{L}_{i}$ of size $N_i$. 
Note that the expected number of solution pairs is $N_{i-1}^2 p$, where $p := {w' \choose w'/2} {n' - w' \choose w'/2}/{n' \choose w'}$ is the probability that a uniformly random pair is a solution pair (previously denoted by $p^{(2)}$). 
Therefore, we will only consider $N_i$ satisfying $N_{i} \leq N_{i-1}^2 p$. 

\begin{remark}
    If $N_i = N_0$ for all $i$, then we get the same lower bound on the list size as we considered so far, namely $1/p$, resulting in the original setting. We now instead consider the situation where the $N_i$'s might differ. 
\end{remark}

We now sketch the proof that there is no choice of $N_0, \ldots, N_{n'-k}$ satisfying $N_{i} \leq N_{i-1}^2 p$ for which this algorithm yields runtime better than that of quantum Prange (Lemma~\ref{lem: quantum Prange}). 

We start by observing that the overall runtime of this quantum ISD algorithm is essentially $T := T_{\max}/\sqrt{p_1 q_2 N_{n'-k}}$, where $T_{\max}$ is the maximum among the runtimes of the iterations, and $p_1$ and $p_2 = q_2 N_{n'-k}$ are the probabilities from \Cref{thm: complexity of quantum ISD assuming existence oracle}. (Note that the output size $N_{n'-k}$ was previously denoted by $N$.) Since $T_{\max} \geq N_{\max} := \max_{i \geq 0} N_{i}$, we have $T \geq N_{\max}/\sqrt{p_1 q_2 N_{n'-k}}$. 
Recall from \Cref{sec: limitations of quantum SievingISD} that, for all allowed $(n',w')$, $1/\sqrt{p_1 q_2 p}$ is asymptotically lower bounded by the runtime of quantum Prange. Therefore, it suffices to show that there is no choice of $N_0, \ldots, N_{n'-k}$ such that $N_{\max}/\sqrt{N_{n'-k}} < 1/\sqrt{p}$. 

Suppose for contradiction there is a choice of $N_0, \ldots, N_{n'-k}$ satisfying $N_{i} \leq N_{i-1}^2 p$ for all $i \geq 1$ and $N_{\max}/\sqrt{N_{n'-k}} < 1/\sqrt{p}$.
We analyze the two cases $N_{n'-k} \geq \frac{1}{p}$ and $N_{n'-k} < \frac{1}{p}$ separately. 
We start with the case $N_{n'-k} \geq 1/p$. Then 
$N_{\max}/\sqrt{N_{n'-k}} \geq \sqrt{N_{n'-k}} \geq 1/\sqrt{p}$.

It remains to consider the case $N_{n'-k} < 1/p$. 
If $N_0 \geq 1/p$, then $N_0 > N_{n'-k}$, so $N_{\max}/\sqrt{N_{n'-k}} \geq N_0/\sqrt{N_{n'-k}} > \sqrt{N_0} \geq 1/\sqrt{p}$. 
Finally, consider the case that $N_0 < 1/p$. 
Note that $N_{i} \leq N_{i-1}^2 p$ for all $i\geq 1$ implies that
$N_{i} \leq N_0^{2^i} p^{2^i - 1}$  for all $i \geq 0$. 
In particular, the size of the output list satisfies $N_{n'-k} \leq N_0^{2^{n'- k}} p^{2^{n'- k} - 1}$. 
Therefore, we obtain that \begin{align*}
    \frac{N_{\max}}{\sqrt{N_{n'-k}}} \geq \frac{N_0}{\sqrt{N_{n'-k}}} \geq 
    \sqrt{\frac{N_0^2}{N_0^{2^{n'- k}} p^{2^{n'- k} - 1}}} = \frac{1}{\sqrt{p}} \frac{1}{\sqrt{N_0^{2^{n'- k} - 2} p^{2^{n'- k} - 2}}} > \frac{1}{\sqrt{p}}
\end{align*}
since $N_0 p < 1$. That is, in all possible cases we showed that $N_{\max} /\sqrt{N_{n'-k}} \geq 1/\sqrt{p}$, so we reached a contradiction. We conclude that there is no suitable choice for the $N_i$ that enables to outperform quantum Prange.

\bibliographystyle{alpha}
\bibliography{references_all}

\end{document}